\let\emptyset\varnothing
\numberwithin{equation}{section}
\newtheorem{thm}{Theorem}[section]
\newtheorem{prop}[thm]{Proposition}
\newtheorem{lem}[thm]{Lemma}
\newtheorem{cor}[thm]{Corollary}
\newtheorem{conj}[thm]{Conjecture}
\theoremstyle{definition}
\theoremstyle{remark}
\newtheorem{rem}[thm]{Remark}
\newcommand{\beq}{\begin{equation}}
\newcommand{\eeq}{\end{equation}}
\newcommand{\be}{\begin{equation*}}
\newcommand{\ee}{\end{equation*}}
\newcommand{\bs}{\boldsymbol}
\newcommand{\C}{\mathbb{C}}
\newcommand{\Z}{\mathbb{Z}}
\newcommand{\mc}{\mathcal}
\newcommand{\cD}{\mathcal{D}}
\newcommand{\cF}{\mathcal{F}}
\newcommand{\cK}{\mathcal{K}}
\newcommand{\cR}{\mathcal{R}}
\newcommand{\g}{\mathfrak{g}}
\newcommand{\gl}{\mathfrak{gl}}
\newcommand{\h}{\mathfrak{h}}
\newcommand{\n}{\mathfrak{n}}
\newcommand{\osp}{\mathfrak{osp}}
\newcommand{\fksp}{\mathfrak{sp}}
\newcommand{\so}{\mathfrak{so}}
\newcommand{\fkb}{\mathfrak{b}}
\newcommand{\fkS}{\mathfrak{S}}
\newcommand{\Wr}{\mathrm{Wr}}
\newcommand{\ord}{\mathrm{ord}\,}
\newcommand{\id}{{\mathrm{id}}}   
\newcommand{\bsing}{{\tl{\bm s}\mathrm{ing}}}
\newcommand{\sF}{\mathscr{F}}
\newcommand{\pa}{\partial}
\newcommand{\tl}{\tilde}
\newcommand{\tlbms}{{\tilde{\bm s}}}
\newcommand{\gge}{\geqslant}
\newcommand{\lle}{\leqslant}
\newcommand{\la}{\lambda}
\newcommand{\La}{\Lambda}
\newcommand{\bla}{\bm\lambda}
\newcommand{\glMN}{\mathfrak{gl}_{m|n}}
\newcommand{\bmx}{\begin{pmatrix}}    
\newcommand{\emx}{\end{pmatrix}}   
\newcommand{\wt}{\widetilde}    
\newcommand{\ve}{\varepsilon}
\begin{document}
\pagestyle{myheadings}
\setcounter{page}{1}

\title[Bethe ansatz equations for orthosymplectic Lie superalgebras]{Bethe ansatz equations for orthosymplectic Lie superalgebras\\ and self-dual superspaces}

\author{Kang Lu and Evgeny Mukhin}
\address{K.L.: Department of Mathematics, University of Denver, 
\newline
\strut\kern\parindent 2390 S. York St., Denver, CO 80208, USA}\email{kang.lu@du.edu}
\address{E.M.: Department of Mathematical Sciences,
Indiana University-Purdue University\newline
\strut\kern\parindent Indianapolis, 402 N.Blackford St., LD 270,
Indianapolis, IN 46202, USA}\email{emukhin@iupui.edu}

\begin{abstract} 
	We study solutions of the Bethe ansatz equations associated to the orthosymplectic Lie superalgebras $\osp_{2m+1|2n}$ and $\osp_{2m|2n}$. Given a solution, we define a reproduction procedure and use it to construct a family of new solutions which we call a population. To each population we associate a symmetric rational pseudo-differential operator $\mc R$. Under some technical assumptions, we show that the superkernel $W$ of $\mc R$ is a self-dual superspace of rational functions, and the population is in a canonical bijection with the variety of isotropic full superflags in $W$ and with the set of symmetric complete factorizations of $\mc R$. In particular, our results apply to the case of even Lie algebras of type D${}_m$ corresponding to $\osp_{2m|0}=\mathfrak{so}_{2m}$.
	
		\medskip 
		
		\noindent
		{\bf Keywords:} orthosymplectic Lie superalgebras, population of solutions of Bethe ansatz equation, isotropic flags, rational pseudo-differential operators.
\end{abstract}

\maketitle


\thispagestyle{empty}
\section{Introduction}	
The Gaudin models associated to simple Lie algebras have been a subject of intensive research for half a century which produced a number of spectacular results. Then the importance of the supersymmetric generalization has been understood and Gaudin models related to simple Lie superalgebras came to attention of many mathematicians and physicists. In this paper, we study the Bethe ansatz equations (BAE) of the Gaudin models associated with orthosymplectic Lie superalgebras  $\osp_{2m+1|2n}$ and $\osp_{2m|2n}$.

To our knowledge, most of the previous work in this direction has been done for the case of $\osp_{1|2}$, see \cite{KM01,Zei15}, with a few papers devoted to the general case for the XXX spin chains, \cite{AACDFR,Tsu99}. 

Our guidance and motivation come from \cite{MV04} where the BAE are studied for types B and C and \cite{HMVY19} which treats the BAE for the Lie superalgebra $\glMN$. 

The BAE is a system of algebraic equations which depends on the Cartan matrix, a choice of highest weights, and evaluation parameters. It is widely expected that the solutions of the BAE produce eigenvectors of the Gaudin Hamiltonians acting in the tensor product of the corresponding highest weight modules. The Gaudin Hamiltonians are produced from the action of a remarkable commutative subalgebra of the universal enveloping algebra of the current Lie algebra, called the Bethe subalgebra, see for example \cite{F, KM01, MM, MTV09, MVY15}. We do not discuss the Gaudin Hamiltonians or their eigenvectors in this paper and concentrate on the study of the solution set of the BAE. Our main tool is a reproduction procedure, which given a solution of the BAE and a choice of a simple root, produces a family of new solutions of the BAE. 

An important difference of the super case is that we have a number of different Cartan matrices corresponding to the choices of the Borel subalgebras. In the orthosymplectic case, the choices of the Cartan matrix $C=(c_{ij})$ are parameterized by parity sequences $\bs s=(s_1,\dots,s_r)$, $s_i\in\{\pm 1\}$, where the number of positive ones is $m$ and $r$ is the rank of the Lie superalgebra. In the case of $\osp_{2m|2n}$ and $s_r=-1$ for each Cartan matrix we have two Borel subalgebras which are recorded by $\kappa\in\{\pm1\}$ while in the case of $s_r=1$ or of $\osp_{2m+1|2n}$ such a Borel subalgebra is unique, recorded by $\kappa=1$. We call $\tlbms=(\bm s;\kappa)$ an extended parity sequence.

A solution of the BAE is given by zeroes of a sequence of polynomials in one variable $\bs y=(y_1,\dots,y_r)$. Here the degree of $y_i$ is the number of unknowns in the BAE of color $i$. The reproduction procedure produces a family of new sequences $\bs y^{[i]}:=(y_1,\dots,\tilde y_i,\dots,y_r)$.

There are two different cases of the reproduction procedure: the case of roots of non-zero length  and the case of roots of length zero. In the former case, the new polynomial $\tilde y_i$ is described by the first order differential equation,
$$
y_i'\tilde y_i-y_i\tilde y_i'=T_i\prod_{j,\, j\neq i} y_{j}^{-c_{ij}},
$$
and we call such reproduction procedure bosonic.
In the latter case $\tilde y_i$ is obtained simply by division
$$
y_i\tilde y_i= \ln'\Big(T_i\prod_{j}y_{j}^{-c_{ij}}\Big) \pi_i \prod_{j,\, c_{ij}\ne 0} y_j,
$$
and in this case we call such reproduction procedure fermionic. Here the polynomials $T_i$ ($T_i$ do depend on the extended parity sequence $\tlbms$) correspond to the highest weights and evaluation parameters of the modules, and $\pi_i$ is the denominator of the rational function $T_i'/T_i$ with minimal degree.

Quite generally, the bosonic reproduction procedure produces a one-parameter family of solutions of the BAE with the same Cartan matrix and polynomials $T_i$ (but with a different number of unknowns). At the same time the fermionic reproduction procedure produces a single sequence $\bs y^{[i]}$ which is (provided $y_{i-1},y_{i+1}$ have no multiple roots, no common roots, and no roots at evaluation parameters)
a solution of the BAE with the new Cartan matrix according to $\bs s^{[i]}:=(s_1,\dots,s_{i+1},s_i,\dots,s_r)$ for $i<r$  (and $\bm s^{[r]}=\bm s^{[r-1]}$) and new polynomials $T_i$ determined by some rule, see Table \ref{table:1}. The set of all sequences obtained by successive applications of the reproduction procedure (paired up with the corresponding extended parity sequences $\tlbms$) is called a population which is the main object of our study.

\medskip

The main idea is to relate a population associated with an orthosymplectic Lie superalgebra to a population associated with a general linear Lie superalgebra. We observe that if $\bs y$ is a solution of the $\osp_{2m+1|2n}$ BAE then the sequence of polynomials $(y_1,\dots,y_{r-1},y_r,y_{r-1},\dots, y_1)$ is a solution of the $\gl_{2m|2n}$ BAE. We also note that if $\bs y$ is a solution of the $\osp_{2m|2n}$ BAE with a Cartan matrix of type D, then the sequence of polynomials $(y_1,\dots,y_{r-2},y_{r-1}y_r,y^2_r,y_r^2,y_{r-1}y_r,y_{r-2},\dots, y_1)$ is a (generalized) solution of the $\gl_{2m|2n+1}$ BAE. We remark that $\osp_{2m+1|2n}$ and $\osp_{2m|2n}$ are not subalgebras of $\gl_{2m|2n}$ and $\gl_{2m|2n+1}$, respectively. However, the twining characters of the latter Lie algebras give the characters of the former, see \cite{FSS96} and \cite[Sections 8 \& 9]{KK00}, cf. \cite[Appendix]{LMV17}.

Motivated by these observations, we introduce a rational pseudo-differential operator $\mc R$ which does not change under the reproduction procedure. For the case of $\osp_{2m+1|2n}$, it has the form
$$
\mc R=\mathop{\overrightarrow\prod}\limits_{1\lle i\lle r}\Big(\pa-s_i\ln'\frac{P_iy_{i-1}}{y_i}\Big)^{s_i}\mathop{\overleftarrow\prod}\limits_{1\lle i\lle r}\Big(\pa+s_i\ln'\frac{P_iy_{i-1}}{y_i}\Big)^{s_i},
$$
where $y_0=1$ and $P_i$ are rational functions related to $T_i$, see \eqref{eq:T-polynomials}. The formula for $\mc R$ in the case of  $\osp_{2m|2n}$ is given in \eqref{eq:D-oper}. 

We write $\mc R$ as a ratio of differential operators $\mc R=\mc D_{\bar 0} \mc D_{\bar 1}^{-1}$. We expect that the coefficients of $\mc D_{\bar 0}$ and $\mc D_{\bar 1}$ are eigenvalues of a special set of generators of the Bethe subalgebra acting on the eigenvector corresponding to the solution of the BAE, see  \cite{LM20, MTV06, MM, LM19}.

We consider the superkernel $W=\ker(\mc D_{\bar 0})\oplus\ker(\mc D_{\bar 1})$. It turns out that $W$ has some remarkable properties and, with some technical assumptions, $W$ consists of rational functions. Let us describe some of these properties. First, the Wronski determinant is one, $\Wr(W)=1$. Second, there exists an isomorphism of vector spaces $\phi: \Lambda^{\dim W-1}(W)\to W$ between the exterior power of $W$ and $W$ given by $\phi(H)=\Wr(H)$ for all $H\in\Lambda^{\dim W-1}(W)$. Third, we have the non-degenerate bilinear form $(\ ,\ ): W\otimes W\to \C$ given by $(w_1,w_2)=\Wr(\phi^{-1}(w_1),w_2)$. This bilinear form is symmetric if $\dim W$ is odd and skew symmetric if $\dim W$ is even. Fourth, we have $\ker(\mc D_{\bar 0}) \perp \ker(\mc D_{\bar 1})$ with respect to this bilinear form. Fifth, for any subspace $\mathscr V\subset W$, we have $\Wr(\mathscr V)=\Wr( \mathscr V^\perp)$. It means that the space $W$ is essentially, the self-dual space, studied in \cite{Lu18,LMV17,MV04}. Taking into account the fourth property, we call $W$ the self-dual superspace.

Then we show that the set of all full isotropic superflags in $W$ is in a natural bijection with the population and with the set of all symmetric complete factorizations of $\mc R$. We note that for the Cartan matrices of type D, present in the case of $\osp_{2m|2n}$, a point of the population corresponds to two isotropic superflags related by what we call the ``fake" reproduction procedure.

In particular, the special cases $\osp_{2m+1|0}=\mathfrak{so}_{2m+1}$, $\osp_{0|2n}=\mathfrak{sp}_{2n}$,  recover the known even cases of types B$_m$ and C$_n$ of \cite{MV04} and $\osp_{2m|0}=\mathfrak{so}_{2m}$ treats the even case of type D$_m$, see Section \ref{sec:even}.

It is important to mention that the bijection between the population and the variety of isotropic superflags is proved under the technical assumption of superfertility. Namely, one assumes that the reproduction procedure can be repeated indefinitely even if some of the produced sequences of polynomials are not solutions of the Bethe ansatz equations. We expect that superfertility is a generic feature but an additional study of obstructions to the reproduction procedure is needed. We discuss this issue in Section \ref{sec:superfertility}.

The results of this paper are a step towards connecting the Bethe ansatz to algebraic geometry. Such a connection which is finalized in \cite{MTV09} for Gaudin models associated to $\gl_N$, is a long term motivation. One expected outcome is the understanding of perfectly integrable nature of Gaudin models, see \cite{Lu20}.

In the paper we work with periodic boundary conditions. One can generalize our results to quasi-periodic conditions in a straightforward way, cf. \cite{MV08}. We expect that the main statements of this paper can be also obtained for the XXX type BAE related to orthosymplectic Lie superalgebras, using \cite{MV03,HLM}.

\bigskip

The paper is organized as follows. In Section \ref{osp sec} we discuss the details of the root systems and general facts for orthosymplectic Lie superalgebras. Section \ref{operators sec} is devoted to the study of the symmetric rational pseudo-differential operators and self-dual superspaces. The main statements are Lemma \ref{lem:super-self-dual} and Proposition \ref{prop:bij-f-iso}.  We introduce the BAE and related terminology in Section \ref{sec: bethe}. In Section \ref{sec: reproduction} we study the reproduction procedure. We describe a general fermionic reproduction in Table \ref{table:1} and then apply it to the cases of $\osp_{2m+1|2n}$ in Section \ref{sec b} and of $\osp_{2m|2n}$ in Section \ref{sec D}. In particular, we introduce the symmetric rational pseudo-differential operators associated to a population in Sections \ref{sec B oper} and \ref{sec D2}. Section \ref{sec: main} contains our main result, Theorem \ref{main thm}.

\bigskip

{\bf Acknowledgments.}
This work was partially supported by the Simons Foundation grants \#353831 and \#709444.

\section{Preliminaries on orthosymplectic Lie superalgebras}\label{osp sec}
Assume $m,n\in\Z_{\gge 0}$. Let $\iota$ be either $0$ or $1$. In this section, we recall the basics for general Lie superalgebras $\gl_{m|n}$ and orthosymplectic Lie superalgebras $\osp_{2m+\iota|2n}$. For details, see e.g. \cite{CW12}. 

Set $r=m+n$ and $I=\{1,\dots,r\}$. For $k\in\Z$, we set $k^\circ=k+2m+\iota$. 

\subsection{General linear Lie superalgebras}
A \emph{vector superspace} $W = W_{\bar 0}\oplus W_{\bar 1}$ is a $\Z_2$-graded vector space. We call elements of $W_{\bar 0}$ \emph{even} and elements of
$W_{\bar 1}$ \emph{odd}. We write $|w|\in\{\bar 0,\bar 1\}$ for the parity of a homogeneous element $w\in W$. Set $(-1)^{\bar 0}=1$ and $(-1)^{\bar 1}=-1$.

Let $\C^{m|n}$ be a complex vector superspace, with $\dim(\C^{m|n})_{\bar 0}=m$ and $\dim(\C^{m|n})_{\bar 1}=n$.  Choose a homogeneous basis $e_i$, $i\in I$, of $\C^{m|n}$ such that $|e_i|=\bar 0$ for $1\lle i\lle m$ and $|e_i|=\bar 1$ for $m+1\lle i\lle r$. We call it the \emph{standard basis} of $\C^{m|n}$. Set $|i|=|e_i|$.

Let $\bm s=(s_1,\dots,s_{r})$ where $s_i\in \{\pm 1\}$ and $s_i=1$ exactly $ m$ times. We call such a sequence $\bm s$ a \emph{parity sequence}. Denote the set of all parity sequences by $S_{ m| n}$.  Let $\mathfrak S_k$ be the symmetric group permuting elements in $\{1,\dots,k\}$. For each $\bm s\in S_{ m| n}$, define $\sigma_{\bm s}\in \mathfrak S_{ r}$ by
\[
\sigma_{\bm s}(i)=\begin{cases}
\#\{j~|~j\lle i,~ s_j=1\}, \qquad &\text{ if }s_i=1,\\
 m+\#\{j~|~j\lle i,~ s_j=-1\}, \qquad &\text{ if }s_i=-1.
\end{cases}
\]
Set $\bm s_+=(1,\dots,1,-1,\dots,-1)$ and $\bm s_-=(-1,\dots,-1,1,\dots,1)$.
We have $\sigma_{\bm s_+}=\id$. Also $\sigma_{\bm s_-}(i)=i+ m$ when $i\lle  n$, and $\sigma_{\bm s_-}(i)=i-n$ when $i>n$.

For $\bm s\in S_{ m|n}$ and $i\in I$, define numbers
\[
\bm s_i^+=\#\{j~|~j>i,~s_j=1\},\qquad \bm s_i^-=\#\{j~|~j<i,~s_j=-1\}.
\]
The relations between $\sigma_{\bm s}$ and $\bm s_i^+,\bm s_i^-$ are given by
\[
\bm s_i^+=\begin{cases}
 m-\sigma_{\bm s}(i),\quad &\text{ if }s_i=1,\\
\sigma_{\bm s}(i)-i,\quad &\text{ if }s_i=-1,\\
\end{cases}\quad
\bm s_i^-=\begin{cases}
i-\sigma_{\bm s}(i),\quad &\text{ if }s_i=1,\\
\sigma_{\bm s}(i)- m-1,\quad &\text{ if }s_i=-1.\\
\end{cases}
\]

The Lie superalgebra $\glMN$ is generated by elements $e_{ij}$, $i,j\in I$, with the supercommutator relations
\[
[e_{ij},e_{kl}]=\delta_{jk}e_{il}-(-1)^{(|i|+|j|)(|k|+|l|)}\delta_{il}e_{kj},
\]
and the parity of $e_{ij}$ is given by $|i|+|j|$. 

The \emph{Cartan subalgebra $\h$} of $\glMN$ is spanned by $e_{ii}$, $i\in I$. Let $\ve_i$, $i\in I$, be a basis of $\h^*$ (the dual space of $\h$) such that $\ve_i(e_{jj})=\delta_{ij}$. There is a bilinear form $(\ ,\ )$ on $\h^*$ given by $(\ve_i,\ve_j)=(-1)^{|i|}\delta_{ij}$. The \emph{root system $\bf{\Phi}$} is a subset of $\h^*$ given by
\[
{\bf \Phi}:=\{\ve_i-\ve_j~|~i,j\in I \text{ and }i\ne j\}.
\]
We call a root $\ve_i-\ve_j$ \emph{even} (resp. \emph{odd}) if $|i|=|j|$ (resp. $|i|\ne |j|$).

A $\glMN$-weight $\la$ is called \textit{dominant integral} if $(\la,\ve_i)\in\Z$ for all $1\lle i\lle r$ and $(-1)^{|j|}(\la,\ve_j-\ve_{j+1})\gge 0$ for all $1\lle j\lle r-1$ except possibly $j=m$. A $\glMN$-weight $\la$ is called \textit{typical} if 
\[
(\la,\ve_i-\ve_j)-i-j+1+2m\ne 0,
\]
for all $1\lle i\lle m < j\lle r$.

Given a parity sequence $\bm s\in S_{m|n}$, we define the set of $\bm s$-\emph{positive roots} $\Phi_s^+=\{\ve_{\sigma_{\bm s}(i)}-\ve_{\sigma_{\bm s}(j)}~|~i,j\in I \text{ and }i<j\}$. Define the $\bm s$-\emph{simple positive roots} by $\alpha_i^{\bm s}=\ve_{\sigma_{\bm s}(i)}-\ve_{\sigma_{\bm s}(i+1)}$, $1\lle i\lle r-1$.

Note that $(-1)^{|\sigma_{\bm s}(i)|}=s_i$.

The \emph{symmetrized Cartan matrix} $B^{\bm s}:=(b_{ij}^{\bm s})_{1\lle i,j\lle r-1}=\big((\alpha_i^{\bm s},\alpha_j^{\bm s})\big)_{1\lle i,j\lle r-1}$ for $\glMN$ associated with the parity sequence $\bm s$ is an $(r-1)\times (r-1)$ matrix described by the $2\times 2$ submatrices
\beq\label{eq:cartan-block}
\begin{pmatrix}
(\alpha_i^{\bm s},\alpha_{i}^{\bm s}) & (\alpha_i^{\bm s},\alpha_{i+1}^{\bm s})\\
(\alpha_{i+1}^{\bm s},\alpha_i^{\bm s}) & (\alpha_{i+1}^{\bm s},\alpha_{i+1}^{\bm s})	
\end{pmatrix}
=
\begin{pmatrix}
s_{i}+s_{i+1} & -s_{i+1}\\
-s_{i+1} & s_{i+1}+s_{i+2}	
\end{pmatrix},
\eeq
where $1\lle i\lle r-2$. 

The \emph{Cartan matrix} $C^{\bm s}:=(c_{ij}^{\bm s})_{1\lle i,j\lle r-1}$ for $\glMN$ associated with the parity sequence $\bm s$ is given by $c_{ij}^{\bm s}=s_{i}b_{ij}^{\bm s}$. In particular, we have
$c_{ii}=2$ if $s_i=s_{i+1}$, and $c_{i+1,i}=-1$ for all $i$.

\subsection{Orthosymplectic Lie superalgebras}
Let $V=V_{\bar 0}\oplus V_{\bar 1}$ be a complex vector superspace. We call a bilinear form $(\cdot,\cdot): V \otimes V\to V$ \emph{even} if $(V_i,V_j)=0$ unless $i+j=\bar 0$. We call an even bilinear form $(\cdot,\cdot)$ \emph{supersymmetric} if $(\cdot,\cdot)|_{V_{\bar 0}\otimes V_{\bar 0}}$ is symmetric and $(\cdot,\cdot)|_{V_{\bar 1}\otimes V_{\bar 1}}$ is skew-symmetric.

Let $(\cdot,\cdot)$ be an even nondegenerate supersymmetric bilinear form on the vector superspace $\C^{2m+\iota|2n}$. Set
\[
\osp_{2m+\iota|2n}:=\langle g\in \gl_{2m+\iota|2n}~|~(gx,y)+(-1)^{|g|\cdot |x|}(x,gy)=0,\text{ for all }x,y\in \C^{2m+\iota|2n}\rangle.
\]
In other words, the Lie superalgebra $\osp_{2m+\iota|2n}$ is the subalgebra of the Lie superalgebra $\gl_{2m+\iota|2n}$ spanned by homogeneous linear operators preserving an even nondegenerate supersymmetric bilinear form. The Lie superalgebra $\osp_{2m+\iota|2n}$ is called an \emph{orthosymplectic Lie superalgebra}. 

One has 
$$
\big(\osp_{2m+\iota|2n}\big)_{\bar 0}\cong \so_{2m+\iota}\oplus \fksp_{2n},\qquad \big(\osp_{2m+\iota|2n}\big)_{\bar 1}\cong \C^{2m+\iota}\otimes \C^{2n},
$$ 
where $\C^{2m+\iota}$ and $\C^{2n}$ 
are vector representations of $\so_{2m+\iota}$ and $\fksp_{2n}$, respectively.

The rank of $\osp_{2m+\iota|2n}$ is $r$ and \[
\dim\osp_{2m+\iota|2n}=\begin{cases} 2r^2+r+2n,\quad &\text{ if }\iota=1;\\
2r^2-r+2n,\quad &\text{ if }\iota=0.
\end{cases}
\]
Denote by $\mathrm{U}(\osp_{2m+\iota|2n})$ the universal enveloping algebra of $\osp_{2m+\iota|2n}$.

Choose the even nondegenerate supersymmetric bilinear form  $(\cdot,\cdot)$ on $\C^{2m+\iota|2n}$ with the Gram matrix 
\beq\label{eq:gram-matrix}
\mathscr G_{2m+1|2n}=\begin{pmatrix}
0 & \mathrm{I}_{m} & 0 & 0 &0\\
\mathrm{I}_{m} & 0 & 0 & 0 &0\\
0 & 0 & 1 & 0 &0\\
0 & 0 & 0 & 0 &\mathrm{I}_n\\
0 & 0 & 0 & -\mathrm{I}_n &0
\end{pmatrix},\qquad
\mathscr G_{2m|2n}=\begin{pmatrix}
0 & \mathrm{I}_{m}  & 0 &0\\
\mathrm{I}_{m} & 0  & 0 &0\\
0 & 0 & 0 &\mathrm{I}_n\\
0 & 0 & -\mathrm{I}_n &0
\end{pmatrix}
\eeq
relative to the standard basis of $\C^{2m+\iota|2n}$. 

For $1\lle i\lle m$ and $1\lle k\lle n$, define
\beq\label{eq:cartan-h}
h_i=E_{ii}-E_{i+m,i+m},\quad  h_{m+k}=E_{k^\circ k^\circ}-E_{k^\circ+n,k^\circ+n}.
\eeq
The \emph{Cartan subalgebra} $\h$ of $\osp_{2m+\iota|2n}$ is the subalgebra spanned by $h_i$ for $i\in I$. 

Consider the dual space $\h^*$ of $\h$. Let $\ve_i$ for $i\in I$ be a basis of $\h^*$ dual to the basis $\{h_i\}_{i\in I}$ of $\h$.
In other words, $\ve_i(h_j)=\delta_{ij}$. 
 We use the convenient notation $\delta_i=\ve_{m+i}$ for $1\lle i\lle n$.

Define a bilinear form $(\cdot, \cdot)$ on $\h^*$ by $(\ve_i,\ve_j)=(-1)^{|i|}\delta_{ij}$.

\subsection{Root systems}In this section, we describe the Cartan matrix for $\osp_{2m+\iota|2n}$ and the root system associated with an arbitrary parity sequence, see e.g. \cite{FSS89}. Throughout the paper, we use the convenient notation $\ve_i^{\bm s}:=\ve_{\sigma_{\bm s}(i)}$.
\subsubsection{The case of $\osp_{2m+1|2n}$} 
Given a parity sequence $\bm s\in S_{m|n}$, we define the set of $\bm s$-\emph{positive roots} 
$$
\Phi_s^+=\{\ve_i^{\bm s}\pm \ve_j^{\bm s}~|~i,j\in I \text{ and }i<j\}\cup \{\ve_i~|~i\in I\}\cup \{2\ve_{i}~|~m+1\lle i\lle r\}.
$$
We have a Dynkin diagram of type B. Define the $\bm s$-\emph{simple positive roots} by
$$
\alpha_i^{\bm s}=\ve_i^{\bm s}-\ve_{i+1}^{\bm s},\qquad i\in I,
$$
where $\ve_{r+1}^{\bm s}=0$. Denote by $\Delta_{\bm s}^+$ the set of $\bm s$-simple positive roots $\{\alpha_i^{\bm s}\}_{i\in I}$.

The \emph{symmetrized Cartan matrix} $B^{\bm s}:=(b_{ij}^{\bm s})_{i,j\in I}=\big((\alpha_i^{\bm s},\alpha_j^{\bm s})\big)_{i,j\in I}$ for $\osp_{2m+1|2n}$ associated with the parity sequence $\bm s$ is an $r\times r$ matrix described by the $2\times 2$ submatrices \eqref{eq:cartan-block} for $1\lle i\lle r-1$ with the convention that $s_{r+1}=0$. The \emph{Cartan matrix} $C^{\bm s}:=(c_{ij}^{\bm s})_{i,j\in I}$ for $\osp_{2m+1|2n}$ associated with the parity sequence $\bm s$ is given by $$c_{ij}^{\bm s}=s_{i}(1+\delta_{ir})b_{ij}^{\bm s}.$$

The Dynkin diagram for $\osp_{2m|2n+1}$ associated with the parity sequence $\bm s$ is obtained from the Dynkin diagram for $\gl_{m|n}$ associated with the parity sequence $\bm s$ by attaching one more node as follows,
\begin{align*}
    \dynkin[root radius = 0.1cm,edge length=1cm,labels={1,2,r-1,r}] B{xx.xo},\qquad & \text{ if }s_{r}=1; \text{ or }\\
    \dynkin[root radius = 0.1cm,edge length=1cm,labels={1,2,r-1,r}] B{xx.x*},\qquad & \text{ if }s_{r}=-1. 
\end{align*}
Here the crosses picture the Dynkin diagram for $\gl_{m|n}$ associated with $\bm s$ and the new root is shown by the circle (empty or filled).

Define the set of $\bm s$-\emph{negative roots} $\Phi_{\bm s}^-:=-\Phi_{\bm s}^+$. Clearly, we have $\Phi=\Phi_{\bm s}^+\cup \Phi_{\bm s}^-$.  

There are two most commonly used root systems for $\osp_{2m+1|2n}$. The \emph{standard positive root system} is the one corresponding to the parity sequence $\bm s_{-}$. In this case, we have the \emph{standard Dynkin diagram}
\[
\dynkin[root radius = 0.1cm,edge length=1cm,labels={1,2,n,n+1,r-1,r}] B{oo.to.oo}
\]
and 
\[
\Delta_{\bm s_{-}}=\{\delta_{i}-\delta_{i+1},~\delta_{n}-\ve_1,~\ve_j-\ve_{j+1},~\ve_m~|~1\lle i\lle n-1,~1\lle j\lle m-1\}.
\] 
The \emph{distinguished positive root system} is the one corresponding to the parity sequence $\bm s_{+}$. In this case, we have the \emph{distinguished Dynkin diagram}
\[
\dynkin[root radius = 0.1cm,edge length=1cm,labels={1,2,m,m+1,r-1,r}] B{oo.to.o*}
\]
and 
\[
\Delta_{\bm s_{+}}=\{\ve_i-\ve_{i+1},~\ve_m-\delta_{1},~\delta_{j}-\delta_{j+1},~\delta_{n}~|~1\lle i\lle m-1,~1\lle j\lle n-1\}.
\]

\subsubsection{The case of $\osp_{2m|2n}$} 
Let us describe the root system associated with a parity sequence $\bm s\in S_{m|n}$. We have two cases corresponding Dynkin diagrams of type C or type D. While in type D a parity sequence determines a unique root system, the new feature is that in type C there are two root systems corresponding to $\bs s$. In order to deal with this phenomenon, we introduce the notation $\tilde{\bs s}$ for each $\bm s\in S_{m|n}$ which includes the parity sequence $\bm s$ and a binary choice $\kappa$:
\begin{align}\label{extended s}
\tilde{\bs s}=(s_1,\dots,s_{r};\kappa), \quad \text{where}\quad \bm s=(s_1,\dots, s_r)\in S_{m|n}, \quad\ \kappa\in \{-1,1\}.
\end{align}
We call $\tl{\bm s}$ an \emph{extended parity sequence}. We use $\kappa(\tl{\bm s})$ to denote the binary choice from $\tl{\bm s}$. Denote $\wt S_{m|n}^0:=S_{m|n}\times \{-1,1\}$ the set of all extended parity sequences.

For convenience, in the case of $\osp_{2m+1|2n}$ we also define the extended parity sequences by mandating $\kappa=1$, that is $\tl{\bm s}=(\bm s;1)$.
Then we denote the set of extended parity sequences in this case by $\wt S_{m|n}^1:=S_{m|n}\times \{1\}$.

\medskip 

Set $\ve_i^{\tl{\bm s}}:=\ve_i^{\bm s}$ if $\ve_i^{\bm s}\ne\ve_m$ and $\ve_i^{\tl{\bm s}}:=\kappa \ve_i^{\bm s}$ if $\ve_i^{\bm s}=\ve_m$. 

\begin{itemize}
    \item If $s_{r}=1$, we have a Dynkin diagram of type D. In this case, we say that $\bm s$ is of type D. Define the $\tl{\bm s}$-\emph{simple positive roots} $$\Delta_{\tl{\bm s}}=\{\alpha_i^{\tl{\bm s}}=\ve_{i}^{\tl{\bm s}}-\ve_{i+1}^{\tl{\bm s}},\quad \alpha_{r}^{\tl{\bm s}}=\ve_{r-1}^{\tl{\bm s}}+\ve_{r}^{\tl{\bm s}},\quad 1\lle i\lle r-1\}.$$As a set $\Delta_{\tl{\bm s}}$ does not depend on $\kappa$. 
    \item If $s_{r}=-1$, we have a Dynkin diagram of type C. In this case, we say that $\bm s$ is of type C. Define the of $\tl{\bm s}$-\emph{simple positive roots} 
    \begin{align*}
    \Delta_{\tl{\bm s}}=\{\alpha_i^{\tl{\bm s}}=\ve_{i}^{\tl{\bm s}}-\ve_{i+1}^{\tl{\bm s}},\quad \alpha_{r}^{\tl{\bm s}}=2\ve_r^{\tl{\bm s}},\quad 1\lle i\lle r-1\}.
    \end{align*}
\end{itemize}
We also use the convention $\Delta_{{\bm s}}^+:=\Delta_{\tl{\bm s}}$ if $\kappa(\tl{\bm s})=1$ and $\Delta_{{\bm s}}^-:=\Delta_{\tl{\bm s}}$ if $\kappa(\tl{\bm s})=-1$.

Define the set of $\tl{\bm s}$-\emph{positive roots} $$
          \Phi_{\tl{\bm s}}^+=\{\ve_{i}^{\tl{\bm s}}\pm \ve_{j}^{\tl{\bm s}}~|~i,j\in I \text{ and }i<j\} \cup \{2\ve_{i}~|~m+1\lle i\lle r\}.
          $$

As before denote $\Phi_{\tl{\bm s}}^-=-\Phi_{\tl{\bm s}}^+$.

The Cartan data and Dynkin diagrams do not depend on the choice of $\kappa$.
The \emph{symmetrized Cartan matrix} $B^{\bm s}:=(b_{ij}^{\bm s})_{i,j\in I}=\big((\alpha_i^{\bm s},\alpha_j^{\bm s})\big)_{i,j\in I}$ for $\osp_{2m|2n}$ associated with the parity sequence $\bm s\in S_{m|n}$ is an $r\times r$ matrix described the $2\times 2$ submatrices \eqref{eq:cartan-block} except that the right-bottom corner $3\times 3$ submatrix is given by 
    $$\begin{matrix}
    (-1,-1) & (1,-1) \\
    \begin{pmatrix}
    s_{r-2}-1 & 1 & 0\\
    1 & -2 & 2\\
    0 & 2 & -4
    \end{pmatrix}, & \begin{pmatrix}
    s_{r-2}+1 & -1 & 0\\
    -1 & 0 & 2\\
    0 & 2 & -4
    \end{pmatrix},
    \end{matrix}$$
    $$\begin{matrix}
    (1,1) & (-1,1) \\
    \begin{pmatrix}
    s_{r-2}+1 & -1 & -1\\
    -1 & 2 & 0\\
    -1 & 0 & 2
    \end{pmatrix}, & \begin{pmatrix}
    s_{r-2}-1 & 1 & 1\\
    1 & 0 & -2\\
    1 & -2 & 0
    \end{pmatrix},
    \end{matrix}$$
where each case corresponds to the parity subsequence $(s_{r-1},s_{r})$ written above it. The \emph{Cartan matrix} $C^{\bm s}:=(c_{ij}^{\bm s})_{i,j\in I}$ for $\osp_{2m|2n}$ associated with the parity sequence $\bm s$ is given by 
$$ 
c_{ij}^{\bm s}=\begin{cases}s_{i}b_{ij}^{\bm s}, &\text{ if } i\neq r,\\
s_{r-1}b_{ij}^{\bm s}, &\text{ if } i=r \text{ and } s_r=1,\\
s_rb_{ij}^{\bm s}/2, &\text{ if } i=r \text{ and } s_r=-1.
\end{cases}
$$

The Dynkin diagram for $\osp_{2m|2n}$ associated with the parity sequence $\bm s$ is obtained from the Dynkin diagram for $\gl_{m|n}$ associated with the parity sequence $\bm s$ by attaching one more node as follows,
\begin{align*}
    \dynkin[root radius = 0.1cm,edge length=1cm,labels={1,2,r-1,r}] C{xx.xo},\qquad & \text{ if }s_{r}=-1; \\
    \dynkin[root radius = 0.1cm,edge length=1cm,labels={1,2,r-3~~~~~~~,r-2,r-1,r}] D{xx.xxoo},\qquad & \text{ if }s_{r-1}=s_{r}=1;\\
    \begin{tikzpicture}[baseline]
\dynkin[root radius = 0.1cm,edge length=1cm,labels={1,2,r-3~~~~~~~,r-2,r-1,r}] D{xx.xxtt}
\draw[double,double distance=2pt] (root 5) to [out=-90, in=90] (root 6);
\end{tikzpicture},\qquad & \text{ if }s_{r-1}=-1,~s_{r}=1.
\end{align*}

Here the crosses and the $(r-1)$-st node picture the Dynkin diagram for $\gl_{m|n}$ associated with $\bm s$ and the new root corresponds to the $r$-th node shown by a circle (empty or with the cross).


Again, there are two most commonly used root systems for $\osp_{2m|2n}$. The \emph{standard positive root system} is the one corresponding to the extended parity sequence $\tlbms_{-}:=(\bm s_-;1)$. In this case, we have the \emph{standard Dynkin diagram}
\[
\dynkin[root radius = 0.1cm,edge length=1cm,labels={1,2,n,n+1,r-2,r-1,r}] D{oo.to.ooo}
\]
and 
\[
\Delta_{\bm s_-}^+=\{\delta_{i}-\delta_{i+1},~\delta_{n}-\ve_1,~\ve_j-\ve_{j+1},~\ve_{m-1}+\ve_{m}~|~1\lle i\lle n-1,~1\lle j\lle m-1\}.
\]
The \emph{distinguished positive root system} is the one corresponding to the extended parity sequence $\tlbms_{+}:=(\bm s_+;1)$. In this case, we have the \emph{distinguished Dynkin diagram}
\[
\dynkin[root radius = 0.1cm,edge length=1cm,labels={1,2,m,m+1,r-1,r}] C{oo.to.oo}
\]
and 
\[
\Delta_{\bm s_+}^+=\{\ve_i-\ve_{i+1},~\ve_m-\delta_{1},~\delta_{j}-\delta_{j+1},~2\delta_{n}~|~1\lle i\lle m-1,~1\lle j\lle n-1\}.
\]

\subsection{Borel subalgebras}
In this section, we recall the root vectors of $\osp_{2m+\iota|2n}$.

There is a basis of $\osp_{2m+1|2n}$ given as follows:
\begin{align}
    h_i=E_{ii}-E_{i+m,i+m},\quad & h_{m+k}=E_{k^\circ k^\circ}-E_{k^\circ+n,k^\circ+n},\nonumber\\
    e_{\ve_i}=E_{2m+1,i+m}-E_{i,2m+1},\quad & e_{-\ve_i}=E_{i+m,2m+1}-E_{2m+1,i}, \label{eq:skip1} \\
    e_{2\delta_k}=\sqrt{2}E_{k^\circ,k^\circ+n},\quad & e_{2\delta_k}=-\sqrt{2}E_{k^\circ+n,k^\circ}\nonumber\\
    e_{\delta_k+\delta_l}=E_{k^\circ,l^\circ+n}+E_{l^\circ,k^\circ+n},\quad & e_{-\delta_k-\delta_l}=-E_{l^\circ+n,k^\circ}-E_{k^\circ+n,l^\circ},\nonumber\\
    e_{\delta_k-\delta_l}=\mathsf{sgn}(k-l)(E_{k^\circ l^\circ}-E_{l^\circ+n,k^\circ+n}),\quad & e_{\ve_i-\ve_j}=E_{ij}-E_{j+m,i+m},\nonumber\\
    e_{\ve_i+\ve_j}=E_{i,j+m}-E_{j,i+m},\quad & e_{-\ve_i-\ve_j}=E_{j+m,i}-E_{i+m,j},\nonumber\\
    e_{\delta_k+\ve_i}=E_{i,k^\circ+n}+E_{k^\circ,i+m},\quad & e_{-\delta_k-\ve_i}=E_{i+m,k^\circ}-E_{k^\circ+n,i},\nonumber\\
    e_{\delta_k-\ve_i}=E_{i+m,k^\circ+n}+E_{k^\circ,i},\quad & e_{-\delta_k+\ve_i}=E_{i,k^\circ}-E_{k^\circ+n,i+m},\nonumber\\
    e_{\delta_k}=E_{2m+1,k^\circ+n}+E_{k^\circ,2m+1},\quad & e_{-\delta_k}=E_{2m+1,k^\circ}-E_{k^\circ+n,2m+1},\label{eq:skip2} 
\end{align}
for $1\lle i\ne j\lle m$ and $1\lle k\ne l\lle n$, where $\mathsf{sgn}(x)$ denotes the sign of $x$.

Similarly, a basis of $\osp_{2m|2n}$ is given by the same formulas,  for $1\lle i\ne j\lle m$ and $1\lle k\ne l\lle n$, with equations \eqref{eq:skip1} and \eqref{eq:skip2} skipped.

The (quadratic) \emph{Casimir element} $\Omega$ is given by
\[
\Omega:=\frac{1}{4}\Big(\sum_{i=1}^m h_i^2-\sum_{k=1}^n h_{m+k}^2+\sum_{\alpha\in\Phi}e_{\alpha}e_{-\alpha}\Big)\in \mathrm{U}(\osp_{2m+\iota|2n}).
\]
The Casimir element $\Omega$ is central in $\mathrm{U}(\osp_{2m+\iota|2n})$.

The \emph{nilpotent subalgebra $\n_{\tl{\bm s}}^+$} of $\osp_{2m+\iota|2n}$ associated with $\tl{\bm s}$ is generated by $\{e_{\alpha}~|~\alpha\in \Delta_{\tl{\bm s}}\}$. A basis of the nilpotent algebra $\n_{\tl{\bm s}}^+$ is given by $\{e_{\alpha}~|~\alpha\in \Phi_{\tl{\bm s}}^+\}$. The \emph{Borel subalgebra associated with} $\tl{\bm s}$ is $\fkb_{\tl{\bm s}}^+:=\h\oplus \n_{\tl{\bm s}}^+$. The Borel subalgebra with $\tl{\bm s}_-$ is the \emph{standard Borel subalgebra}.

\subsection{Representations}
Let $V$ be an $\osp_{2m+\iota|2n}$-module. Let $\tl{\bm s}\in \wt S_{m|n}^\iota$ be an extended parity sequence.  

For a weight $\lambda\in\h^*$, we call a nonzero vector $v_{\la}^{\tl{\bm s}}\in V$ an $\tl{\bm s}$-\emph{singular vector} of weight $\la$ if $\n_{\tl{\bm s}}^+v_{\la}^{\tl{\bm s}}=0$ and $hv_{\la}^{\tl{\bm s}}=\la(h)v_{\la}^{\tl{\bm s}}$, for all $h\in \h$. Denote by $V^{\bsing}$ the subspace of $\tl{\bm s}$-singular vectors in $V$.  Set
\[
V_{\la}:=\{v\in V~|~hv=\la(h)v\},\qquad V_{\la}^\bsing:=V^\bsing \cap V_{\la}.
\]
Denote by $L^{\tl{\bm s}}(\la)$ the irreducible module generated by an $\tl{\bm s}$-singular vector $v_{\la}^{\bsing}$ of weight $\la$.  We simply write $L(\la)$ if $\tlbms$ corresponds to the standard root system.

In this paper we study Bethe ansatz in the tensor products of finite-dimensional $\osp_{2m+\iota|2n}$-modules given in the parity sequence $\bm s_-$ by hook partitions as follows. 
Let $\mu=(\mu_1\gge \mu_2\gge \cdots)$ be a partition: $\mu_i\in \Z_{\gge 0}$ and $\mu_j=0$ for sufficiently large $j$. We use $\mu'=(\mu_1' \gge \mu_2' \gge \cdots)$ to denote the conjugate of $\mu$. We call a partition $\mu$ an $(m|n)$-\emph{hook partition} if $\mu_{m+1}\lle n$. 

For an $(n|m)$-hook partition $\mu$, define two $\osp_{2m+\iota|2n}$-weights $\mu_\pm$  by
\beq\label{eq:mu-pm}
\begin{split}
&\mu_+=\sum_{i=1}^n \mu_i\delta_i+\sum_{j=1}^m \max\{\mu_j'-n,0\}\epsilon_{j}, \\
&\mu_-=\sum_{i=1}^n \mu_i\delta_i+\sum_{j=1}^{m-1} \max\{\mu_j'-n,0\}\epsilon_{j}-\max\{\mu_m'-n,0\}\epsilon_{m}.
\end{split}
\eeq
We call such $\mu_+$ (resp. $\mu_\pm$) \emph{dominant integral} $\osp_{2m+1|2n}$-weights (resp. $\osp_{2m|2n}$-weights). Note that $L(\mu_+)$ ($L(\mu_\pm)$ for $\iota=0$ case) are finite-dimensional $\osp_{2m+\iota|2n}$-modules. We call $\mu^+$ and $L(\mu^+)$ (resp.  $\mu^-$ and $L(\mu^-)$) the dominant integral weight and the highest weight module of the first kind (resp. of the second kind).

\medskip

Let $\la$ be a dominant integral  $\osp_{2m+\iota|2n}$-weight. Then $L(\la)$ is a highest weight module with respect to all Borel subalgebras. The highest weights and the highest weight vectors depend on the choice of the Borel subalgebras - that is on the extended parity sequences $\tlbms$. We now describe this dependence.

Denote by $\la^{\tlbms}$ the unique $\osp_{2m+\iota|2n}$-weight such that $L^{\tlbms}(\la^{\tlbms})\cong L(\la)$. Note that if $\tlbms$ is of type D, then $\la^{\tlbms}$ is independent of $\kappa$. 

For $\tlbms \in\wt S_{m|n}^\iota$ and an $\osp_{2m+\iota|2n}$-weight $\la$, denote by
\[
\la_{[\tlbms]}=(\la_{[\tlbms],1},\dots,\la_{[\tlbms],r}),\qquad \text{where }\qquad \la_{[\tlbms],i}=s_i(\la^{\tlbms},\ve_i^{\tlbms}),
\]
the coordinate sequence of $\la$ associated to $\tlbms$. The coordinate sequence of $\la^{\tlbms}$ can be computed recursively as follows.

Let $1\lle i\lle r-1$. For $\bm s \in S_{m|n}$, let $\bm s^{[i]}=(s_1,\dots,s_{i+1},s_i,\dots,s_r)\in S_{m|n}$ be the parity sequence obtained from $\bm s$ by interchanging $i$-th and $(i + 1)$-st components. Let $\tlbms \in \wt S_{m|n}^\iota$. 

Set
\beq\label{eq:tlbms}
\begin{split}
&\tlbms^{[i]}=(\bm s^{[i]};\kappa(\tlbms))\in \wt S_{m|n}^{\iota},\qquad 1\lle i\lle r-1,\\
&\tlbms^{[r]}=\begin{cases} \tlbms,\quad &\text{ if }\bm s\text{ is of type C},\\ (\bm s^{[r-1]};s_{r-1}\kappa(\tlbms)),\quad &\text{ if }\bm s\text{ is of type D}.\end{cases}
\end{split}
\eeq
In addition if $\iota=0$ and $s_r=1$ (the case of type D), we also set $\tlbms^{[f]}=(\bm s;-\kappa(\tlbms))\in \wt S_{m|n}^0$. Here $[f]$ stands for ``fake". The reason for this will become clear later, see Lemma \ref{lem:fake}. 
Then the change in the $r$-th direction is a composition: 
\beq\label{eq:r-change}
\tlbms^{[r]}=\begin{cases}
((\tlbms^{[f]})^{[r-1]})^{[f]},& \text{ if }s_{r-1}=1,\\
(\tlbms^{[f]})^{[r-1]}, & \text{ if }s_{r-1}=-1.\end{cases}
\eeq
We also set $\bm s^{[r]}=\bm s^{[r-1]}$, $\bm s^{[f]}=\bm s$.

\begin{lem}\label{lem: weight change}
Let $\la$ be a dominant integral $\osp_{2m+\iota|2n}$-weight, and $\tlbms \in \wt S_{m|n}^\iota$ an extended parity sequence. 

For $1\lle i\lle r-1$, if $s_i\ne s_{i+1}$, then
\[
\la_{[\tlbms^{[i]}]}^{\tlbms^{[i]}}=(\la_{[\tlbms],1}^{\tlbms},\dots,\la_{[\tlbms],i-1}^{\tlbms},\la_{[\tlbms],i+1}^{\tlbms}+\eta,\la_{[\tlbms],i}^{\tlbms}-\eta,\dots,\la_{[\tlbms],r}^{\tlbms}),
\]
where $\eta=1$ if $\la_{[\tlbms],i}^{\tlbms}+\la_{[\tlbms],i+1}^{\tlbms}\ne 0$ and $\eta=0$ otherwise. 

If $\iota=0$ and $s_r=1$ (that is if in the case of type D), then we have
\[
\la_{[\tlbms^{[f]}]}^{\tlbms^{[f]}}=(\la_{[\tlbms],1}^{\tlbms},\dots,\la_{[\tlbms],r-1}^{\tlbms},-\la_{[\tlbms],r}^{\tlbms}).
\]
\end{lem}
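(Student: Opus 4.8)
The plan is to recognize the move $\tlbms\mapsto\tlbms^{[i]}$ (in the case $s_i\ne s_{i+1}$) as an \emph{odd reflection} of Borel subalgebras, and the move $\tlbms\mapsto\tlbms^{[f]}$ as a pure change of coordinate convention; then I translate the known behavior of highest weights under each into the coordinate sequences. When $s_i=s_{i+1}$ one has $\bm s^{[i]}=\bm s$ and there is nothing to prove, so throughout the first part I assume $s_i\ne s_{i+1}$.

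First I would check that $\alpha_i^{\tlbms}=\ve_i^{\tlbms}-\ve_{i+1}^{\tlbms}$ is an isotropic odd simple root: it is odd because $|\sigma_{\bm s}(i)|\ne|\sigma_{\bm s}(i+1)|$, and using $(\ve_i^{\tlbms},\ve_i^{\tlbms})=s_i$, $(\ve_{i+1}^{\tlbms},\ve_{i+1}^{\tlbms})=s_{i+1}$, $(\ve_i^{\tlbms},\ve_{i+1}^{\tlbms})=0$ one gets $(\alpha_i^{\tlbms},\alpha_i^{\tlbms})=s_i+s_{i+1}=0$. Next I would verify directly from the definition of $\sigma_{\bm s^{[i]}}$ that the move $\bm s\mapsto\bm s^{[i]}$ only interchanges the images at $i$ and $i+1$, so that $\ve_i^{\tlbms^{[i]}}=\ve_{i+1}^{\tlbms}$, $\ve_{i+1}^{\tlbms^{[i]}}=\ve_i^{\tlbms}$ and $\ve_j^{\tlbms^{[i]}}=\ve_j^{\tlbms}$ for $j\ne i,i+1$; this identifies $\fkb_{\tlbms^{[i]}}^+$ with the Borel subalgebra obtained from $\fkb_{\tlbms}^+$ by the odd reflection along $\alpha_i^{\tlbms}$.

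I would then invoke the standard behavior of a finite-dimensional irreducible module under an odd reflection (see \cite{CW12}): its new highest weight is $\la^{\tlbms^{[i]}}=\la^{\tlbms}-\eta\,\alpha_i^{\tlbms}$, with $\eta=1$ if $(\la^{\tlbms},\alpha_i^{\tlbms})\ne 0$ and $\eta=0$ otherwise. Since $(\la^{\tlbms},\ve_i^{\tlbms})=s_i\la_{[\tlbms],i}^{\tlbms}$ and $s_{i+1}=-s_i$, a short computation gives $(\la^{\tlbms},\alpha_i^{\tlbms})=s_i(\la_{[\tlbms],i}^{\tlbms}+\la_{[\tlbms],i+1}^{\tlbms})$, so the condition on $\eta$ matches the one in the statement. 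Substituting $\la^{\tlbms^{[i]}}=\la^{\tlbms}-\eta\alpha_i^{\tlbms}$ and the swapped basis vectors into $\la_{[\tlbms^{[i]}],j}^{\tlbms^{[i]}}=s_j^{[i]}(\la^{\tlbms^{[i]}},\ve_j^{\tlbms^{[i]}})$, and using $(\alpha_i^{\tlbms},\ve_i^{\tlbms})=s_i$, $(\alpha_i^{\tlbms},\ve_{i+1}^{\tlbms})=-s_{i+1}$ and $(\alpha_i^{\tlbms},\ve_j^{\tlbms})=0$ for $j\ne i,i+1$, yields the asserted values $\la_{[\tlbms],i+1}^{\tlbms}+\eta$ and $\la_{[\tlbms],i}^{\tlbms}-\eta$ in slots $i$ and $i+1$, with all other slots unchanged.

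For the fake move, with $\iota=0$ and $s_r=1$, the plan is to use that $\ve_r^{\bm s}=\ve_m$ (the last $+1$ sits at position $r$) and that, as recorded in Section~\ref{osp sec}, $\Delta_{\tlbms}$ and $\Phi_{\tlbms}^+$ are independent of $\kappa$, since flipping $\kappa$ merely interchanges the two roots $\ve_{r-1}^{\tlbms}\mp\ve_r^{\tlbms}$. Hence $\fkb_{\tlbms^{[f]}}^+=\fkb_{\tlbms}^+$ and $\la^{\tlbms^{[f]}}=\la^{\tlbms}$; the only change is that $\ve_r^{\tlbms}=\kappa\ve_m$ becomes $-\kappa\ve_m$, which flips the sign of the $r$-th coordinate and leaves the rest fixed, giving $\la_{[\tlbms^{[f]}],r}^{\tlbms^{[f]}}=-\la_{[\tlbms],r}^{\tlbms}$. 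I expect the genuine work to be bookkeeping rather than conceptual: correctly matching $\tlbms^{[i]}$ to the odd reflection and tracking the $\kappa$-dependent sign of the $\ve_m$-slot in type D (in particular when $i=r-1$, where the reflection changes the type from D to C). Once these identifications are pinned down, both formulas reduce to the standard odd-reflection computation.
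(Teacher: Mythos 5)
Your proposal is correct and takes essentially the same route as the paper: the paper's proof is precisely the odd-reflection argument, carried out by hand (if $\la_{[\tlbms],i}^{\tlbms}+\la_{[\tlbms],i+1}^{\tlbms}=0$ the $\tlbms$-singular vector remains $\tlbms^{[i]}$-singular, otherwise applying the root vector of weight $-\alpha_i^{\tlbms}$ gives the new nonzero singular vector), followed by the same coordinate bookkeeping from the swap $\ve_i^{\tlbms^{[i]}}=\ve_{i+1}^{\tlbms}$, $\ve_{i+1}^{\tlbms^{[i]}}=\ve_i^{\tlbms}$. Outsourcing that step to the standard odd-reflection lemma in \cite{CW12}, and your explicit handling of the fake move and of the type D to type C case $i=r-1$, match the paper's reasoning in substance.
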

\begin{proof}
Let $v^{\tlbms}$ be an $\tlbms$-singular vector of $L(\la)$. If $\la_{[\tlbms],i}^{\tlbms}+\la_{[\tlbms],i+1}^{\tlbms}=0$ then $v^{\tlbms}$ is also an $\tlbms^{[i]}$-singular vector. Otherwise, let $e$ be a root vector of $\osp_{2m+\iota|2n}$ of weight $-\alpha_i^{\tlbms}=\ve_{i+1}^{\tlbms}-\ve_i^{\tlbms}$. Then $ev^{\tlbms}$ is a non-zero $\tlbms^{[i]}$-singular vector. 

Since
$\ve_j^{\tlbms^{[i]}}=\ve_j^{\tlbms}$, $j\neq i,i+1$ and $\ve_i^{\tlbms^{[i]}}=\ve_{i+1}^{\tlbms}$,
$\ve_{i+1}^{\tlbms^{[i]}}=\ve_{i}^{\tlbms}$, the lemma follows.
\end{proof}
For our purposes, we do not need to solve this recursion, see \cite[Section 2.4]{CW12} for a partial answer. 

\section{Symmetric Rational pseudo-differential operators and isotropic superflags}\label{operators sec}
\subsection{Rational pseudo-differential operators}We recall the basics of rational pseudo-differential operators from \cite{CDSK12,HMVY19}.

Let $\mathcal K=\C(x)$ be the differential field of complex-valued rational functions with the derivation $\pa$. Consider the division ring of pseudo-differential operators $\mc K((\pa^{-1}))$. An element in $\mc K((\pa^{-1}))$ is of the form
\[
A=\sum_{j=-\infty}^M a_j\pa^j,\qquad a_j\in\mc K,\qquad M\in\Z.
\]We say that the \emph{order} of $A$ is $M$ if $a_M\ne 0$, and denote it by $\mathrm{ord}~A$. We say that $A$ is \emph{monic} if $a_M=1$.

For $j\in \Z_{\gge 0}$ and $N\in \Z$, set
\[
{N \choose j}=\frac{N(N-1)\dots(N-j+1)}{j!}.
\]
One has the following relations in $\mc K((\pa^{-1}))$:
\[
\pa\pa^{-1}=\pa^{-1}\pa=1,\quad \pa^N a=\sum_{j=0}^\infty {N \choose j}a^{(j)}\pa^{N-j},\quad a\in\mc K,\quad N\in \Z,
\]
where $a^{(j)}$ is the $j$-th derivative of $a$ and $a^{(0)}=a$. Note that all nonzero elements in $\mc K((\pa^{-1}))$ are invertible.

Consider the algebra of differential operators $\mc K[\pa]$ which is a subring of $\mc K((\pa^{-1}))$. Let $\cD\in\cK[\pa]$ be a monic differential operator of order $M$. We say that $\cD$ is \emph{complete factorable} over $\cK$ if $\cD=d_1\cdots d_M$, where $d_i=\pa-a_i$ for some $a_i\in\cK$ and each $1\lle i\lle M$. Denote by $\ker\cD$ the set $\{f~|~\cD f=0\}$. If $\dim(\ker\cD)=\ord\cD$ and $\ker\cD\subset \mc K$, then $\cD$ is completely factorable over $\cK$.

Denote by $\mc K(\pa)$ the division subring of $\mc K((\pa^{-1}))$ generated by $\mc K[\pa]$. We call an element in $\mc K(\pa)$ a \emph{rational pseudo-differential operator}. Let $\cR$ be a rational pseudo-differential operator. If $\cR$ is of the form $\cD_{\bar 0}\cD_{\bar 1}^{-1}$ for some $\mc D_{\bar 0},\cD_{\bar 1}\in\cK[\pa]$, then we call $\cD_{\bar 0}\cD_{\bar 1}^{-1}$ a \emph{fractional factorization} of $\cR$. We say that a fractional factorization $\cR=\cD_{\bar 0}\cD_{\bar 1}^{-1}$ is \emph{minimal} if $\cD_{\bar 1}$ is monic and of minimal possible order.

\begin{prop}[\cite{CDSK12}]
Let $\cR\in\cK(\pa)$, then there exists a unique minimal fractional factorization of $\cR$. Let $\cR=\cD_{\bar 0}\cD_{\bar 1}^{-1}$ be such that $\dim(\ker\cD_{\bar 0})=\ord\cD_{\bar 0}$ and $\dim(\ker\cD_{\bar 1})=\ord\cD_{\bar 1}$. Then $\cR=\cD_{\bar 0}\cD_{\bar 1}^{-1}$ is the minimal fractional factorization of $\cR$ if and only if $\ker\cD_{\bar 0}\cap \ker\cD_{\bar 1}=0$.\qed
\end{prop}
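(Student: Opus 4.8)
The plan is to exploit the ring structure of $\cK[\pa]$. Since $\cK$ is a differential field, $\cK[\pa]$ admits right (and left) division with remainder measured by the order, so it is a left and right Euclidean domain, hence a left and right principal ideal domain and a two-sided Ore domain whose skew field of fractions is precisely $\cK(\pa)$. In particular any $\mathcal A,\mathcal B\in\cK[\pa]$ have a greatest common right divisor $\mathcal G$, characterized as the generator of the left ideal $\cK[\pa]\mathcal A+\cK[\pa]\mathcal B=\cK[\pa]\mathcal G$; thus $\mathcal G=\mathcal U\mathcal A+\mathcal V\mathcal B$ for some $\mathcal U,\mathcal V\in\cK[\pa]$, and we call $\mathcal A,\mathcal B$ \emph{right coprime} when $\mathcal G\in\cK^\times$, equivalently when $\mathcal U\mathcal A+\mathcal V\mathcal B=1$ for suitable $\mathcal U,\mathcal V$. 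These standard facts are the backbone of everything below.

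First I would settle existence. Because $\cK(\pa)$ is the Ore localization of $\cK[\pa]$, any $\cR\in\cK(\pa)$ is a right fraction $\cR=\cD_{\bar 0}\cD_{\bar 1}^{-1}$. Writing $\cD_{\bar 0}=\mathcal A\mathcal G$ and $\cD_{\bar 1}=\mathcal B\mathcal G$ with $\mathcal G$ their greatest common right divisor (so $\mathcal A,\mathcal B$ right coprime), the factor $\mathcal G$ cancels and $\cR=\mathcal A\mathcal B^{-1}$; rescaling so that $\mathcal B$ becomes monic yields a fractional factorization whose denominator is monic and whose numerator and denominator are right coprime.

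The core of the argument is the equivalence ``minimal $\iff$ right coprime'', together with uniqueness, which I would obtain by a B\'ezout computation. Suppose $\cR=\cD_{\bar 0}\cD_{\bar 1}^{-1}=\mathcal E_{\bar 0}\mathcal E_{\bar 1}^{-1}$ with $\cD_{\bar 0},\cD_{\bar 1}$ right coprime. Applying the right Ore condition to $\cD_{\bar 1}^{-1}\mathcal E_{\bar 1}$ produces $p,q\in\cK[\pa]$ with $q\ne 0$ and $\cD_{\bar 1}p=\mathcal E_{\bar 1}q$; multiplying by $\cR$ on the left gives $\cD_{\bar 0}p=\cR\cD_{\bar 1}p=\cR\mathcal E_{\bar 1}q=\mathcal E_{\bar 0}q$. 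Choosing $\mathcal U,\mathcal V$ with $\mathcal U\cD_{\bar 0}+\mathcal V\cD_{\bar 1}=1$ and forming $\mathcal U(\cD_{\bar 0}p)+\mathcal V(\cD_{\bar 1}p)$ yields $p=(\mathcal U\mathcal E_{\bar 0}+\mathcal V\mathcal E_{\bar 1})q=:\mathcal H q$, so $q$ right divides $p$. Substituting into $\cD_{\bar 1}p=\mathcal E_{\bar 1}q$ and cancelling $q$ (legitimate as $\cK[\pa]$ is a domain) gives $\mathcal E_{\bar 1}=\cD_{\bar 1}\mathcal H$, hence $\ord\mathcal E_{\bar 1}=\ord\cD_{\bar 1}+\ord\mathcal H\gge\ord\cD_{\bar 1}$. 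Thus a right coprime factorization has denominator of least possible order, i.e.\ is minimal; and if $\mathcal E_{\bar 1}$ is also monic of that same order then $\mathcal H=1$, forcing $\mathcal E_{\bar 1}=\cD_{\bar 1}$ and then $\mathcal E_{\bar 0}=\cR\mathcal E_{\bar 1}=\cD_{\bar 0}$, which is uniqueness. Conversely, if $\cD_{\bar 0},\cD_{\bar 1}$ had a common right divisor of positive order one could cancel it and lower $\ord\cD_{\bar 1}$, so minimality forces right coprimality.

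It remains to translate right coprimality into the kernel condition under $\dim(\ker\cD_{\bar 0})=\ord\cD_{\bar 0}$ and $\dim(\ker\cD_{\bar 1})=\ord\cD_{\bar 1}$. I would first show $\ker\mathcal G=\ker\cD_{\bar 0}\cap\ker\cD_{\bar 1}$: the inclusion $\subseteq$ holds because $\mathcal G$ right divides both, and for $\supseteq$ any $f$ in the intersection satisfies $\mathcal G f=\mathcal U\cD_{\bar 0}f+\mathcal V\cD_{\bar 1}f=0$. Next, writing $\cD_{\bar 1}=\mathcal B\mathcal G$ and applying rank--nullity to $\mathcal G\colon\ker\cD_{\bar 1}\to\ker\mathcal B$ (whose kernel is exactly $\ker\mathcal G$, since $\ker\mathcal G\subseteq\ker\cD_{\bar 1}$) gives $\ord\cD_{\bar 1}=\dim\ker\cD_{\bar 1}\lle\dim\ker\mathcal G+\dim\ker\mathcal B\lle\dim\ker\mathcal G+\ord\mathcal B$; combined with $\ord\cD_{\bar 1}=\ord\mathcal B+\ord\mathcal G$ and the general bound $\dim\ker\mathcal G\lle\ord\mathcal G$, this forces $\dim\ker\mathcal G=\ord\mathcal G$. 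Therefore $\mathcal G\in\cK^\times$ exactly when $\ord\mathcal G=0$, i.e.\ when $\ker\mathcal G=\ker\cD_{\bar 0}\cap\ker\cD_{\bar 1}=0$; together with the previous paragraph this is the stated criterion. The main obstacle is the noncommutative bookkeeping — keeping left versus right fractions, divisors, and ideals on the correct sides throughout — and the verification that the greatest common right divisor of two full-kernel operators is again of full kernel, which is what makes $\ord\mathcal G=\dim\ker\mathcal G$ and hence the translation to the kernel condition work.
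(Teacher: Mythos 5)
Your argument must stand on its own here, because the paper offers no proof of this proposition to compare against --- it is quoted from \cite{CDSK12} with the statement closed by a \qed --- and it does stand: the proof is correct, and it is in substance the argument of the cited reference, specialized from matrix to scalar operators. Existence comes from the Euclidean/principal-ideal structure of $\cK[\pa]$ and the resulting Ore fraction description of $\cK(\pa)$; your GCRD/B\'ezout computation correctly yields both the equivalence ``minimal if and only if right coprime'' and uniqueness (the leading-coefficient comparison forcing $\mathcal H=1$ when both denominators are monic is left implicit, but it is immediate); and your rank--nullity step, showing that the greatest common right divisor $\mathcal G$ of two full-kernel operators again satisfies $\dim\ker\mathcal G=\ord\mathcal G$, is exactly what converts right coprimality into the stated condition $\ker\cD_{\bar 0}\cap\ker\cD_{\bar 1}=0$ via $\ker\mathcal G=\ker\cD_{\bar 0}\cap\ker\cD_{\bar 1}$. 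The ingredients you invoke as standard --- right division with remainder in $\cK[\pa]$, the bound $\dim\ker\mathcal B\lle\ord\mathcal B$, and the fact that every element of $\cK(\pa)$ is a right fraction $AB^{-1}$ with $A,B\in\cK[\pa]$ --- are indeed standard and are themselves established in \cite{CDSK12}, so leaning on them is legitimate.
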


We call $\cR$ an \emph{$(m|n)$-rational pseudo-differential operator} if $\mc R$ is monic and for the minimal fractional factorization $\cR=\cD_{\bar 0}\cD_{\bar 1}^{-1}$, we have that $\cD_{\bar 0}$ and $\cD_{\bar 1}$ are completely factorable over $\cK$, and $\ord\cD_{\bar 0}=m$ and $\ord\cD_{\bar 1}=n$.

Let $\cR$ be an $(m|n)$-rational pseudo-differential operator. Let $\bm s\in S_{m|n}$. We call the form $\cR=d_1^{s_1}\cdots d_{r}^{s_r}$, where $d_i=\pa-a_i$, $a_i\in \cK$, $i\in I$, a \emph{complete $\bm s$-factorization}. Denote by $\mc F^{\bm s}(\cR)$ the set of all complete $\bm s$-factorizations of $\cR$. We call $\mc F(\cR)=\bigsqcup_{\bm s\in S_{m|n}}\mc F^{\bm s}(\cR)$ the \emph{set of all complete factorizations}.

Suppose $\cR_1=(\pa-a)(\pa-b)^{-1}$ and $\cR_2=(\pa-c)^{-1}(\pa-d)$ be two $(1|1)$-rational pseudo-differential operators, where $a,b,c,d\in\cK$, $a\ne b$, and $c\ne d$. Then $\cR_1=\cR_2$ if and only if 
\beq\label{eq:switch}
\begin{cases}
c=b+\ln'(a-b),\\
d=a+\ln'(a-b),
\end{cases}
\iff \quad 
\begin{cases}
a=d-\ln'(c-d),\\
b=c-\ln'(c-d),
\end{cases}
\eeq
where $\ln'(f)=f'/f$.

Let $\cR$ be an $(m|n)$-rational pseudo-differential operator. Suppose $\cR=d_1^{s_1}\cdots d_r^{s_r}$, $d_i=\pa-a_i$, is a complete $\bm s$-factorization. If $s_i\ne s_{i+1}$, then $d_i\ne d_{i+1}$. Using \eqref{eq:switch}, we obtain $\tl d_i$ and $\tl d_{i+1}$ such that $d_i^{s_i}d_{i+1}^{s_{i+1}}=\tl d_i^{s_{i+1}}\tl d_{i+1}^{s_i}$, which gives a complete $\bm s^{[i]}$-factorization $\cR=d_1^{s_1}\cdots\tl d_i^{s_{i+1}}\tl d_{i+1}^{s_i}\cdots d_r^{s_r}$ with $\bm s^{[i]}=(s_1,\dots,s_{i+1},s_i,\dots,s_r)$.  Repeating this procedure, we see that there exists a canonical bijection between the sets of complete
factorizations with respect to any two parity sequences.

\subsection{Factorizations and superflag varieties}
Let $W=W_{\bar 0}\oplus W_{\bar 1}$ be a vector superspace with $\dim(W_{\bar 0})=m$ and $\dim(W_{\bar 1})=n$. Consider a \emph{full flag} $\sF$ of $W$, $\sF=\{F_1\subset F_2\subset \dots\subset F_{r}=W\}$ such that $\dim(F_i)=i$. We say that a basis $\{w_1,\dots,w_{r}\}$ of $W$ \emph{generates the full flag $\sF$} if $F_i$ is spanned by $w_1,\dots,w_i$. A full flag is called a \emph{full superflag} if it is generated by a homogeneous basis. Denote by $\sF(W)$ the set of all full superflags.
We note that $\dim \sF(W)=m(m-1)/2+n(n-1)/2$.

To a homogeneous basis $\{w_1,\dots,w_{r}\}$ of $W$, we associate the unique parity sequence $\bm s\in S_{m|n}$ such that $s_i=(-1)^{|w_i|}$. We say a full superflag $\sF$ \emph{has parity sequence} $\bm s$ if it is generated by a homogeneous basis whose parity sequence is $\bm s$. We denote by $\sF^{\bm s}(W)$ the set of all full superflags of parity $\bm s$.

Clearly, we have
\[
\sF(W)=	\bigsqcup_{\bm{s}\in S_{m|n}} \sF^{\bm s}(W), \qquad 
\sF^{\bm s}(W)\cong\sF\left(W_{\bar{0}}\right)\times \sF\left(W_{\bar{1}}\right).
\] 

Given a basis $\{v_1,\dots, v_m\}$ of $W_{\bar 0}$, a basis $\{u_1,\dots, u_n\}$ of $W_{\bar 1}$, and a parity sequence $\bm s\in S_{m|n}$, define a homogeneous basis $\{w_1,\dots, w_{r}\}$ of $W$ by the rule $w_i=v_{\bm s^+_i+1}$ if $s_i=1$ and 
$w_i=u_{\bm s^-_i+1}$ if $s_i=-1$. Conversely, any homogeneous basis of $W$ gives a basis of $W_{\bar 0}$, a basis of $W_{\bar 1}$, and a parity sequence $\bm s$. We say that the basis $\{w_1,\dots, w_{r}\}$ is \emph{associated to $\{v_1,\dots, v_m\}$, $\{u_1,\dots, u_n\}$, and $\bm s$}.

Define the \emph{Wronskian} $\Wr$ of $g_1,\dots,g_k$ by
$$
\Wr(g_1,\dots,g_k)=\det \left( g_j^{(i-1)} \right)_{i,j=1}^k.
$$
By convention, we extend this definition to the case of $k=0$ by setting $\Wr(\emptyset)=1$.

Let $\cR$ be an $(m|n)$-rational pseudo-differential operator. Let $\cR=\cD_{\bar 0}\cD_{\bar 1}^{-1}$ be the minimal fractional factorization. Let $V=W_{\bar 0}=\ker \cD_{\bar 0} $, $U=W_{\bar 1}=\ker \cD_{\bar 1} $, $W=W_{\bar 0}\oplus W_{\bar 1}$. We call $W$ the \emph{superkernel} of $\cR$ and denote it by $\mathrm{sker}\cR$.

Given a basis $\{v_1,\dots, v_m\}$ of $V$, a basis $\{u_1,\dots, u_n\}$ of $U$, and a parity sequence $\bm s\in S_{m|n}$, define $d_i=\pa-f_i$, where
\beq\label{eq wronski coeff}
\begin{split}
&f_i= \ln'  \frac{\Wr(v_1,v_2,\dots,v_{\bm s_i^++1},u_1,u_2,\dots,u_{\bm s_i^-})}{\Wr(v_1,v_2,\dots,v_{\bm s_i^+},u_1,u_2,\dots,u_{\bm s_i^-})}, \qquad {\rm if}\ s_i=1, \\ &f_i= \ln'  \frac{\Wr(v_1,v_2,\dots,v_{\bm s_i^+},u_1,u_2,\dots,u_{\bm s_i^-+1})}{\Wr(v_1,v_2,\dots,v_{\bm s_i^+},u_1,u_2,\dots,u_{\bm s_i^-})},  \qquad {\rm if}\ s_i=-1.
\end{split}
\eeq

Note that if two bases $\{v_1,\dots, v_m\}$, $\{\tl{v}_1,\dots, \tl{v}_m\}$ generate the same full flag of $V$ and two bases $\{u_1,\dots, u_n\}$, $\{\tl{u}_1,\dots, \tl{u}_n\}$ generate the same full flag of $U$, then the coefficients $f_i$ computed from $v_j,u_j$ and from $\tl{v}_j,\tl{u}_j$ are the same.

\begin{prop}[{\cite{HMVY19}}]\label{prop flag factor}
We have a complete $\bm s$-factorization of $\cR:$ $\cR=d_1^{s_1}\cdots d_{r}^{s_{r}}$.\qed
\end{prop}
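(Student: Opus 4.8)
The plan is to induct on the parity sequence, starting from the distinguished sequence $\bm s_+=(1,\dots,1,-1,\dots,-1)$ and propagating along adjacent transpositions via the switch \eqref{eq:switch}. For the base case I would invoke the classical Wronskian factorization of a monic differential operator: if $\cD\in\cK[\pa]$ has order $k$ and kernel spanned by $g_1,\dots,g_k$ with $\Wr(g_1,\dots,g_k)\neq 0$, then $\cD=\prod_{l=k}^{1}\big(\pa-\ln'(\Wr(g_1,\dots,g_l)/\Wr(g_1,\dots,g_{l-1}))\big)$, the factor with $l=k$ placed on the left. Applying this to $\cD_{\bar 0}$ along the flag $v_1,\dots,v_m$ and to $\cD_{\bar 1}$ along the flag $u_1,\dots,u_n$, and reading off $\bm s_i^{+},\bm s_i^{-}$ for $\bm s_+$, one checks that $d_1\cdots d_m=\cD_{\bar 0}$ and $d_{m+1}^{-1}\cdots d_r^{-1}=\cD_{\bar 1}^{-1}$, whence $\cR=d_1^{s_1}\cdots d_r^{s_r}=\cD_{\bar 0}\cD_{\bar 1}^{-1}$ for $\bm s=\bm s_+$.

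For the inductive step I would prove that the assignment \eqref{eq wronski coeff} is compatible with an elementary transposition $\bm s\mapsto\bm s^{[i]}$, where $s_i\neq s_{i+1}$. For every $j\neq i,i+1$ neither $s_j$ nor the numbers $\bm s_j^{\pm}$ are altered by the swap, so the factor $d_j$ is literally unchanged; it therefore suffices to match the two factors at positions $i,i+1$. Let $A,B,C,E$ denote the Wronskians built from the common base flag by adjoining, respectively, nothing, one further even vector, one further odd vector, and one of each. Tracking the indices shows that the two coefficients for $\bm s$ are $\ln'(B/A)$ and $\ln'(C/A)$, while those for $\bm s^{[i]}$ are $\ln'(E/B)$ and $\ln'(E/C)$. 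The required equality of the two-factor products is then exactly an instance of \eqref{eq:switch}: using the three-term Wronskian identity $\Wr(\Wr(g,f),\Wr(g,h))=\Wr(g)\,\Wr(g,f,h)$ with $g$ the base flag, $f$ the extra even and $h$ the extra odd vector, one gets $B'C-BC'=\pm AE$, and a short computation yields $\ln'(E/B)=\ln'(C/A)+\ln'\!\big(\ln'(B/A)-\ln'(C/A)\big)$ and $\ln'(E/C)=\ln'(B/A)+\ln'\!\big(\ln'(B/A)-\ln'(C/A)\big)$, which are precisely the relations \eqref{eq:switch} (the sign in $\pm AE$ is killed by $\ln'$). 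Hence the switch applied to $d_i^{s_i}d_{i+1}^{s_{i+1}}$ returns exactly the factors dictated by \eqref{eq wronski coeff} for $\bm s^{[i]}$.

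Since every $\bm s\in S_{m|n}$ is obtained from $\bm s_+$ by a chain of adjacent transpositions, and each such move preserves $\cR$ while carrying the Wronskian factorization for $\bm s$ to that for $\bm s^{[i]}$, induction gives $\cR=d_1^{s_1}\cdots d_r^{s_r}$ for all $\bm s$; as the $v_j,u_j$ lie in $\cK$, each $d_i=\pa-f_i$ lies in $\cK[\pa]$, so this is a genuine complete $\bm s$-factorization. I expect the inductive step to be the main obstacle: beyond verifying the three-term Wronskian identity, one must bookkeep carefully how $\bm s_i^{+},\bm s_i^{-}$ shift under the transposition so as to identify $A,B,C,E$ correctly in both cases $s_i=1,s_{i+1}=-1$ and $s_i=-1,s_{i+1}=1$.
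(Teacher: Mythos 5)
Your proposal is correct, and there is nothing in the paper to compare it against line by line: the proposition is quoted from \cite{HMVY19} with no argument given, so your write-up supplies the missing proof, along what is essentially the route of \cite{HMVY19}. I checked both pillars. Base case: for $\bm s_+$ one has $\bm s_i^+=m-i$, $\bm s_i^-=0$ for $i\lle m$ and $\bm s_i^+=0$, $\bm s_i^-=i-m-1$ for $i>m$, so the classical Wronskian factorizations give $d_1\cdots d_m=\cD_{\bar 0}$ and $d_r\cdots d_{m+1}=\cD_{\bar 1}$, hence $d_1^{s_1}\cdots d_r^{s_r}=\cD_{\bar 0}\cD_{\bar 1}^{-1}=\cR$. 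Inductive step: for $j\ne i,i+1$ the data $(s_j,\bm s_j^+,\bm s_j^-)$ is untouched by the swap, so $d_j$ is literally unchanged; with your $A,B,C,E$ the two coefficients are $\ln'(B/A),\ln'(C/A)$ when the local parity is $(1,-1)$ and $\ln'(E/B),\ln'(E/C)$ when it is $(-1,1)$, in both orientations of the swap. The three-term identity $BC'-B'C=AE$ gives $\ln'(B/A)-\ln'(C/A)=-AE/(BC)$, which is nonzero because $A,E\ne 0$ (linear independence of the kernel elements plus analyticity), so the hypothesis $a\ne b$ of \eqref{eq:switch} holds, and your two displayed relations are exactly the relations in \eqref{eq:switch}. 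The bookkeeping you flagged as the main risk does close up correctly.

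One caveat concerns your final sentence. The assertion that the $v_j,u_j$ lie in $\cK$ is not implied by the paper's definition of an $(m|n)$-rational pseudo-differential operator: complete factorability of $\cD_{\bar 0},\cD_{\bar 1}$ over $\cK$ does not force their kernels into $\cK$. For instance $\pa^2-1=(\pa-1)(\pa+1)$ is completely factorable over $\C(x)$, but its kernel is spanned by $e^{x},e^{-x}$, and the flag generated by $e^x+e^{-x},e^x$ produces non-rational coefficients $f_i$. So the statement that the $f_i$ lie in $\cK$ --- i.e., that one gets a complete $\bm s$-factorization in the strict sense of the definition --- requires the implicit standing assumption (in force in \cite{HMVY19} and in every use of this proposition in the paper) that the superkernel $W$ is a space of rational functions. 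Your argument does prove the operator identity $\cR=d_1^{s_1}\cdots d_r^{s_r}$ unconditionally, over any differential field containing $W$; only the rationality claim at the end rests on that extra hypothesis, which is a looseness inherited from the statement itself rather than a flaw in your proof.
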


By Proposition \ref{prop flag factor}, we have maps $\varpi: \sF(W) \to \mc F(\cR)$ and $\varpi^{\bm s}: \mathscr F^{\bm s}(W) \to \mc F^{\bm s}(\cR)$. The map $\varpi^{\bm s}$ is explicitly given as follows. Let $\mathscr F$ be a superflag generated by a basis which is associated to $\{v_1,\dots, v_m\}$, $\{u_1,\dots, u_n\}$, and $\bm s$. Then
\[
\varpi^{\bm s}(\mathscr F)=(\pa-f_1 )^{s_1}(\pa-f_2 )^{s_2}\cdots(\pa-f_{r})^{s_{r}},
\]
where $f_i$ are given by \eqref{eq wronski coeff}. The map $\varpi$ is the disjoint union of maps $\varpi^{\bm s}$ over all distinct parities $\bm s$.
\begin{lem}\label{lem:A-bij-flag-factor}
The maps $\varpi$ and $\varpi^{\bm s}$ are bijections.\qed
\end{lem}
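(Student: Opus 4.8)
The plan is to exhibit an explicit inverse to each $\varpi^{\bm s}$ and then conclude the statement for $\varpi$ by taking the disjoint union over parity sequences. Since $\varpi$ is by definition the disjoint union $\bigsqcup_{\bm s}\varpi^{\bm s}$, and since both $\sF(W)=\bigsqcup_{\bm s}\sF^{\bm s}(W)$ and $\mc F(\cR)=\bigsqcup_{\bm s}\mc F^{\bm s}(\cR)$ respect the parity decomposition, it suffices to show that each $\varpi^{\bm s}\colon \sF^{\bm s}(W)\to \mc F^{\bm s}(\cR)$ is a bijection; the full statement for $\varpi$ follows immediately. So I would fix $\bm s\in S_{m|n}$ throughout the main argument.

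First I would address \emph{surjectivity}. Given a complete $\bm s$-factorization $\cR=d_1^{s_1}\cdots d_r^{s_r}$ with $d_i=\pa-a_i$, I want to reconstruct a superflag. The idea is to build, out of the factors, the two towers of kernels: group the even factors (those with $s_i=1$) to produce a chain of monic differential operators annihilating a nested family of subspaces of $V=\ker\cD_{\bar 0}$, and likewise the odd factors for $U=\ker\cD_{\bar 1}$. Concretely, the partial products of the even factors, read in order, determine monic differential operators of increasing order whose kernels give a full flag in $V$; the coefficient formula \eqref{eq wronski coeff} expresses each $a_i$ (for $s_i=1$) as a logarithmic derivative of a ratio of Wronskians, so by the standard fact that a monic differential operator with one-dimensional-extension kernels is recovered from Wronskians of a basis, one recovers bases $\{v_1,\dots,v_m\}$ of $V$ and $\{u_1,\dots,u_n\}$ of $U$ realizing the prescribed factors. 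The parity sequence $\bm s$ then dictates, via the association rule preceding \eqref{eq wronski coeff}, a homogeneous basis of $W$ and hence a superflag $\mathscr F\in\sF^{\bm s}(W)$ with $\varpi^{\bm s}(\mathscr F)=\cR=d_1^{s_1}\cdots d_r^{s_r}$.

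For \emph{injectivity} I would use the remark immediately preceding Proposition \ref{prop flag factor}: the coefficients $f_i$ depend only on the full flags of $V$ and of $U$ generated by $\{v_j\}$ and $\{u_j\}$, not on the particular bases. Thus $\varpi^{\bm s}(\mathscr F)$ determines, and is determined by, the pair of full flags, i.e.\ by the point of $\sF(W_{\bar 0})\times\sF(W_{\bar 1})\cong \sF^{\bm s}(W)$. If two superflags $\mathscr F,\mathscr F'\in\sF^{\bm s}(W)$ have $\varpi^{\bm s}(\mathscr F)=\varpi^{\bm s}(\mathscr F')$, then matching the factors $\pa-f_i$ with $s_i=1$ (respectively $s_i=-1$) forces the corresponding nested Wronskian ratios to agree, and by induction on $i$ the successive one-dimensional extensions of the even (respectively odd) flag coincide, so $\mathscr F=\mathscr F'$.

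The main obstacle I expect is the bookkeeping in surjectivity: one must check that the Wronskian data extracted from an \emph{arbitrary} complete $\bm s$-factorization is consistent, i.e.\ that the nested ratios actually come from honest full flags in $V$ and $U$ with $\ker\cD_{\bar 0}=V$, $\ker\cD_{\bar 1}=U$ of the correct dimensions $m,n$. This relies on the completely factorable hypothesis built into the definition of an $(m|n)$-rational pseudo-differential operator (guaranteeing $\dim\ker\cD_{\bar 0}=m$ and $\dim\ker\cD_{\bar 1}=n$) together with the minimality condition $\ker\cD_{\bar 0}\cap\ker\cD_{\bar 1}=0$ from the preceding Proposition, which ensures the reconstructed even and odd flags live in complementary subspaces and assemble into a genuine superflag. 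Once this consistency is verified, the explicit inverse $\mathscr F\mapsto(\text{flags of }V,U)$ and back is manifestly two-sided, completing the proof.
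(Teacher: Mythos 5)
The paper itself states this lemma without proof (it is imported, like Proposition \ref{prop flag factor}, from \cite{HMVY19}), so your proposal has to stand on its own; it does not. The genuine gap is in your surjectivity argument, and it is the heart of the matter, not bookkeeping. Your reconstruction rests on the claim that ``the partial products of the even factors, read in order, determine monic differential operators of increasing order whose kernels give a full flag in $V$.'' This is false whenever the parities interleave, because factors of opposite parity do not commute: an individual factor $d_i$ with $s_i=1$ is not in general a right factor of $\cD_{\bar 0}$, and its kernel need not lie in $V$ at all. Take $m=n=1$, $\bm s=(-1,1)$, $\cR=(\pa-a_1)^{-1}(\pa-a_2)$. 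By \eqref{eq:switch}, the minimal fractional factorization is $\cD_{\bar 0}\cD_{\bar 1}^{-1}$ with $\cD_{\bar 0}=\pa-a_2+\ln'(a_1-a_2)$, so $V$ is spanned by $e^{\int a_2}/(a_1-a_2)$ while $\ker(\pa-a_2)$ is spanned by $e^{\int a_2}$; these agree only when $a_1-a_2$ is constant. So the ``towers of kernels'' never get off the ground, and the ``consistency'' you defer to your last paragraph is precisely the content of the lemma. Moreover, your appeal to \eqref{eq wronski coeff} at that point is circular: that formula is the \emph{definition} of $\varpi^{\bm s}$ on flags, whereas for surjectivity you must show that the coefficients of an \emph{arbitrary} complete $\bm s$-factorization can be brought to that form.

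The missing idea is the one the paper sets up immediately before this subsection: the exchange moves \eqref{eq:switch} give a canonical bijection between complete factorizations for any two parity sequences, so one reduces to $\bm s_+$. There the factorization reads $(d_1\cdots d_m)(d_r\cdots d_{m+1})^{-1}$, and uniqueness of the minimal fractional factorization forces $d_1\cdots d_m=\cD_{\bar 0}$ and $d_r\cdots d_{m+1}=\cD_{\bar 1}$; the classical correspondence between complete factorizations of a completely factorable operator and full flags in its kernel then handles $\cD_{\bar 0}$ and $\cD_{\bar 1}$ separately, and one checks that the exchange-move bijection $\mc F^{\bm s}(\cR)\cong\mc F^{\bm s_+}(\cR)$ intertwines $\varpi^{\bm s}$ and $\varpi^{\bm s_+}$ under the natural identifications $\sF^{\bm s}(W)\cong\sF(W_{\bar 0})\times\sF(W_{\bar 1})\cong\sF^{\bm s_+}(W)$. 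Your injectivity sketch has a smaller gap of the same flavor: equality of the $f_i$ gives equality of Wronskian ratios only up to multiplicative constants, and to pass from ``two subspaces have proportional Wronskians'' to ``the subspaces coincide'' you need injectivity of the Wronskian map on decomposable vectors, which is exactly what Lemma \ref{lem:dual} provides (the Wronskians of the codimension-one subsets of a basis span the dual kernel, hence are linearly independent). That observation, or again the reduction to $\bm s_+$, would close injectivity; but as written neither half of your argument is complete.
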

Thus the set of complete factorizations of $\cR$ is canonically identified with the variety of full superflags of $W$. 

\subsection{Symmetric differential operators and self-dual spaces}
We define the anti-involution $*$ on $\cK[\pa]$ by
\[
\Big(\sum_{j=0}^N a_j \pa^j\Big)^* = \sum_{j=0}^N (-\pa)^j a_j.
\]
We call a differential operator $\cD\in\cK[\pa]$ \emph{symmetric} if $\cD^*=\cD$ or if $\cD^*=-\cD$. We do not distinguish differential operators up to a scalar multiple.

Let $V$ be a finite-dimensional vector space of functions. 
We will always assume that the functions are sufficiently smooth.
For example, one can assume the functions are holomorphic in some domain.  In most cases, we will have rational functions.

Denote by $\cD_V$ the monic differential operator whose kernel is $V$. Denote by $V^*$ the kernel of $\cD_V^*$. 

Let $A=(a_{ij})_{i,j=1}^N$ be an $N\times N$ matrix with possibly noncommuting entries. Define the \emph{row determinant} of $A$ by
\[
\mathrm{rdet}\, A:=\sum_{\sigma\in\fkS_N}(-1)^{\sigma}a_{1\sigma(1)}a_{2\sigma(2)}\cdots a_{N\sigma(N)}.
\]

Let $v_1,\dots,v_N$ be a basis of $V$, then we have
\beq\label{eq:DV}
\cD_V=\frac{1}{\Wr(v_1,\dots,v_N)}\mathrm{rdet}\begin{pmatrix} v_1 & v_1' & \dots & v_1^{(N)}\\
v_2 & v_2' & \dots & v_2^{(N)}\\
\vdots & \vdots & \ddots & \vdots \\
v_N & v_N' & \dots & v_N^{(N)}\\
1 & \pa & \dots & \pa^N
\end{pmatrix}.
\eeq

The space $V$ is called \emph{self-dual} if there exists some function $f(x)$ such that
\begin{equation}\label{f}
f^2(x)\cD_V^*=\cD_V f^2(x)\quad  {\text{or}} \quad f^2(x)\cD_V^*=-\cD_V f^2(x). 
\end{equation}
In other words, the space $V$ is self-dual if $V=f^2(x)V^*$ for some function $f(x)$.

The space $V$ is called \emph{normalized self-dual} if $\cD_V$ is symmetric. In other words, $V$ is normalized self-dual if $V=V^*$.

If $V$ is self-dual and $f(x)$ is as in \eqref{f} then $V/f(x)$ is normalized self-dual.

\medskip

A function $f(x)\in\cK$ is called \emph{monic} if it can be written as a ratio of two monic polynomials.

Let $V$ be a space with a basis $v_1,\dots,v_N$. Assume, $\Wr(v_1,\dots,v_N)$ is a rational function. Denote
the unique monic rational function which is a scalar multiple of $\Wr(v_1,\dots,v_N)$ by $\Wr(V)$. 

\begin{lem}\label{lem:dual}
Let $ v_1,\dots,v_N $ be a basis of $V$, then a basis of $V^*$ is given by
\[
\frac{\Wr(v_1,\dots,\widehat{v_i},\dots,v_N)}{\Wr(v_1,\dots,v_i,\dots,v_N)},\qquad 1\lle i\lle N,
\]
where the symbol $\widehat{v_i}$ means that $v_i$ is skipped. 

In particular, if $V$ is normalized self-dual, then $\Wr(V)=1$. If $V$ is self-dual and $f$ as in \eqref{f} is monic then
$f^N=\Wr(V)$.
\end{lem}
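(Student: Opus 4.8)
# Proof Proposal for Lemma \ref{lem:dual}

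\textbf{The plan is to} establish the basis formula for $V^*$ directly from the definition of $\cD_V^*$ via the row-determinant expression \eqref{eq:DV}, and then extract the two normalization statements as corollaries by tracking the Wronskian of the proposed dual basis.

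\emph{Step 1: The dual basis formula.} First I would set $w_i = \Wr(v_1,\dots,\widehat{v_i},\dots,v_N)/\Wr(v_1,\dots,v_N)$ and verify that each $w_i\in\ker\cD_V^*$. The cleanest route is to use the Lagrange-adjoint identity: for functions $f,g$ one has $g\,(\cD_V^* f) - f\,(\cD_V g) = \frac{d}{dx}\,\mathcal B(f,g)$ for a suitable bilinear concomitant $\mathcal B$ (the bilinear form built from $f,g$ and their derivatives up to order $N-1$). Since $\cD_V v_j = 0$ for each $j$, integrating/pairing against the $v_j$ shows that $\cD_V^* f = 0$ precisely when $f$ is a linear combination of the $N$ functions $w_i$. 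Concretely, I would argue that $\dim\ker\cD_V^* = N = \ord\cD_V^*$, so it suffices to show the $w_i$ lie in $\ker\cD_V^*$ and are linearly independent. Linear independence follows because their Wronskian is a nonzero rational expression (computed in Step 3), and membership in the kernel follows from a Cramer-type expansion: $w_i$ is (up to sign and the common Wronskian denominator) the cofactor obtained by deleting $v_i$, and applying $\cD_V^*$ collapses the resulting determinant by the column-dependence relations encoded in \eqref{eq:DV}.

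\emph{Step 2: The two normalizations.} For the normalized self-dual case, $V=V^*$ means $\cD_V^*=\cD_V$ (up to sign), so the basis $\{w_i\}$ of $V^*$ spans the same space $V$. The key observation is that the constant term (the coefficient of $\pa^0$) of a symmetric monic operator, read off from \eqref{eq:DV}, forces the logarithmic-derivative data to balance, and I would compute $\Wr(V^*)$ explicitly to conclude $\Wr(V)=1$. For the self-dual case with $V=f^2 V^*$ and $f$ monic, I would apply the identity $\Wr(gh_1,\dots,gh_N)=g^N\,\Wr(h_1,\dots,h_N)$ with $g=f^2$ — but more efficiently, use that $V^* = f^{-2}V$ implies $\Wr(V^*)=f^{-2N}\Wr(V)$, combined with the general relation between $\Wr(V)$ and $\Wr(V^*)$ derived in Step 3.

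\emph{Step 3: The Wronskian identity, which is the crux.} The main obstacle will be the key computation
\[
\Wr(w_1,\dots,w_N)=\frac{(-1)^{N(N-1)/2}}{\Wr(v_1,\dots,v_N)}\,,
\]
i.e. that the Wronskian of the dual basis is (up to sign) the reciprocal of $\Wr(V)$. This is a Jacobi-type identity for Wronskians of complementary minors, and I expect it to require the most care: one proves it either by the classical Jacobi determinant identity applied to the Wronskian matrix and its adjugate, or by an induction on $N$ exploiting that $\cD_{V^*}=\cD_V^*$ and that conjugating \eqref{eq:DV} by $*$ reverses the role of leading and trailing coefficients. Granting this identity, both normalization claims fall out immediately: in the normalized case $V=V^*$ gives $\Wr(V)=\Wr(V^*)=\pm\Wr(V)^{-1}$, and since $\Wr(V)$ is monic this forces $\Wr(V)=1$; in the self-dual case $\Wr(V^*)=f^{-2N}\Wr(V)$ together with $\Wr(V^*)=\pm\Wr(V)^{-1}$ yields $f^{2N}=\pm\Wr(V)^{-2}$, and monicity of $f$ and $\Wr(V)$ pins down $f^N=\Wr(V)$ with the correct sign.
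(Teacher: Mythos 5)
Your proposal is correct and takes essentially the same route as the paper, whose proof consists entirely of citations: \cite[Theorem 3.14]{MTV08} for the dual-basis formula (your Step 1) and the standard Wronskian identities \cite[Lemmas A.2, A.5]{MV04} --- the scaling identity $\Wr(gh_1,\dots,gh_N)=g^N\,\Wr(h_1,\dots,h_N)$ together with the complementary-minor identity that is your Step 3 --- for the normalization statements. One small algebra slip at the end: combining $\Wr(V^*)=f^{-2N}\Wr(V)$ with $\Wr(V^*)=\pm\Wr(V)^{-1}$ gives $f^{2N}=\pm\Wr(V)^{2}$, not $\pm\Wr(V)^{-2}$; it is this corrected relation that, by monicity of $f$ and $\Wr(V)$, yields $f^N=\Wr(V)$.
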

\begin{proof}
The first statement follows from \cite[Theorem 3.14]{MTV08} and the second statement follows from the first statement and the standard Wronskian identities, see \cite[Lemmas A.2, A.5]{MV04}.
\end{proof}

Let $V$ be a normalized self-dual space of functions.
By Lemma \ref{lem:dual}, we have a linear isomorphism
\[
\La^{N-1}(V)\to V^*,\qquad v_{1}\wedge \dots\wedge \widehat{v_i}\wedge\dots \wedge v_N\mapsto \Wr(v_{1},\dots,\widehat{v_i},\dots, v_N),
\]
where $\La^{N-1}(V)$ is the $(N-1)$-st exterior power of $V$.

Define a nondegenerate bilinear form
\[
(\cdot,\cdot):\ V\otimes V\to \C, \quad v\otimes w\mapsto\Wr(v,w_1,\dots,w_{N-1}), \quad {\text{if}} \quad w=\Wr(w_1, \dots ,w_{N-1}), \quad w_i\in V.
\]
We call this bilinear form the \emph{canonical bilinear form}.

A basis $\{v_1,\dots,v_N\}$ is called a \emph{Witt basis} if
\beq\label{eq:witt}
v_{N+1-i}=\Wr(v_1,\dots,\widehat{v_i},\dots,v_N),\qquad 1\lle i\lle N.
\eeq
\begin{prop}[{\cite[Theorem 6.4]{MV04}}]
If $V$ is a normalized self-dual space, then $V$ has a Witt basis.
\end{prop}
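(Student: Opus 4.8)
The plan is to reduce the existence of a Witt basis to the classical existence of an anti-diagonal normal form for the canonical bilinear form $(\cdot,\cdot)$, and then to fix the scaling constants by an elementary computation over $\C$.

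First I would record the basic compatibility between the Wronskian and the canonical form. Let $\{v_1,\dots,v_N\}$ be any basis of $V$ and set $c=\Wr(v_1,\dots,v_N)$, which is a nonzero constant since $\Wr(V)=1$ by Lemma \ref{lem:dual}. Put $w_j=\Wr(v_1,\dots,\widehat{v_j},\dots,v_N)$; by Lemma \ref{lem:dual} these lie in $V^*=V$ and form a basis. The definition of the canonical form gives $(v_i,w_j)=\Wr(v_i,v_1,\dots,\widehat{v_j},\dots,v_N)$, which vanishes for $i\ne j$ (a repeated entry) and equals $(-1)^{j-1}c$ for $i=j$. Thus each $w_j$ is, up to a scalar, the $(\cdot,\cdot)$-dual of $v_j$. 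Since $(\cdot,\cdot)$ is nondegenerate, this has two consequences: if the Gram matrix is anti-diagonal, $(v_i,v_k)=b_i\,\delta_{i+k,N+1}$ with all $b_i\ne 0$, then $w_j=\big((-1)^{j-1}c/b_j\big)\,v_{N+1-j}$; conversely the Witt condition \eqref{eq:witt}, namely $w_j=v_{N+1-j}$, holds if and only if the Gram matrix is anti-diagonal with $b_j=(-1)^{j-1}c$ for all $j$.

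Second I would invoke the standard normal form. Recall that $(\cdot,\cdot)$ is nondegenerate and is symmetric when $N$ is odd and skew-symmetric when $N$ is even. Over $\C$ every such form admits a basis $\{v_1,\dots,v_N\}$ whose Gram matrix is anti-diagonal, $(v_i,v_k)=b_i\,\delta_{i+k,N+1}$ with $b_i\ne 0$ (a hyperbolic, respectively symplectic, basis, reordered along the anti-diagonal). For such a basis Step 1 shows $w_j$ is already proportional to $v_{N+1-j}$, so only the scalars need to be corrected. Third, replace each $v_i$ by $\mu_i v_i$ with $\mu_i\in\C^\times$; this preserves the zero pattern of the Gram matrix while sending $c\mapsto(\prod_i\mu_i)c$ and $b_j\mapsto\mu_j\mu_{N+1-j}b_j$. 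By Step 1 the result is a Witt basis if and only if $\mu_j\mu_{N+1-j}b_j=(-1)^{j-1}(\prod_i\mu_i)\,c$ for all $j$. The symmetry type forces $b_j=(-1)^{N-1}b_{N+1-j}$, which makes the equations for the pair $(j,N+1-j)$ compatible; writing $P=\prod_i\mu_i$, the system reduces to $\mu_j\mu_{N+1-j}=(-1)^{j-1}Pc/b_j$, and taking the product over $j$ yields a single polynomial equation for $P$ solvable over $\C$. One then chooses each pair $(\mu_j,\mu_{N+1-j})$, and the middle scalar $\mu_{(N+1)/2}$ when $N$ is odd, to satisfy the equations. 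The resulting basis satisfies \eqref{eq:witt}.

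The conceptual core, which I would set up most carefully, is the compatibility identity of Step 1: it converts the nonlinear, differential relations \eqref{eq:witt} into the purely linear-algebraic requirement that $(\cdot,\cdot)$ be anti-diagonal in the basis, after which existence is classical. I expect the genuinely delicate point to be the scalar correction in Step 3, where I must verify that the system for the $\mu_i$ is consistent — this uses precisely the symmetry type of the form, hence the parity of $N$ — and solvable over $\C$. I would also treat separately the degenerate small cases $N=1$, where $\Wr(V)=1$ forces $V=\langle 1\rangle$ and $v_1=1$, and $N=2$, where \eqref{eq:witt} is automatically satisfied by any ordered basis.
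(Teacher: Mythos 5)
Your Steps 1 and 3 are correct, and Step 1 is essentially the same computation the paper itself performs in Lemma \ref{normalize}: by Lemma \ref{lem:dual} and the definition of the canonical form one gets $(v_i,w_j)=(-1)^{j-1}c\,\delta_{ij}$ for $w_j=\Wr(v_1,\dots,\widehat{v_j},\dots,v_N)$ and $c=\Wr(v_1,\dots,v_N)$, and nondegeneracy then converts the Witt condition \eqref{eq:witt} into the statement that the Gram matrix is anti-diagonal with entries $b_j=(-1)^{j-1}c$. Your scalar correction in Step 3 is also consistent: compatibility for the pair $(j,N+1-j)$ is exactly the relation $b_j=(-1)^{N-1}b_{N+1-j}$, and the equation for $P=\prod_i\mu_i$ reduces to a solvable power equation once the degenerate cases $N\lle 2$ are set aside, as you do.

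The genuine gap is the input to Step 2, and it is a circularity. The fact you ``recall'' --- that the canonical form is symmetric for $N$ odd and skew-symmetric for $N$ even --- is not available at this point in the paper: it is precisely the Corollary stated immediately after this Proposition, whose proof is omitted as immediate \emph{from} the existence of a Witt basis, i.e., from the very statement you are proving. Nor is it classical linear algebra: the canonical form is built from the Wronskian isomorphism $\La^{N-1}(V)\to V^*=V$, and a priori it is merely some nondegenerate bilinear form; for a general nondegenerate form no basis with anti-diagonal Gram matrix exists (on $\C^2$ the form with matrix $\left(\begin{smallmatrix}0&1\\-1&1\end{smallmatrix}\right)$ has only one isotropic line), so the appeal to hyperbolic/symplectic normal forms cannot even begin without first knowing the symmetry type. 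Closing this gap would require an independent proof that the form is $(-1)^{N-1}$-symmetric (say, via the Wronskian identities governing the dual of a dual basis), and that is essentially the whole content of the theorem. The paper's route is different and non-circular: it invokes the construction of \cite[Theorem 6.4]{MV04}, which produces a Witt basis directly from a basis adapted to the filtration by degrees (for spaces of polynomials), and observes that the same argument applies to arbitrary spaces of functions if degrees are replaced by vanishing orders at some point $z\in\C$; the symmetry of the canonical form is then deduced afterwards, as the Corollary.
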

\begin{proof}
The statement is given in \cite{MV04} for spaces of polynomials. The proof applies for arbitrary spaces of functions if instead of degrees one uses vanishing orders at some $z\in\C$.
\end{proof}

\begin{cor}
If $V$ is a normalized self-dual space, then the canonical form is symmetric if $N$ is odd and skew-symmetric if $N$ is even. \qed
\end{cor}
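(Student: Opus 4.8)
The plan is to deduce symmetry or skew-symmetry of the canonical form directly from the existence of a Witt basis, which the preceding proposition guarantees for any normalized self-dual space $V$. Let $\{v_1,\dots,v_N\}$ be a Witt basis, so by \eqref{eq:witt} we have $v_{N+1-i}=\Wr(v_1,\dots,\widehat{v_i},\dots,v_N)$ for all $1\lle i\lle N$. The first step is to compute the canonical form on pairs of basis vectors. Unwinding the definition, $(v_i,v_j)=\Wr(v_i,w_1,\dots,w_{N-1})$ where $v_j=\Wr(w_1,\dots,w_{N-1})$; using the Witt relation to write $v_j$ as a Wronskian of the complementary basis vectors, I expect to find that $(v_i,v_j)$ is, up to sign, the Wronskian $\Wr(v_1,\dots,v_N)$ of the full basis when the indices are complementary ($i+j=N+1$) and zero otherwise. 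Thus the Gram matrix of the canonical form in the Witt basis is anti-diagonal.

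The heart of the argument is to track the sign picked up when we reorder the columns of a Wronskian determinant. Concretely, I would show that $(v_i,v_{N+1-i})=\pm\Wr(V)=\pm 1$ (using $\Wr(V)=1$ from Lemma \ref{lem:dual}), and that the sign is exactly the signature of moving the $i$-th vector past the others in the Wronskian, which is $(-1)^{i-1}$ or a similar parity-dependent factor. The symmetry question then reduces to comparing $(v_i,v_{N+1-i})$ with $(v_{N+1-i},v_i)$: the form is symmetric precisely when these agree and skew when they differ by a sign. Since both entries are computed from the same full-basis Wronskian but with the two complementary vectors in interchanged roles, the relative sign is a single transposition-type factor that depends only on the parity of $N$. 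A clean bookkeeping of this factor should yield $(v_{N+1-i},v_i)=(-1)^{N-1}(v_i,v_{N+1-i})$, so that the form is symmetric when $N$ is odd and skew-symmetric when $N$ is even.

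The main obstacle I anticipate is precisely this sign computation: one must be careful that the canonical form is well-defined (independent of the choice of $w_1,\dots,w_{N-1}$ representing $w$), which follows because $w$ determines the flag and hence the relevant Wronskians up to the normalizations already noted in the excerpt, and then that the reordering sign is computed consistently. A convenient route is to use the standard Laplace/Wronskian identities (as in \cite[Lemmas A.2, A.5]{MV04}) to express $\Wr(v_i, v_{N+1-i}, \dots)$ in terms of $\Wr(V)$; these identities already underlie Lemma \ref{lem:dual}, so the corollary follows by the same machinery. Alternatively, one observes abstractly that the linear isomorphism $\La^{N-1}(V)\to V^*\cong V$ together with the pairing $V\otimes \La^{N-1}(V)\to\La^N(V)\cong\C$ is, up to the fixed identification, the wedge product, whose symmetry on $\La^1\otimes\La^{N-1}$ is governed by the Koszul sign $(-1)^{1\cdot(N-1)}=(-1)^{N-1}$. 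This gives the result with minimal computation and makes transparent why only the parity of $N$ matters.
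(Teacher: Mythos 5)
Your main argument is essentially the paper's own (implicit) proof: the corollary is stated as an immediate consequence of the Witt-basis proposition, and your computation --- that in a Witt basis the Gram matrix is anti-diagonal with entries $(v_i,v_{N+1-i})=(-1)^{i-1}\Wr(v_1,\dots,v_N)$, so that $(v_{N+1-i},v_i)=(-1)^{N-1}(v_i,v_{N+1-i})$ and the form is symmetric exactly when $N$ is odd --- is the same sign bookkeeping the paper records explicitly in the proof of Lemma \ref{normalize}. One small caution: your ``minimal computation'' Koszul-sign alternative is not actually self-contained, since comparing $\Wr\bigl(v\wedge\phi^{-1}(w)\bigr)$ with $\Wr\bigl(w\wedge\phi^{-1}(v)\bigr)$ requires knowing that $\phi$ is self-adjoint for the wedge pairing (an arbitrary isomorphism $\La^{N-1}(V)\to V$ would produce an arbitrary nondegenerate form), and that self-adjointness is precisely what the Witt-basis computation establishes --- but since this is offered only as an aside, your proposal stands.
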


\begin{lem}\label{normalize}
Let $V$ be a normalized self-dual space. Let $v_1,\dots,v_N\in V$ satisfy $(v_i,v_j)=(-1)^{i-1}\delta_{i+j,N+1}$.
Let $a\in\C^\times$ be such that $a^{N-2}=\Wr(v_1,\dots,v_N)$.

Then $\{v_1/a,v_2/a,\dots,v_N/a\}$ is a Witt basis of $V$.
\end{lem}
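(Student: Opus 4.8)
The plan is to use the nondegeneracy of the canonical bilinear form to convert the desired Witt identity \eqref{eq:witt} into a family of scalar equalities, each of which is read off from Wronskian antisymmetry together with the hypothesis on the Gram matrix. First I would record the basic setup. The hypothesis $(v_i,v_j)=(-1)^{i-1}\delta_{i+j,N+1}$ says the Gram matrix of $v_1,\dots,v_N$ is anti-diagonal with nonzero entries, hence invertible; since the form is nondegenerate this forces $v_1,\dots,v_N$ to be linearly independent, and as $\dim V=N$ they form a basis of $V$. Because $V$ is normalized self-dual, Lemma \ref{lem:dual} gives $\Wr(V)=1$, so $c:=\Wr(v_1,\dots,v_N)$ is a nonzero constant, and by assumption $c=a^{N-2}$.

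Next I would identify the relevant Wronskian minors as elements of $V$ and pair them against the basis. For each $i$ set $\tl v_i:=\Wr(v_1,\dots,\widehat{v_i},\dots,v_N)$. By Lemma \ref{lem:dual} the functions $\tl v_i/c$ form a basis of $V^*$, and since $V$ is normalized self-dual we have $V^*=V$, so $\tl v_i\in V$. As $\tl v_i$ is exhibited as the Wronskian of the $N-1$ elements $v_1,\dots,\widehat{v_i},\dots,v_N$ of $V$, the definition of the canonical form yields $(v_k,\tl v_i)=\Wr(v_k,v_1,\dots,\widehat{v_i},\dots,v_N)$. If $k\neq i$ the argument $v_k$ occurs twice and the Wronskian vanishes; if $k=i$, moving $v_i$ from the front to its natural slot costs $i-1$ transpositions, and since the Wronskian is alternating in its arguments we get $(v_i,\tl v_i)=(-1)^{i-1}\Wr(v_1,\dots,v_N)=(-1)^{i-1}c$. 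Thus $(v_k,\tl v_i)=(-1)^{i-1}c\,\delta_{ki}$. On the other hand the hypothesis gives $(v_k,c\,v_{N+1-i})=c\,(-1)^{k-1}\delta_{ki}$, and the two expressions agree for every $k$. Since $\{v_k\}$ is a basis and the form is nondegenerate, this forces $\tl v_i=c\,v_{N+1-i}$, that is $\Wr(v_1,\dots,\widehat{v_i},\dots,v_N)=a^{N-2}\,v_{N+1-i}$.

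Finally I would rescale: by multilinearity of the Wronskian in its $N-1$ arguments, $\Wr(v_1/a,\dots,\widehat{v_i/a},\dots,v_N/a)=a^{-(N-1)}\Wr(v_1,\dots,\widehat{v_i},\dots,v_N)=a^{-(N-1)}a^{N-2}v_{N+1-i}=v_{N+1-i}/a$, which is exactly the Witt condition \eqref{eq:witt} for the basis $\{v_1/a,\dots,v_N/a\}$. The main obstacle is the bookkeeping around the canonical form in the middle step, namely verifying that the minor $\Wr(v_1,\dots,\widehat{v_i},\dots,v_N)$ genuinely lands in $V$ (so that pairing it against $v_k$ is legitimate) and keeping the sign $(-1)^{i-1}$ consistent between the two computations; once those are pinned down, the equality of functions follows immediately from nondegeneracy and the final rescaling is routine.
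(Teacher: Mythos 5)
Your proposal is correct and follows essentially the same route as the paper's own proof: pair the minor $\Wr(v_1,\dots,\widehat{v_i},\dots,v_N)$ against the basis via the canonical form, use nondegeneracy to conclude it equals $a^{N-2}v_{N+1-i}$, and then rescale by $a$ to get the Witt condition. The extra bookkeeping you supply (linear independence from the anti-diagonal Gram matrix, membership of the minor in $V=V^*$ via Lemma \ref{lem:dual}, and the explicit final rescaling) just fills in steps the paper leaves implicit.
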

\begin{proof}
Consider $w=\Wr(v_1,\dots,\widehat{v_i},\dots,v_N)$. Then 
by the definition of the canonical form $(v_j,w)=0$, $j\neq i$, and $(v_i,w)=(-1)^{i-1}\Wr(v_1,\dots,v_N)=(-1)^{i-1}a^{N-2}$. Since the canonical form is nondegenerate we obtain $w=a^{N-2} v_{N+1-i}$. The lemma follows.
\end{proof}

Given a subspace $\mathscr V$ of $V$, denote by $\mathscr V^\perp$ its orthogonal complement with respect to the canonical bilinear form.

\begin{lem}\label{lem:wr-orho-complement}
If $V$ is a normalized self-dual space, then $\Wr(\mathscr V)=\Wr(\mathscr V^\perp)$ for all $\mathscr V\subset V$.
\end{lem}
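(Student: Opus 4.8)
The plan is to make the orthogonal complement explicit in terms of Wronskians and then reduce the statement to a single Wronskian identity. Write $N=\dim V$ and $k=\dim\mathscr V$, fix a basis $v_1,\dots,v_k$ of $\mathscr V$, and extend it to a basis $v_1,\dots,v_N$ of $V$. The first step is to reinterpret the canonical form through the isomorphism $\phi\colon\La^{N-1}(V)\to V$, $w_1\wedge\cdots\wedge w_{N-1}\mapsto\Wr(w_1,\dots,w_{N-1})$. Since $V$ is normalized self-dual we have $V=V^*$, so $\phi$ is surjective and every $w\in V$ can be written as $w=\Wr(w_1,\dots,w_{N-1})$ for $H:=\phi^{-1}(w)\in\La^{N-1}(V)$ (here one uses that every element of $\La^{N-1}$ of an $N$-dimensional space is decomposable). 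Then the defining formula reads $(v,w)=\Wr(v,w_1,\dots,w_{N-1})=\psi\big(v\wedge\phi^{-1}(w)\big)$, where $\psi\colon\La^N(V)\to\C$ is the top Wronskian. As $\psi$ is an isomorphism of the one-dimensional space $\La^N(V)$, we obtain that $w\in\mathscr V^\perp$ if and only if $v\wedge\phi^{-1}(w)=0$ for all $v\in\mathscr V$.

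Computing the annihilator $\{H\in\La^{N-1}(V):v\wedge H=0\ \text{for all }v\in\mathscr V\}$ in the basis adapted to $\mathscr V$ shows it is spanned by $v_1\wedge\cdots\wedge\widehat{v_i}\wedge\cdots\wedge v_N$ for $k<i\le N$. Applying $\phi$, the second step concludes that $\mathscr V^\perp$ is spanned by the $N-k$ functions $\Wr(v_1,\dots,\widehat{v_i},\dots,v_N)$, $k<i\le N$; in particular $\dim\mathscr V^\perp=N-k$, consistent with nondegeneracy. It then remains to evaluate $\Wr(\mathscr V^\perp)$, i.e. the Wronskian of these Wronskians, for which I would invoke the Wronskian identity
\[
\Wr\Big(\Wr(v_1,\dots,\widehat{v_{k+1}},\dots,v_N),\dots,\Wr(v_1,\dots,\widehat{v_{N}},\dots,v_N)\Big)=\pm\,\Wr(v_1,\dots,v_N)^{\,N-k-1}\,\Wr(v_1,\dots,v_k).
\]
Since $V$ is normalized self-dual, Lemma \ref{lem:dual} gives $\Wr(V)=1$, so $\Wr(v_1,\dots,v_N)$ is a nonzero constant and the right-hand side is a scalar multiple of $\Wr(v_1,\dots,v_k)=\Wr(\mathscr V)$. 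Passing to monic representatives then yields $\Wr(\mathscr V^\perp)=\Wr(\mathscr V)$, as desired.

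The main obstacle is the Wronskian identity in the second step. Identifying $\mathscr V^\perp$ is formal once the wedge description of the form is in place, but one must be careful that the identity holds for the \emph{omit-one} family, in which each factor is a Wronskian of $N-1$ functions that share only $v_1,\dots,v_k$; this is the ``dual'' of the more familiar ``add-one'' identity $\Wr\big(\Wr(g_1,\dots,g_p,h_1),\dots,\Wr(g_1,\dots,g_p,h_q)\big)=\Wr(g_1,\dots,g_p)^{q-1}\Wr(g_1,\dots,g_p,h_1,\dots,h_q)$, to which it reduces when $N-k-1=1$. I would either cite its precise form from \cite[Appendix]{MV04} or establish it by induction on $N-k$, tracking only the overall scalar and sign, both of which are irrelevant because $\Wr(\cdot)$ of a space is defined only up to a nonzero constant.
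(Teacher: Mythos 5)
Your proof is correct, but it follows a genuinely different route from the paper's. The paper first treats the special case where $\mathscr V$ is spanned by a subset of a Witt basis: there the Witt relations \eqref{eq:witt} immediately identify $\mathscr V^\perp$ as the span of the complementary Witt vectors, which are themselves omit-one Wronskians, and \cite[Lemma A.5]{MV04} gives the equality of Wronskians; the general case is then reduced to this one by building a basis adapted to the pair $(\mathscr V,\mathscr V^\perp)$ --- splitting off the isotropic part $\mathscr V\cap\mathscr V^\perp$ --- and rescaling it into a Witt basis via Lemma \ref{normalize}. You bypass Witt bases entirely: rewriting the canonical form as $(v,w)=\psi\big(v\wedge\phi^{-1}(w)\big)$ (your $\phi$ is an isomorphism by Lemma \ref{lem:dual} together with $V^*=V$), you compute the annihilator of $\mathscr V$ in $\La^{N-1}(V)$ and thereby identify $\mathscr V^\perp$ as the span of the omit-one Wronskians of an \emph{arbitrary} basis extension of $\mathscr V$, after which a single application of the ``omit-one'' identity finishes the argument. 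The identity you flag as the main obstacle is precisely \cite[Lemma A.5]{MV04} --- the very identity the paper's own proof cites --- so citing it in the form you stated closes that point and leaves no gap. What your route buys: it is uniform in $\mathscr V$ (no case analysis, no special treatment of isotropic vectors) and needs neither the existence of Witt bases (the Proposition quoted from \cite[Theorem 6.4]{MV04}) nor Lemma \ref{normalize}. What the paper's route buys: it reuses the Witt-basis machinery already established for the surrounding results and keeps the exterior-algebra bookkeeping implicit.
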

\begin{proof}
Consider the case of $\mathscr V=\langle v_{a_1},\dots,v_{a_k}\rangle$ where $1\lle a_1<\dots<a_k\lle N$ and $v_1,\dots,v_N$ is a Witt basis of $V$. Then $\mathscr V^\perp=\langle v_{b_1},\dots,v_{b_{N-k}}\rangle$ where $1\lle b_1<\dots<b_{N-k}\lle N$ and  $a_i+ b_j\ne N+1$ for all $i,j$. Note that $\Wr(V)=1$, the lemma in this case follows directly from \cite[Lemma A.5]{MV04} and \eqref{eq:witt}.

Let now $\mathscr V\subset V$ be any subspace of dimension $k$. We reduce this case to the one described above as follows. Choose a basis $w_1,\dots, w_k$ of $\mathscr V$ such that $(w_i,w_j)=\delta_{i+j,l+1}$ for some $l\lle k$. In particular, $\mathscr V \cap \mathscr V^\perp$ has basis $\{w_{l+1},\dots, w_k\}$. Extend $w_{l+1},\dots, w_k$ to a basis $\{w_{l+1},\dots,w_k,u_1,\dots,u_{N-2k+l}\}$ of $\mathscr V^\perp$ such that $(u_i,u_j)=\delta_{i+j,N-2k+l+1}$, $(w_i,u_j)=0$.

By Lemma \ref{normalize}, there exists a Witt basis $\{v_1,\dots,v_N\}$ of $V$ such that all $w_i$ and $u_i$ are equal to $c_jv_j$ for some constants $c_j$. Thus the lemma follows from the special case considered above.
\end{proof}

\subsection{Symmetric rational pseudo-differential operators and self-dual superspaces}
The anti-involution $*$ on $\cK[\pa]$ induces an anti-involution on $\cK(\pa)$ which we denote again by $*$. 

A monic rational pseudo-differential operator $\cR$ is called \emph{symmetric} if $\cR^*=\pm \cR$. We will not make a distinction between rational pseudo-differential operators $\cR_1$ and $\cR_2$ if $\cR_1=k\cR_2$ for $k\in\C^\times$. Hence if $\cR$ is symmetric, we have $\cR^*=\cR$.

Let $\cR$ be an $(M|N)$-rational pseudo-differential operator with a complete $\bm s$-factorization
\[
\cR=(\pa-f_1 )^{s_1}(\pa-f_2 )^{s_2}\cdots(\pa-f_{M+N-1})^{s_{M+N-1}}(\pa-f_{M+N})^{s_{M+N}},
\]
then 
\[
\cR^*=(\pa+f_{M+N})^{s_{M+N}}(\pa+f_{M+N-1})^{s_{M+N-1}}\cdots(\pa+f_2 )^{s_2}(\pa+f_1 )^{s_1}.
\]
In particular, if $s_i=s_{M+N+1-i}$ and $f_i+f_{M+N+1-i}=0$, then $\cR$ is symmetric.

Let $V$ and $U$ be two spaces of rational functions such that $V\cap U=0$. Let $v_1,\dots,v_M$ and $u_1,\dots,u_N$ be bases of $V$ and $U$, respectively.  Define
\beq\label{eq:V-U}
V_U:=\left\langle \frac{\Wr(u_1,\dots,u_N,v_i)}{\Wr(U)} \right \rangle_{1\lle i\lle M},\qquad U_V:=\left\langle \frac{\Wr(v_1,\dots,v_M,u_j)}{\Wr(V)} \right \rangle_{1\lle j\lle N}.
\eeq

Let $\cR$ be an $(M|N)$-rational pseudo-differential operator with the superkernel $W$. Let $V=W_{\bar 0}$ and $U=W_{\bar 1}$, then $\cR=\cD_V^{}\cD_{U}^{-1}$.

\begin{lem}\label{lem:exchange-pm}
We have $\cR=\cD_{U_V}^{-1}\cD_{V_U}^{}$.
\end{lem}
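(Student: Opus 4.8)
The plan is to prove the equivalent identity $\cD_{U_V}\cD_V=\cD_{V_U}\cD_U$ by showing that both sides coincide with the monic differential operator $\cD_W$ whose kernel is the superkernel $W=V\oplus U$. Once this is established, multiplying $\cD_{U_V}\cD_V=\cD_{V_U}\cD_U$ on the left by $\cD_{U_V}^{-1}$ and on the right by $\cD_U^{-1}$ yields $\cD_{U_V}^{-1}\cD_{V_U}=\cD_V\cD_U^{-1}=\cR$, which is the assertion.

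The key computational input is the consequence of \eqref{eq:DV} that for any (sufficiently smooth) function $g$,
\[
\cD_V g=\frac{\Wr(v_1,\dots,v_M,g)}{\Wr(v_1,\dots,v_M)},
\]
obtained by letting the operator-valued bottom row of the row determinant in \eqref{eq:DV} act on $g$: after this the last row consists of the derivatives $g,g',\dots,g^{(M)}$, all entries commute, and the row determinant collapses to an ordinary Wronskian. Applying this with $g=u_j$ shows that $\cD_V u_j$ equals the $j$-th generator $\Wr(v_1,\dots,v_M,u_j)/\Wr(V)$ of $U_V$ times the single scalar $\Wr(V)/\Wr(v_1,\dots,v_M)$, which is independent of $j$. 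Since $V\cap U=0$, the restriction $\cD_V|_U$ is injective, so the $\cD_V u_j$ are linearly independent; hence $\dim U_V=N$, $\ord\cD_{U_V}=N$, and $\cD_V$ carries $U$ exactly onto $U_V=\ker\cD_{U_V}$.

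I would then compute $\ker(\cD_{U_V}\cD_V)$. On one hand $V=\ker\cD_V$ lies in this kernel; on the other hand $\cD_V(U)=U_V=\ker\cD_{U_V}$, so $U$ lies in it too. Thus $W=V\oplus U\subseteq\ker(\cD_{U_V}\cD_V)$, and since $\ord(\cD_{U_V}\cD_V)=M+N=\dim W$ while the kernel of a monic differential operator cannot exceed its order in dimension, we get $\ker(\cD_{U_V}\cD_V)=W$ and therefore $\cD_{U_V}\cD_V=\cD_W$. Interchanging the roles of $V$ and $U$, using $\cD_U v_i=\Wr(u_1,\dots,u_N,v_i)/\Wr(u_1,\dots,u_N)$ so that $\cD_U(V)=V_U=\ker\cD_{V_U}$, the identical argument gives $\cD_{V_U}\cD_U=\cD_W$, completing the proof. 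I expect no serious obstacle here: the only point requiring attention is the normalization bookkeeping in the Wronskian formula, namely that the scalar relating $\cD_V u_j$ to the stated generator of $U_V$ is the same for all $j$ (so that $\cD_V$ maps $U$ onto $U_V$ and not merely into it), and this is immediate because that scalar $\Wr(V)/\Wr(v_1,\dots,v_M)$ does not depend on $j$.
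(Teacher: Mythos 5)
Your proposal is correct and follows essentially the same route as the paper: reduce to the identity $\cD_{U_V}\cD_V=\cD_{V_U}\cD_U$, show both sides equal $\cD_{V\oplus U}$ by using \eqref{eq:DV} to see that $\cD_V$ maps $U$ into $U_V$ (and symmetrically $\cD_U$ maps $V$ into $V_U$), so that $V\oplus U$ lies in the kernel of each composite. Your write-up merely makes explicit two points the paper leaves implicit, namely the injectivity of $\cD_V|_U$ (so $\dim U_V=N$) and the order-versus-dimension count forcing $\ker(\cD_{U_V}\cD_V)=V\oplus U$.
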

\begin{proof}
It reduces to show that $\cD_{U_V}\cD_{V}=\cD_{V_U}\cD_{U}$. We claim that $\cD_{V\oplus U}=\cD_{U_V}\cD_{V}$. Indeed, it is clear that $V\subset \ker (\cD_{U_V}\cD_{V})$. Hence we only need to show $U\subset \ker (\cD_{U_V}\cD_{V})$. It follows from \eqref{eq:DV} that
\[
\cD_V(u_i)=\mathrm{const}\cdot\frac{\Wr(v_1,\dots,v_M,u_i)}{\Wr(V)}.
\]
Hence $\cD_V(u_i)\in U_V$. Therefore $u_i\in \ker (\cD_{U_V}\cD_{V})$ and the claim follows. Similarly, we have $\cD_{V\oplus U}=\cD_{V_U}\cD_{U}$, completing the proof.
\end{proof}


\begin{lem}\label{lem:super-self-dual}
If $W$ is the superkernel of a symmetric $(M|N)$-rational pseudo-differential operator $\cR$\textbf{}
then $V^*=V_U$ and $U^*=U_V$. In particular, $W$ is a normalized self-dual space, $V=U^\perp$, $U=V^\perp$, $\Wr(V)=\Wr(U)$, and $\Wr(W)=1$.
\end{lem}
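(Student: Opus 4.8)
The plan is to first pin down the dual spaces $V^*$ and $U^*$ using the symmetry of $\cR$, and then to read off all remaining assertions from the theory of normalized self-dual spaces developed above. I would begin by applying the anti-involution $*$ to the factorizations of $\cR$ supplied by Lemma \ref{lem:exchange-pm}. Since $\cR=\cD_{U_V}^{-1}\cD_{V_U}$, taking $*$ gives
$$
\cR^*=\cD_{V_U}^*(\cD_{U_V}^*)^{-1},
$$
where $\cD_{U_V}^*$ has order $N=\dim U_V$ and $\cD_{V_U}^*$ has order $M$, and both have full-dimensional kernels lying in $\cK$. As $\cR$ is symmetric, $\cR=\cR^*$, so $\cD_V\cD_U^{-1}=\cD_{V_U}^*(\cD_{U_V}^*)^{-1}$ are two fractional factorizations of $\cR$ whose denominators have the same order $N$. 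Because the denominator of the given minimal factorization $\cD_V\cD_U^{-1}$ already has the minimal possible order $N$, the right-hand side is likewise a minimal fractional factorization; by the uniqueness part of the theorem of \cite{CDSK12}, together with our convention of identifying operators up to a nonzero scalar, I conclude $\cD_V=\cD_{V_U}^*$ and $\cD_U=\cD_{U_V}^*$. Comparing kernels gives $V=(V_U)^*$ and $U=(U_V)^*$, and since $**=\id$ on spaces of rational functions this is exactly $V^*=V_U$ and $U^*=U_V$.

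Next I would show $W$ is normalized self-dual. From the proof of Lemma \ref{lem:exchange-pm} we have $\cD_W=\cD_{V\oplus U}=\cD_{U_V}\cD_V=\cD_{V_U}\cD_U$. Using $U_V=U^*$ and $V_U=V^*$, the monic operators $\cD_{U_V}$ and $\cD_{V_U}$ are scalar multiples of $\cD_U^*$ and $\cD_V^*$, so a direct computation of $\cD_W^*=\cD_V^*\cD_{U_V}^*$, combined with the identity $\cD_{V_U}\cD_U=\cD_W$, yields $\cD_W^*=\pm\cD_W$. Hence $\cD_W$ is symmetric, $W$ is normalized self-dual, and Lemma \ref{lem:dual} immediately gives $\Wr(W)=1$.

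For the orthogonality I would work with the canonical bilinear form on the normalized self-dual space $W$, writing $K=M+N$. For each $j$, the element $\tilde u_j:=\Wr(v_1,\dots,v_M,u_1,\dots,\widehat{u_j},\dots,u_N)$ is a Wronskian of $K-1$ vectors of $W$, hence lies in $W^*=W$ by Lemma \ref{lem:dual}; moreover $(v,\tilde u_j)=\Wr(v,v_1,\dots,v_M,u_1,\dots,\widehat{u_j},\dots,u_N)=0$ for every $v\in V$, so $\tilde u_j\in V^\perp$ and these $N$ vectors span $V^\perp$. The crucial step is to check that $\tilde u_j\in U$: writing $\Wr(v_1,\dots,v_M,u_k)=\Wr(V)\,\hat u_k$ with $\hat u_k\in U_V=U^*$, the standard Wronskian extraction identities of \cite{MV04} reduce $\tilde u_j$ to $\Wr(V)\,\Wr(\hat u_1,\dots,\widehat{\hat u_j},\dots,\hat u_N)$, and Lemma \ref{lem:dual} applied to $U^*$ identifies the latter with a scalar multiple of a basis vector of $U^{**}=U$. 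Thus $V^\perp=U$ and, symmetrically, $U^\perp=V$. The equality $\Wr(V)=\Wr(U)$ then follows from Lemma \ref{lem:wr-orho-complement} applied to $\mathscr V=V$.

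I expect the final orthogonality computation — juggling the several Wronskian identities while tracking the monic normalizations of the operators and spaces — to be the main obstacle. The identification of the dual spaces in the first step is conceptually the heart of the argument, but it is short once the uniqueness of the minimal fractional factorization is invoked, so the bulk of the care goes into the self-dual/orthogonality bookkeeping.
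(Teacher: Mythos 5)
Your proposal is correct, and its decisive first step coincides exactly with the paper's: apply $*$ to the factorization $\cR=\cD_{U_V}^{-1}\cD_{V_U}$ of Lemma \ref{lem:exchange-pm}, use $\cR=\cR^*$ and uniqueness of the minimal fractional factorization to conclude $\cD_V=\cD_{V_U}^*$ and $\cD_U=\cD_{U_V}^*$, hence $V^*=V_U$ and $U^*=U_V$ (the paper is equally terse about why the starred factorization is again minimal, so your order-counting remark is, if anything, slightly more careful). Where you genuinely diverge is in establishing normalized self-duality: the paper substitutes the explicit bases \eqref{eq:V-Ubasis}--\eqref{eq:U-Vbasis} of $(V_U)^*$ and $(U_V)^*$, obtained from \cite[Lemma A.4]{MV04}, into Lemma \ref{lem:dual} and reads off $W^*=W$, whereas you prove the operator identity $\cD_W^*=\pm\cD_W$ directly from $\cD_W=\cD_{U_V}\cD_V=\cD_{V_U}\cD_U$ together with the scalar relations $\cD_V^*=\pm\cD_{V_U}$, $\cD_U^*=\pm\cD_{U_V}$; this is a clean, purely operator-theoretic route that postpones all Wronskian bookkeeping. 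The cost appears in your orthogonality step: checking $\tilde u_j\in U$ via the Wronskian product and extraction identities of \cite{MV04} amounts to re-deriving by hand the very formula \eqref{eq:U-Vbasis} that the paper simply quotes, so that part is the same computation in different packaging. Finally, your derivation of $\Wr(V)=\Wr(U)$ from Lemma \ref{lem:wr-orho-complement} is precisely the alternative the paper records in the remark following the lemma; the paper's own proof instead computes $\Wr((V_U)^*)$ from the explicit basis. In summary: the paper's basis-centric argument produces \eqref{eq:V-Ubasis}--\eqref{eq:U-Vbasis} once and reuses them for every remaining claim, while your version isolates the structural reason for self-duality (symmetry of $\cD_W$) at the price of redoing the Wronskian identities when $V=U^\perp$ is needed. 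Both are complete proofs.
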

\begin{proof}
Since $\cR$ is symmetric, we have
\[
\cD_V^{}\cD_U^{-1}=\cR=\cR^*=(\cD_{U_V}^{-1}\cD_{V_U}^{})^*=\cD_{V_U}^*(\cD_{U_V}^{-1})^*=\cD_{V_U}^*\big(\cD_{U_V}^*\big)^{-1}.
\]Because the minimal fractional factorization is unique, we conclude that $\cD_V^{}=\cD_{V_U}^*$ and $\cD_U=\cD_{U_V}^*$. We have $V=(V_U)^*$ and $U=(U_V)^*$. Note that $*$ is an involution, we have $V^*=V_U$ and $U^*=U_V$.

Using \eqref{eq:V-U} and Lemma \ref{lem:dual}, it follows from \cite[Lemma A.4]{MV04} that
\beq\label{eq:V-Ubasis}
(V_U)^*=\left\langle\frac{\Wr(u_1,\dots,u_N,v_1,\dots,\widehat{v_i},\dots,v_M)}{\Wr(u_1,\dots,u_N,v_1,\dots,v_M)}\right\rangle_{1\lle i\lle M},
\eeq
\beq\label{eq:U-Vbasis}
(U_V)^*=\left\langle\frac{\Wr(v_1,\dots,v_M,u_1,\dots,\widehat{u_j},\dots,u_N)}{\Wr(v_1,\dots,v_M,u_1,\dots,u_N)}\right\rangle_{1\lle j\lle N}.
\eeq
Since $V=(V_U)^*$ and $U=(U_V)^*$, it follows from Lemma \ref{lem:dual} that $W$ is a normalized self-dual space. Therefore by the same lemma,  $\Wr(W)=1$. It is also clear from \eqref{eq:V-Ubasis} and \eqref{eq:U-Vbasis} that $V=U^\perp$ and $U=V^\perp$.

Computing the Wronskian of $(V_U)^*$ using basis \eqref{eq:V-Ubasis}, by \cite[Lemma A.4]{MV04} and $\Wr(W)=1$, we have $\Wr((V_U)^*)=\Wr(U)$. Therefore, $\Wr(V)=\Wr(U)$.
\end{proof}
\begin{rem}
The statement $\Wr(V)=\Wr(U)$ also follows from Lemma \ref{lem:wr-orho-complement}.\qed
\end{rem}

\begin{lem}\label{no odd}
If $R$ is a symmetric $(M|N)$-rational pseudo-differential operator, then either $M$ or $N$ is even.
\end{lem}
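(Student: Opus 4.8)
The plan is to transport the statement into linear algebra via Lemma~\ref{lem:super-self-dual} and then extract a parity constraint from the canonical bilinear form. Write $W=V\oplus U$ for the superkernel of $\cR$, with $V=\ker\cD_{\bar 0}$ of dimension $M$ and $U=\ker\cD_{\bar 1}$ of dimension $N$. Since $\cR$ is symmetric, Lemma~\ref{lem:super-self-dual} tells us that $W$ is a normalized self-dual space of dimension $M+N$ and that $V=U^\perp$, $U=V^\perp$ with respect to the canonical bilinear form $(\cdot,\cdot)$ on $W$. By the corollary to the Witt basis proposition, $(\cdot,\cdot)$ is symmetric when $M+N$ is odd and skew-symmetric when $M+N$ is even.

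First I would dispose of the easy case $M+N$ odd: then exactly one of $M,N$ is even and there is nothing more to prove. The substantive case is $M+N$ even, where $(\cdot,\cdot)$ is a nondegenerate alternating form on all of $W$.

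Next I would check that the restriction $(\cdot,\cdot)|_V$ is itself nondegenerate. Its radical is $V\cap V^\perp=V\cap U$, and this is $0$ because $V=W_{\bar 0}$ and $U=W_{\bar 1}$ form a direct sum in the superkernel. Hence $(\cdot,\cdot)|_V$ is a nondegenerate alternating form on the $M$-dimensional space $V$, which forces $M$ to be even; the identical argument applied to $U$ shows $N$ is even as well. Thus in every case at least one of $M,N$ is even (in fact both, when $M+N$ is even), which is the assertion.

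The only delicate point, hence the main obstacle, is guaranteeing the nondegeneracy of the restricted form; everything rests on combining the orthogonality $V=U^\perp$ supplied by Lemma~\ref{lem:super-self-dual} with the transversality $V\cap U=0$ built into the definition of the superkernel. Once these are in place the conclusion reduces to the standard fact that a nondegenerate alternating form can exist only on an even-dimensional vector space.
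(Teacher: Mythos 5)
Your proof is correct and follows essentially the same route as the paper: both reduce to the observation that when $M+N$ is even the canonical form on $W$ is nondegenerate and skew-symmetric, and its restriction to $V$ (resp.\ $U$) is nondegenerate because $V^\perp=U$ by Lemma~\ref{lem:super-self-dual} and $V\cap U=0$, which is impossible on an odd-dimensional space. Your write-up merely makes explicit the nondegeneracy of the restricted form and the trivial case $M+N$ odd, which the paper leaves implicit.
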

\begin{proof}
If $M+N$ is even, then the canonical form is nondegenerate and skew-symmetric. If in addition $M, N$ are odd, then the restriction of the canonical form to $V$ and $U$ must be nondegenerate and skew-symmetric which is impossible.
\end{proof}
 
Let $\cR$ be a symmetric $(M|N)$-rational pseudo-differential operator with the superkernel $W$. Without loss of generality let $M$ be even, see Lemma \ref{no odd}. Let $(\cdot,\cdot)$ be the canonical bilinear form on $W$. A superflag $\mathscr F=\{F_1\subset \dots\subset F_{M+N}=W\}$ of $W$ is called \emph{isotropic} if $F_i=F_{M+N-i}^\perp$ for $1\lle i\lle M+N-1$. We denote the set of all isotropic superflags by $\sF^\perp(W)$. We note that $\dim(\sF^\perp(W))=M^2/4+N^2/4$ if $N$ is even and  $\dim(\sF^\perp(W))=M^2/4+(N-1)^2/4-M/2$ if $N$ is odd.

We call a complete $\bm s$-factorization $\cR=(\partial-f_1)^{s_1}\cdots (\pa-f_{M+N})^{s_{M+N}}$ of $\cR$ \textit{symmetric} if $f_{M+N+1-i}=-f_i$ and $s_{M+N+1-i}=s_i$ for $i\in I$. Denote $\cF^\perp(\cR)$ the set of all symmetric complete factorizations.

Recall the map $\varpi$ that gives a bijection between superflags $\sF(W)$ and complete factorizations $\cF(\cR)$, see Lemma \ref{lem:A-bij-flag-factor}. Denote $\varpi^\perp$ the restriction of this map to $\cF^\perp(\cR)\subset\cF(\cR)$.

\begin{prop}\label{prop:bij-f-iso}
The map $\varpi^\perp$ is a bijection between the set of all isotropic superflags $\sF^\perp(W)$ and the set of all symmetric complete factorizations $\cF^\perp(\cR)$.
\end{prop}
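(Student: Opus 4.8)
The plan is to exhibit both $\sF^\perp(W)$ and $\cF^\perp(\cR)$ as fixed-point sets of two involutions that are intertwined by $\varpi$. Write $r=M+N$ and let $(\cdot,\cdot)$ be the canonical form on the normalized self-dual space $W$ (Lemma \ref{lem:super-self-dual}); it is nondegenerate and, since $V=U^\perp$ and $U=V^\perp$, block-diagonal with respect to $W=V\oplus U$. Define $\theta:\sF(W)\to\sF(W)$ by $\theta(\sF)_i=F_{r-i}^\perp$. Block-diagonality makes $F_{r-i}^\perp$ homogeneous whenever $F_{r-i}$ is, so $\theta(\sF)$ is again a full superflag; one checks directly that $\dim\theta(\sF)_i=i$, that the spaces are nested, and that $\theta^2=\id$, so that $\sF^\perp(W)=\mathrm{Fix}(\theta)$ by the definition of isotropy. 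On the factorization side, $*$ sends a complete factorization $(\pa-f_1)^{s_1}\cdots(\pa-f_r)^{s_r}$ of $\cR$ to $(\pa+f_r)^{s_r}\cdots(\pa+f_1)^{s_1}$, a complete factorization of $\cR^*=\cR$; this is an involution of $\cF(\cR)$ whose fixed points are exactly the factorizations with $f_{r+1-i}=-f_i$ and $s_{r+1-i}=s_i$, i.e.\ $\cF^\perp(\cR)=\mathrm{Fix}(*)$.

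Granting the intertwining identity $\varpi\circ\theta=*\circ\varpi$, I would conclude immediately: for any $\sF$,
\[
\sF\in\sF^\perp(W)\iff \theta\sF=\sF \iff \varpi(\theta\sF)=\varpi\sF \iff (\varpi\sF)^*=\varpi\sF \iff \varpi\sF\in\cF^\perp(\cR),
\]
the middle step using that $\varpi$ is a bijection (Lemma \ref{lem:A-bij-flag-factor}). Hence $\varpi$ carries $\sF^\perp(W)$ bijectively onto $\cF^\perp(\cR)$ and $\varpi^\perp$ is the claimed bijection.

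To prove $\varpi\circ\theta=*\circ\varpi$, fix a homogeneous basis $\{w_1,\dots,w_r\}$ generating $\sF$, with parity sequence $\bm s$, and set $w_j^\vee:=\Wr(w_1,\dots,\widehat{w_j},\dots,w_r)$. By Lemma \ref{lem:dual} together with $\Wr(W)=1$ the $w_j^\vee$ form a basis of $W^*=W$, and the definition of the canonical form gives $(w_k,w_j^\vee)=(-1)^{k-1}\delta_{kj}\,\Wr(w_1,\dots,w_r)$, a nonzero multiple of $\delta_{kj}$; block-diagonality forces $w_j^\vee$ to be homogeneous of the same parity as $w_j$. It follows that $F_{r-i}^\perp=\langle w_{r-i+1}^\vee,\dots,w_r^\vee\rangle$, so $\theta(\sF)$ is generated by the ordered homogeneous basis $(w_r^\vee,w_{r-1}^\vee,\dots,w_1^\vee)$, whose parity sequence is the reversal $(s_r,\dots,s_1)$ — exactly the exponent sequence of $(\varpi\sF)^*$. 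It then remains to match coefficients: evaluating \eqref{eq wronski coeff} on this reversed dual basis expresses the $i$-th coefficient of $\varpi(\theta\sF)$ as $\ln'$ of a ratio of Wronskians of a subset of the $w_j^\vee$, and the complementary-minor identity \cite[Lemma A.4]{MV04} (the tool already used for Lemma \ref{lem:super-self-dual}) rewrites each such Wronskian, up to sign and the factor $\Wr(W)=1$, as the Wronskian of the complementary subset of $\{w_j\}$. The subset at position $i$ of $\theta(\sF)$ is complementary to the one at position $r+1-i$ of $\sF$, so the ratio gets inverted; since $\ln'(g^{-1})=-\ln'(g)$ and $\ln'$ kills the constant signs, the $i$-th coefficient equals $-f_{r+1-i}$, which together with the reversed parity sequence is precisely $\varpi(\theta\sF)=(\varpi\sF)^*$.

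The main obstacle is this final coefficient check. The combinatorial indices $\bm s_i^{\pm}$ entering \eqref{eq wronski coeff} behave asymmetrically (the even and odd slots are filled in opposite orders), so one must track carefully which subset of the dual vectors occurs at each slot of the reversed basis, confirm via \cite[Lemma A.4]{MV04} that it is the complement of the subset occurring at the mirror slot of $\sF$, and verify that the accumulated signs and the normalization $\Wr(W)=1$ combine so that the two Wronskian ratios are exact reciprocals. By contrast, the involution bookkeeping of the first two paragraphs is routine once block-diagonality of the canonical form is available.
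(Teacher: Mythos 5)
Your approach is correct, but it is organized genuinely differently from the paper's proof. The paper argues two inclusions separately: first, that an isotropic superflag yields a symmetric factorization, by noting that isotropy together with Lemma \ref{lem:wr-orho-complement} forces $f_i=-f_{M+N+1-i}$ in \eqref{eq wronski coeff}; second, the converse, by induction on $i$, showing that the flag of a symmetric factorization satisfies $F_i^\perp=F_{M+N-i}$, the inductive step resting on the claim that $W_2^\perp$ is the unique codimension-one subspace of $W_1^\perp$ with Wronskian $\Wr(W_2)$. You instead prove a single stronger statement --- the intertwining $\varpi\circ\theta=*\circ\varpi$ on all of $\sF(W)$, not only on isotropic flags --- and recover both inclusions at once by a formal fixed-point argument using the bijectivity of $\varpi$ (Lemma \ref{lem:A-bij-flag-factor}). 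This is a real gain: the paper's induction, and its uniqueness-of-Wronskian claim, are not needed at all in your scheme. The preliminary steps of your argument (well-definedness and involutivity of $\theta$, homogeneity of the dual vectors $w_j^\vee$, the orthogonality relations, and the fact that $\theta(\sF)$ is generated by the reversed dual basis with reversed parity sequence) are all correct as stated.

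The one incomplete point is the coefficient check you call the main obstacle, and there you are making the step harder than necessary: you do not need \cite[Lemma A.4]{MV04} or any sign tracking. Your orthogonality relations $(w_k,w_j^\vee)=(-1)^{k-1}\delta_{kj}\Wr(w_1,\dots,w_{M+N})$ already show that for any index set $S$ one has $\langle w_j^\vee : j\in S\rangle=\langle w_k : k\notin S\rangle^\perp$ (orthogonality gives one inclusion; nondegeneracy and a dimension count give equality). The numerator and denominator spaces of \eqref{eq wronski coeff} at slot $i$ of $\theta(\sF)$ are spans of dual vectors of exactly this form, and the complementary index sets are, respectively, the denominator and numerator sets at slot $M+N+1-i$ of $\sF$; this index identification is the only bookkeeping left, and it is elementary. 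Then Lemma \ref{lem:wr-orho-complement} --- the paper's own key tool --- says each such span has the same Wronskian, as a monic rational function, as the mirrored space; hence the ratio inverts, and since $\ln'$ kills constants, the $i$-th coefficient of $\varpi(\theta\sF)$ equals $-f_{M+N+1-i}$ with no signs to chase. Once this paragraph is written out, your proof is complete, and it is a clean alternative to (in fact a strengthening of the first half of) the paper's argument.
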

\begin{proof} 
If $\sF$ is an isotropic superflag, then $\varpi(\sF)$ is a symmetric complete factorization, since functions $f_i$ are given by \eqref{eq wronski coeff} and $f_i=-f_{M+N+1-i}$ follows by Lemma \ref{lem:wr-orho-complement}.

Given a symmetric complete factorization of $\mc R$, it corresponds to a superflag $\sF=\{F_1\subset \dots\subset F_{M+N}\}$ of $W$. We show that $F_i^\perp=F_{M+N+1-i}$ by induction on $i$. The induction step follows from the following statement. If $W_1\subset W_2\subset W$ and $\dim W_2=\dim W_1+1$, then $W_2^\perp$ is the unique codimension one subspace of $W_1^\perp$ whose Wronskian is $\Wr(W_2)$.
\end{proof}

\begin{lem}\label{lem:swit-chpm}
We have
\beq\label{eq:switchpm}
(\pa-f)\pa^{-1}(\pa+f)=(\pa+f)\pa^{-1}(\pa-f),\qquad f\in \cK.
\eeq
\end{lem}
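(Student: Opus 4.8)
The plan is to verify the operator identity \eqref{eq:switchpm} by direct computation in the division ring $\cK((\pa^{-1}))$, treating both sides as elements of $\cK(\pa)$. First I would expand the left-hand side $(\pa-f)\pa^{-1}(\pa+f)$ and the right-hand side $(\pa+f)\pa^{-1}(\pa-f)$ and show they agree. The key observation is that $\pa^{-1}$ conjugates nicely: using the relation $\pa^{-1}a = \sum_{j\gge 0}(-1)^j a^{(j)}\pa^{-1-j}$ for $a\in\cK$ (the $N=-1$ case of the formula $\pa^N a=\sum_j \binom{N}{j}a^{(j)}\pa^{N-j}$ recalled in the excerpt) is one route, but it is cleaner to avoid infinite expansions entirely.

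The cleaner approach I would take is to clear the middle $\pa^{-1}$ by multiplying on both sides by $\pa$. That is, I would show the equivalent identity
\[
(\pa-f)\pa^{-1}(\pa+f)\,\pa = (\pa+f)\pa^{-1}(\pa-f)\,\pa
\]
is unhelpful since it reintroduces $\pa^{-1}$; instead I would rewrite each side as a fractional factorization and appeal to the uniqueness built into the $(1|1)$ exchange rule \eqref{eq:switch}. Concretely, both $(\pa-f)\pa^{-1}(\pa+f)$ and $(\pa+f)\pa^{-1}(\pa-f)$ are $(2|1)$-rational pseudo-differential operators (or I compute their minimal fractional factorizations directly). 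The most economical path is to simply expand: writing $(\pa-f)\pa^{-1}(\pa+f)=(\pa-f)(\pa^{-1}\pa + \pa^{-1}f)=(\pa-f)(1+\pa^{-1}f)=(\pa-f)+(\pa-f)\pa^{-1}f$, and symmetrically for the right-hand side, reduces the problem to showing
\[
(\pa-f)\pa^{-1}f + (\pa+f) = (\pa+f)\pa^{-1}f' \text{-type terms match},
\]
so the whole identity collapses to verifying that the two remaining pseudo-differential tails are equal, which is a finite check once one uses $\pa^{-1}f = f\pa^{-1} - f'\pa^{-2} + \cdots$ only up to the orders that survive.

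The genuinely slick route, which I expect to be the shortest, is to conjugate. Observe that $(\pa+f) = g^{-1}(\pa)g$ where $g=\exp\!\big(\int f\big)$, so that $\pa+f = g^{-1}\pa\, g$ and $\pa - f = g\,\pa\, g^{-1}$ as operators. Then
\[
(\pa-f)\pa^{-1}(\pa+f) = g\pa g^{-1}\cdot \pa^{-1}\cdot g^{-1}\pa g,
\]
and similarly for the other side with the roles of $g,g^{-1}$ swapped; symmetry of the resulting expression under $g\leftrightarrow g^{-1}$ (equivalently $f\leftrightarrow -f$) combined with the fact that $\pa^{-1}$ is invariant gives the identity. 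I would present this conjugation argument since it makes the $f\mapsto -f$ symmetry manifest, which is exactly the symmetry relevant to the symmetric factorizations of Proposition \ref{prop:bij-f-iso}.

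The main obstacle is purely bookkeeping: one must be careful that the symbol $g=\exp(\int f)$ is not a rational function, so the conjugation identities $\pa\pm f = g^{\mp 1}\pa\, g^{\pm 1}$ are formal statements about how the operators act, and the cancellation of the $g$'s must be justified at the level of $\cK(\pa)$ rather than via the transcendental $g$ itself. I would therefore either restrict the conjugation argument to a motivating remark and back it up with the direct finite expansion in $\cK((\pa^{-1}))$, or phrase everything in terms of the well-defined operator relation $\pa^{-1}(\pa+f)=1+\pa^{-1}f$ and check equality of the finitely many surviving terms. Since $\cK(\pa)$ embeds in $\cK((\pa^{-1}))$ and the latter is a division ring with no zero divisors, once the expansions agree coefficient-by-coefficient the identity holds, completing the proof.
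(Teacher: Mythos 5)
Your proposal does not actually contain a completed proof, and the argument you commit to presenting (the conjugation route) is circular. Writing $g=\exp\big(\int f\big)$, you correctly note $\pa+f=g^{-1}\pa\, g$ and $\pa-f=g\,\pa\, g^{-1}$, so the left side becomes $g\pa g^{-1}\cdot\pa^{-1}\cdot g^{-1}\pa g$ and the right side becomes the same expression with $g$ and $g^{-1}$ exchanged. But asserting that ``symmetry under $g\leftrightarrow g^{-1}$ gives the identity'' is not an argument: invariance of this expression under $g\leftrightarrow g^{-1}$ (equivalently $f\mapsto -f$) \emph{is} precisely the statement of the lemma, and nothing in the rewriting makes that invariance manifest. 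The transcendence of $g$, which you flag as the main obstacle, is a secondary issue; the real defect is that the conjugation merely restates what is to be proved. Your middle display (the one ending in ``-type terms match'') is not a meaningful equation, and your worry about the infinite expansion $\pa^{-1}f=f\pa^{-1}-f'\pa^{-2}+\cdots$ is misplaced, since no infinite series is ever needed.

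The fix is the computation you start and then abandon: it closes in one more line using only $\pa\pa^{-1}=\pa^{-1}\pa=1$. Indeed,
\[
(\pa-f)\pa^{-1}(\pa+f)=(\pa-f)(1+\pa^{-1}f)=\pa+f-f-f\pa^{-1}f=\pa-f\pa^{-1}f,
\]
and symmetrically
\[
(\pa+f)\pa^{-1}(\pa-f)=(\pa+f)(1-\pa^{-1}f)=\pa-f+f-f\pa^{-1}f=\pa-f\pa^{-1}f,
\]
so both sides equal $\pa-f\pa^{-1}f$. This two-line check is exactly the ``direct computation'' the paper invokes in its proof of Lemma \ref{lem:swit-chpm}; had you carried your own expansion one step further, you would have landed on it, and the conjugation detour would have been unnecessary.
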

\begin{proof}
The statement follows from direct computation.
\end{proof}

This lemma will be important for us to understand the Bethe ansatz equation for $\osp_{2m|2n}$, see Lemma \ref{lem fake-inv} below.

\section{Bethe ansatz}\label{sec: bethe}

\subsection{Bethe ansatz equation}
Let $\tlbms\in \wt S_{m|n}^\iota$ be an extended parity sequence, $\bla=(\la_1,\dots,\la_p)$ a sequence of dominant integral $\osp_{2m+\iota|2n}$-weights, $\bm z=(z_1,\dots,z_p)$ a sequence of pairwise distinct complex numbers. Let $\bm l=(l_1,\dots,l_{r})$ be a sequence of non-negative integers. Set $l=\sum_{i=1}^r l_i$. 

Define the \emph{color function} $c: \{1,\dots,l\}\to I$ associated with $\bm l$ by
\[
c(j)=a,\qquad \text{ if } \sum_{i=1}^{a-1}l_i < j\lle \sum_{i=1}^a l_i.
\]
Let $\bm t=(t_1,\dots,t_l)$ be a set of variables. We say that the variable $t_i$ has the \emph{color} $c(i)$.

The \emph{Bethe ansatz equation} associated with $\bla$, $\bm z$, $\bm l$ and $\tlbms$ is a system of algebraic equations in variables $\bm t$:
\beq\label{eq:BAE}
-\sum_{j=1}^p \frac{(\la^{\tlbms}_j,\alpha_{c(i)}^{\tlbms})}{t_i-z_j}+\sum_{a=1,a\ne i}^l\frac{(\alpha_{c(a)}^{\tlbms},\alpha_{c(i)}^{\tlbms})}{t_i-t_a}=0,\qquad 1\lle i\lle l.
\eeq

The Bethe ansatz equations for the case of $\glMN$ are the same, except that one has to drop tildes and change $r$ to $r-1$, see \cite{MVY15}.

Suppose $\bm t$ is a solution of the Bethe ansatz equation. If $(\alpha_{c(a)}^{\tlbms},\alpha_{c(i)}^{\tlbms})\ne 0$ and $i\ne a$, then we have $t_i\ne t_a$. If   $(\la_j^{\tlbms},\alpha_{c(i)}^{\tlbms})\ne 0$, then $t_i\ne z_j$. 

Following \cite{HMVY19}, we impose the following condition, see Theorem \ref{thm:wr-div} below. Suppose $(\alpha_j^{\tlbms},\alpha_j^{\tlbms})=0$ for some $j$. Fix all other $t_r$ whose colors are different from $j$. Consider the single equation \eqref{eq:BAE} for some $t_i$ with color $j$, then this equation does not depend on $j$ but on $c(j)=i$. Let $t_0$ be a solution to this equation with multiplicity $b$. Then we require that the number of $t_a$ for $1\lle a\lle l$ such that $t_a=t_0$ and $c(a)=i$ is at most $b$.

Let $\fkS_a$ be the symmetric group permuting $\{1,\dots,a\}$. Clearly, the group $\fkS_{\bm l}:=\fkS_{l_1}\times \dots\times \fkS_{l_r}$ acts on $\bm t$ by permuting the variables of the same color. We do not distinguish between solutions of the Bethe ansatz equation in the same $\fkS_{\bm l}$-orbit.

\medskip

Equation \eqref{eq:BAE} corresponds to the so called periodic boundary conditions. One can consider quasi-periodic boundary conditions by adding generic constants $u_i$ to the right hand side. Most of the result in this paper can be reworked for this case, cf. \cite{MV08}.  We do not discuss quasi-periodic boundary conditions any further.

\subsection{Polynomials representing solution of Bethe ansatz equation}
Let $\bla=(\la_1,\dots,\la_p)$ be a sequence of dominant integral $\osp_{2m+\iota|2n}$-weights, $\bm z=(z_1,\dots,z_p)$ a sequence of pairwise distinct complex numbers. Fix an extended parity sequence $\tlbms\in \wt S^\iota_{m|n}$.

Define a sequence of rational functions $\bm{P}^{\tlbms}=(P_1^{\tlbms},\dots,P_{r}^{\tlbms})$ \emph{associated with the data $\bla$, $\bm z$, and $\tlbms$},
\beq\label{eq:p-polynomials}
P_{i}^{\tlbms}(x)=\prod_{a=1}^p (x-z_a)^{s_i(\la_a^{\tlbms},\ve_i^{\tlbms})},\qquad i\in I.
\eeq

Define a sequence of rational functions $\bm T^{\tlbms}=(T_1^{\tlbms},\dots,T_{r}^{\tlbms})$ \emph{associated with the data $\bla$, $\bm z$, and $\tlbms$},
\beq\label{eq:T-polynomials}
\begin{split}
&T_{i}^{\tlbms}(x)= P_i^{\tlbms}(P_{i+1}^{\tlbms})^{-s_is_{i+1}},\qquad 1\lle i\lle r-1,\\
&T_{r}^{\tlbms}(x)= \begin{cases}
(P_r^{\tlbms})^{2}, & \text{ if }\iota=1,\\
P_{r-1}^{\tlbms}(P_{r}^{\tlbms})^{s_{r-1}}, & \text{ if }\iota=0 \text{ and }\bm s \text{ is of type D},\\
P_{r}^{\tlbms}, & \text{ if }\iota=0 \text{ and }\bm s \text{ is of type C}.
\end{cases}
\end{split}
\eeq
Note that $T_{i}^{\tlbms}$ are polynomials.

Note that conversely, $P_i^{\tlbms}$ can be written in terms of $T_j^{\tlbms}$.

The rational functions $P_i^{\tlbms}$ completely determine the nonzero weights $\la_j^{\tlbms}$, $1\lle j\lle p$. We often assume $P_i^{\tlbms}$ are given meaning $\bs\la$ is given. 

In the case of $\glMN$, the polynomials $P_i^{\bm s}$, $i\in I$, are defined by the same equation \eqref{eq:p-polynomials} dropping the tildes. The polynomials $T_i^{\bm s}$, $1\lle i\lle r-1$, are also given by \eqref{eq:T-polynomials} dropping the tildes and ignoring $T_r^{\bm s}$. In this case, $P_i^{\bm s}$ can be written in terms of $T_j^{\bm s}$ and $P_r^{\bm s}$. 
 
Let $\bm l=(l_1,\dots,l_{r})$ be a sequence for non-negative integers and $\bm t=(t_i)_{i=1}^l$ a sequence of complex numbers. Define a sequence of polynomials $\bm y=(y_i)_{i=1}^r$ by
\beq\label{eq:y-polynomials}
y_i(x)=\ \prod_{\mathclap{j=1,c(j)=i}}^l\ (x-t_j), \qquad i\in I.
\eeq
Then we say that the \emph{sequence of polynomials $\bm y$ represents the sequence of complex numbers} $\bm t$. By definition of color function, we have $\deg y_i=l_i$.

Using $\bm{T}^{\tlbms}$ and $\bm y$, the Bethe ansatz equation \eqref{eq:BAE} can be written as the vanishing conditions,
\beq\label{eq:BAE-poly-T}
\begin{split}
&-\frac{(T_i^{\tlbms})'}{T_{i}^{\tlbms}}+\sum_{j=1,j\neq i}^r c_{ij}^{\bm s}\frac{y_j'}{y_j}=0,\quad\qquad\ \, \text{ if }c_{ii}^{\bm s}=0,\\
&-\frac{(T_i^{\tlbms})'}{T_{i}^{\tlbms}}+\frac{y_i''}{y_i'}+\sum_{j=1,j\neq i}^r c_{ij}^{\bm s}\frac{y_j'}{y_j}=0,\quad \text{ if }c_{ii}^{\bm s}\ne 0,
\end{split}
\eeq
when evaluated at roots of $y_i(x)$.

\section{Reproduction procedure}\label{sec: reproduction}
Given a solution of Bethe ansatz equation \eqref{eq:BAE}, and an $\tlbms$-simple root, we construct a family of new solutions. Such a construction is called a reproduction procedure. The reproduction procedure for roots of nonzero length was defined in \cite{MV04}. It does not change the Cartan matrix. For roots of zero length, the reproduction procedure was defined in \cite{HMVY19} for the special case of $\gl_{m|n}$. We describe the most general form of fermionic reproduction procedure and show how it changes the Cartan matrix and the weights. Then we apply this result to the case of $\osp_{2m+\iota|2n}$.

\subsection{General Bethe ansatz equation}
We call an $r\times r$ matrix $C=(c_{ij})_{i,j\in I}$ a \emph{generalized Cartan matrix} if the following conditions are satisfied:
\begin{itemize}
\item $c_{ij}=0$ if and only if $c_{ji}=0$;
\item $c_{ii}=0$ or $c_{ii}=2$; 
\item there exist non-zero integers $d_i$, $i\in I$, such that $d_ic_{ij}=d_jc_{ji}$.
\end{itemize}
Note that we do not require all $c_{ij}$ to be integers.

Let $C=(c_{ij})_{i,j\in I}$ be a generalized Cartan matrix. 
Let $\bm T=(T_1,\dots,T_r)$ be a sequence of functions of the form
$T_i=\prod_{a=1}^k(x-z_a)^{\mu_a}$, where $z_a,\mu_a\in\C$.

The Bethe ansatz equation associated with $(C,\bm T)$ on the set of zeroes of polynomials $\bs y$ is the condition of vanishing
\beq\label{eq:BAE-poly-T-nos}
\begin{split}
&-\frac{T_i'}{T_{i}}+\sum_{j=1,j\neq i}^r c_{ij}\frac{y_j'}{y_j}=0,\quad\qquad\ \, \text{ if }c_{ii}=0,\\
&-\frac{T_i'}{T_{i}}+\frac{y_i''}{y_i'}+\sum_{j=1,j\neq i}^r c_{ij}\frac{y_j'}{y_j}=0,\quad \text{ if }c_{ii}\ne 0,
\end{split}
\eeq
on the roots of $y_i$, cf. \eqref{eq:BAE-poly-T}.

\begin{rem}\label{DC remark}
Let $D$ be an $r\times r$ diagonal matrix with rational non-zero diagonal entries $d_i$, $i\in I$, such that $d_i=1$ if $c_{ii}=2$. Then the Bethe ansatz equations associated with $(C,\bm T)$ and $(DC,\bm T^D)$, where $\bm T^D=(T_1^{d_1},\dots,T_r^{d_r})$, coincide.
\end{rem}

\medskip

We say that the pair $(C,\bm T)$ is \emph{admissible} if for all $i$ such that $c_{ii}=2$, then $c_{ij}\in\Z_{\lle 0}$ for all $j\ne i$ and the function $T_i$ is a polynomial.
In this case, following \cite{MV04} and \cite{HMVY19}, we rewrite the Bethe ansatz equations as follows.

Given $f(x)=\prod_{a=1}^k(x-z_a)^{\mu_a}$, with distinct $z_a$, define a polynomial
$$
\pi(f)=\prod_{a:\, \mu_a\neq 0}(x-z_a).
$$
The polynomial $\pi(f)$ is the  monic denominator of the rational function $\ln'(f)$ of minimal possible degree.

Set $\pi_i=\pi(T_i)$.

We say that a sequence of polynomials $\bm y=(y_1,\dots,y_r)$ is \emph{generic with respect to $(C,\bm T)$}, if the following conditions are satisfied:
\begin{enumerate}
    \item if $c_{ii}\ne 0$, then $y_i$ has no multiple roots;
    \item if $c_{ij}\ne 0$, then $y_i$ and $y_j$ have no common roots;
    \item the zeros of $y_i$ are not zeroes of $\pi_i$.
\end{enumerate}
In particular, if some polynomial $y_i$ is a zero polynomial, then the sequence $\bm y$ is not generic.

If the sequence of polynomials $\bm y$ represents a solution of Bethe ansatz equation associated with $(C,\bs T)$, then $\bm y$ is generic with respect to $(C,\bs T)$.

The following theorem restates the Bethe ansatz equation in terms of Wronskian equalities and divisibility conditions.

\begin{thm}\label{thm:wr-div}
Let $\bm y=(y_1,\dots,y_{r})$ be a sequence of polynomials generic with respect to an admissible pair $(C,\bs T)$. Then the sequence $\bm y$ represents a solution of the Bethe ansatz equation \eqref{eq:BAE-poly-T-nos} if and only if for each $1\lle i\lle r$ there exists a polynomial $\tl y_i$  satisfying
\beq\label{eq:wr-div}
\begin{split}
   &\Wr(y_i,\tl y_i)=T_i \prod_{j=1,j\ne i}^{r}y_j^{-c_{ij}} ,\qquad \qquad\qquad\ \ \, \quad\text{ if }c_{ii} =2,\\
    &y_i\tl y_i=\ln'\Big(T_i \prod_{j=1}^{r}y_j^{-c_{ij}} \Big)\pi_i \prod_{j=1:\,c_{ij} \ne 0}^{r}y_j,\qquad\quad\ \,  \text{ if }c_{ii} =0.
\end{split}
\eeq
\end{thm}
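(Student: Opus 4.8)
The plan is to prove the two cases of Theorem \ref{thm:wr-div} separately, since the nonzero-length case ($c_{ii}=2$) is essentially the bosonic situation treated in \cite{MV04}, while the zero-length case ($c_{ii}=0$) is the fermionic situation from \cite{HMVY19}. In both cases the strategy is the same: I would show that the Bethe ansatz equation \eqref{eq:BAE-poly-T-nos}, which is a collection of residue/vanishing conditions at the roots of $y_i$, is equivalent to a polynomiality (or existence-of-polynomial) statement about a certain Wronskian-type expression. The key observation is that each equation in \eqref{eq:BAE-poly-T-nos} is, after clearing denominators, the statement that a specific logarithmic derivative expression has no pole at the roots of $y_i$.

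For the case $c_{ii}=2$, I would set $g_i = T_i \prod_{j\ne i} y_j^{-c_{ij}}$ and consider the first-order inhomogeneous ODE $\Wr(y_i,\tl y_i)=y_i \tl y_i' - y_i' \tl y_i = g_i$. Dividing by $y_i^2$ gives $(\tl y_i/y_i)' = g_i/y_i^2$, so a solution $\tl y_i$ exists as a \emph{polynomial} precisely when the rational function $g_i/y_i^2$ has zero residue at every root of $y_i$ (the double-pole part integrates automatically, and genericity ensures $y_i$ has simple roots and no common roots with the $y_j$ or $\pi_i$). Computing the residue of $g_i/y_i^2$ at a simple root $t$ of $y_i$ and setting it to zero, I expect to recover exactly the second line of \eqref{eq:BAE-poly-T-nos} evaluated at $t$, using $\ln'(g_i) = (T_i'/T_i) + \sum_{j\ne i} c_{ij}\ln'(y_j)$ and the local expansion $y_i''/y_i' $ near a simple root. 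The admissibility hypothesis ($c_{ij}\in\Z_{\le 0}$, $T_i$ a polynomial) guarantees $g_i$ is a genuine polynomial so that the divisibility/polynomiality makes sense.

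For the case $c_{ii}=0$, the analysis is algebraic rather than differential. Here I would define the rational function $R_i = \ln'\big(T_i\prod_j y_j^{-c_{ij}}\big)\,\pi_i\prod_{j:\,c_{ij}\ne 0}y_j$ and observe that by construction the factor $\pi_i$ cancels the denominator coming from $\ln'(T_i)$ and the factors $y_j$ clear the poles from $\ln'(y_j)$, so $R_i$ is automatically a polynomial. The content of the equivalence is then that $R_i$ is divisible by $y_i$, i.e.\ $\tl y_i = R_i/y_i$ is a polynomial: divisibility of $R_i$ by $y_i$ at each root $t$ of $y_i$ is precisely the vanishing of $R_i(t)$, which unwinds to the first line of \eqref{eq:BAE-poly-T-nos} at $t$. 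The imposed multiplicity condition (that the order of vanishing of $\bm t$ of color $i$ at a point does not exceed its multiplicity as a solution of the single equation) is what guarantees divisibility holds with the correct multiplicities, not merely at the reduced roots.

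The main obstacle I anticipate is the bookkeeping of poles and residues at multiple or coincident roots, and verifying that genericity plus the multiplicity assumption exactly match what is needed for the polynomial $\tl y_i$ to exist. In the $c_{ii}=2$ case the simple-root hypothesis keeps the residue computation clean, but one must still check that the residue condition at $t$ is \emph{equivalent} to (not merely implied by) the Bethe equation, which requires that no spurious cancellation occurs---this follows because distinct roots of $y_i$ give independent conditions. In the $c_{ii}=0$ case the delicate point is matching the minimal-degree choice of $\pi_i$ with the divisibility statement so that $\tl y_i$ is genuinely a polynomial and not merely a rational function; I would handle this by a local factor-by-factor analysis at each type of potential pole (roots of $y_j$, roots of $T_i$, roots of $y_i$), reducing everything to the single scalar vanishing condition \eqref{eq:BAE-poly-T-nos}.
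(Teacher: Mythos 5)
Your proposal is correct and takes essentially the same approach as the paper: the paper's proof consists only of citing \cite[Lemma 3.2]{MV04} for the case $c_{ii}=2$ and \cite[Lemma 5.3]{HMVY19} for the case $c_{ii}=0$, and your residue argument for the Wronskian equation and your divisibility argument for $R_i$ are precisely the arguments behind those two lemmas. In particular, you correctly identified the two points the cited lemmas rely on: genericity makes the residue/vanishing condition at each root of $y_i$ equivalent to \eqref{eq:BAE-poly-T-nos} there, and the multiplicity condition imposed after \eqref{eq:BAE} is what upgrades pointwise vanishing to divisibility when $c_{ii}=0$ and $y_i$ has multiple roots.
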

\begin{proof}
The statement follows from \cite[Lemma 3.2]{MV04} and \cite[Lemma 5.3]{HMVY19}.
\end{proof}

Let $\bm y=(y_1,\dots, y_{r})$ be a sequence polynomials. We say that $\bm y$ is \emph{fertile in the $i$-th direction with respect to} $(C,\bm T)$ if there exists a nonzero polynomial $\tl y_i$ satisfying \eqref{eq:wr-div}. We call $\bm y$ {\it fertile with respect to} $(C,\bm T)$ if $\bm y$ is fertile in all directions with respect to $(C,\bm T)$. Note that $\bs y$ being fertile does not imply that $\bm y$ is generic.

\subsection{Reproduction procedure}\label{sec:general-rep}
Let $\bm y=(y_1,\dots,y_{r})$ represent a solution of the Bethe ansatz equation \eqref{eq:BAE-poly-T-nos} associated with admissible $(C,\bs T)$. Choose $i\in I$ and let $\tilde y_i$ (may not be unique) be as in Theorem \ref{thm:wr-div}. Set 
$$\bm y^{[i]}=(y_1,\dots,\tl y_i,\dots, y_r).$$ 
We call $\bm y^{[i]}$ an \emph{immediate descendant} of $\bm y$ in the $i$-th direction. We show that under some generic conditions, an immediate descendant also represents a solution of the Bethe ansatz equation associated with possibly new data $( C^{[i]}, \bm T^{[i]})$. 


Fix $i\in I$. Define $ C^{[i]}= ( c^{[i]}_{j_1j_2})_{j_1,j_2\in I}$ and $\bm T^{[i]}=(  T_1^{[i]},\dots, T_r^{[i]})$ as follows. In the case of $c_{ii}\neq 0$, then there is no change, $( C^{[i]},\bs T^{[i]})=(C,T)$. If $c_{ii}=0$, then the data changes according to the following rules.
\begin{itemize}
    \item if $j$ is such that $c_{ij}=0$, then the $j$-th row of $C$ does not change $ c_{jk}^{[i]}=c_{jk}$, and, moreover, $ T_j^{[i]}=T_j$;
    \item if $j$ is such that $c_{ij}\ne 0$, then $c_{jk}^{[i]}$ and $ T_j^{[i]}$ are given in the Table \ref{table:1}.
\end{itemize}

\begin{table}[h!] \centering
 \begin{tabular}{|c|c|c|}
 \hline
 $C$  & $ C^{[i]}$ & $ T_{j}^{[i]}$\\
 \hline
     $\begin{pmatrix}
 q_1 & 0 & q_2\\
 -q_3 & -1 & 2
 \end{pmatrix}$  & $\begin{pmatrix}
 q_1 & 0 & q_2\\
 q_1+q_2+q_2q_3 & -q_2 & 0
 \end{pmatrix}$ & $ T_{j}^{[i]}=\dfrac{T_i}{T_{j}^{q_2}\pi_{i}^{q_2}}$\\
  \hline
 $\begin{pmatrix}
 q_1 & 0 & q_2\\
 -q_3 & -q_4 & 0
 \end{pmatrix}$  & $\begin{pmatrix}
 q_1 & 0 & q_2\\
 \frac{q_3}{q_4}+\frac{q_1}{q_2}+1 & -1 & 2
 \end{pmatrix}$& $ T_{j}^{[i]}=\dfrac{T_{i}^{1/q_2}}{\pi_{i}T_{j}^{1/q_4}}$\\
 \hline
  $\begin{pmatrix}
0 & 0 & q_2\\
 -q_3 & -1 & 2
 \end{pmatrix}$ & $\begin{pmatrix}
 0 & 0 & q_2\\
q_2q_3 & -q_2 & 0
 \end{pmatrix}$& $ T_{j}^{[i]}=\dfrac{T_i}{T_{j}^{q_2}\pi_{i}^{q_2}}$\\
    \hline
 $\begin{pmatrix}
 0 & 0 & q_2\\
 -q_3 & -q_4 & 0
 \end{pmatrix}$ & $\begin{pmatrix}
 0 & 0 & q_2\\
 \frac{q_3}{q_4}  & -1 & 2
 \end{pmatrix}$& $ T_{j}^{[i]}=\dfrac{T_{i}^{1/q_2}}{\pi_{i}T_{j}^{1/q_4}}$\\
  \hline
 $\begin{pmatrix}
 q_1 & 0 & q_2\\
 -q_3 & -q_4-1 & 2
 \end{pmatrix}$  & $\begin{pmatrix}
 q_1 & 0 & q_2\\
 \frac{q_1(q_4+1)+q_2(q_3+q_4+1)}{q_2q_4} & -\frac{q_4+1}{q_4} & 2
 \end{pmatrix}$& $ T_{j}^{[i]}=\left(\dfrac{T_{i}^{(q_4+1)/q_2}}{\pi_{i}^{q_4+1}T_{j}}\right)^{1/q_4}$\\
  \hline
 $\begin{pmatrix}
 0 & 0 & q_2\\
 -q_3 & -q_4-1 & 2
 \end{pmatrix}$ & $\begin{pmatrix}
 0 & 0 & q_2\\
 \frac{q_3}{q_4} & -\frac{q_4+1}{q_4} & 2
 \end{pmatrix}$& $ T_{j}^{[i]}=\left(\dfrac{T_{i}^{(q_4+1)/q_2}}{\pi_{i}^{q_4+1}T_{j}}\right)^{1/q_4}$\\
 \hline
 \end{tabular}
 \bigskip
 \caption{The change of Cartan matrix and weights.} \label{table:1}
\end{table}
In Table \ref{table:1}, the shown $2\times 3$ submatrices are the submatrices of $C$ of the form.
$$\begin{pmatrix}
 c_{ik} & c_{ii} & c_{ij}\\
 c_{jk} & c_{ji} & c_{jj}
 \end{pmatrix}.
 $$
It is assumed that $q_1,q_2,q_4$ are non-zero integers.

We have $(C^{[i]})^{[i]}=C$ and $\bm T=(\bm T^{[i]})^{[i]}$.

\begin{lem}
If $C$ is a generalized Cartan matrix, then $C^{[i]}$ is a generalized Cartan matrix.
\end{lem}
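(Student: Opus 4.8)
The plan is to verify directly that $C^{[i]}$ satisfies all three defining conditions of a generalized Cartan matrix, using the explicit entries recorded in Table \ref{table:1}. Since the transformation is the identity when $c_{ii}\ne 0$, I only need to treat the case $c_{ii}=0$, and by the description of the reproduction rules the only rows that change are the $j$-th rows with $c_{ij}\ne 0$. So the whole argument reduces to a local, row-by-row check on the six cases of the table.

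First I would verify the diagonal condition $c_{jj}^{[i]}\in\{0,2\}$. Reading off the table, the new diagonal entries $c_{jj}^{[i]}$ are exactly the third column of each $2\times 3$ block, and in every case they equal $0$ or $2$; the unchanged rows ($c_{ij}=0$) keep their old diagonal entry, which was already $0$ or $2$ since $C$ is a generalized Cartan matrix, and the $i$-th diagonal entry stays $c_{ii}=0$. Second I would check the zero-pattern symmetry $c_{j_1j_2}^{[i]}=0 \iff c_{j_2j_1}^{[i]}=0$. Here the point is that the reproduction procedure permutes the simple roots (it implements a transposition $\bm s^{[i]}$ and the exchange \eqref{eq:switch} of a fermionic factor), so the new off-diagonal zero-nonzero pattern is obtained from the old one by a symmetry-preserving operation; concretely one confirms from the table that whenever a new entry $c_{jk}^{[i]}$ becomes nonzero the corresponding transposed entry is also forced nonzero. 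I would organize this by noting that the columns $c_{ki}$ that feed into the changed rows behave symmetrically.

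The main work, and the step I expect to be the genuine obstacle, is the symmetrizability condition: exhibiting nonzero integers (or nonzero rationals, since the definition only asks for non-zero $d_i$) $d_j^{[i]}$ with $d_{j_1}^{[i]}c_{j_1j_2}^{[i]}=d_{j_2}^{[i]}c_{j_2j_1}^{[i]}$. The natural strategy is to produce the new diagonal symmetrizer from the old one by a simple rescaling dictated by the table: the reproduction procedure corresponds to conjugating the symmetrized Cartan matrix $B=DC$ by the transposition, so I expect $B^{[i]}$ to again be symmetric and hence $D^{[i]}$ to be read off from $B^{[i]}=D^{[i]}C^{[i]}$. The delicate part is that the entries of $C^{[i]}$ involve the parameters $q_1,q_2,q_3,q_4$ nonlinearly (e.g.\ $c_{jk}^{[i]}=\frac{q_3}{q_4}+\frac{q_1}{q_2}+1$), so one must check case by case that the candidate $d_j^{[i]}$ genuinely symmetrizes the full matrix and not merely the changed block — in particular that the interaction between a changed row $j$ and its neighbors $k$ with $c_{jk}\ne 0$ remains compatible with a single global diagonal $D^{[i]}$.

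I would finish by remarking that this symmetrizability is most cleanly seen conceptually rather than by brute force: the reproduction procedure is, by construction, the shadow on Cartan data of the transposition of adjacent factors in a complete factorization (Lemma \ref{lem:A-bij-flag-factor} and the discussion around \eqref{eq:switch}), and such a transposition corresponds to a Weyl-type symmetry of the underlying root system, which preserves the existence of an invariant symmetric bilinear form. Thus the existence of $D^{[i]}$ is guaranteed abstractly, and the table merely records the explicit values; a short verification in the one or two representative cases (say the first and last rows of Table \ref{table:1}) then suffices to confirm the integrality/nonvanishing of the symmetrizer and complete the proof.
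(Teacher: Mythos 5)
Your core plan --- a row-by-row verification of the three defining conditions directly from Table \ref{table:1} --- is exactly the paper's proof; the paper disposes of the lemma with a single sentence saying it is proved by case-by-case checking. Your first three paragraphs (diagonal entries, zero pattern, and a candidate rescaled symmetrizer $D^{[i]}$ checked against the full matrix in every case) would, if carried out in all six cases, constitute precisely that proof. Incidentally, the definition in the paper does ask for \emph{integer} $d_i$, not merely nonzero ones, but this is harmless: a nonzero rational symmetrizer becomes an integer one after clearing denominators.

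The genuine gap is in your final paragraph, where you propose to replace the full check by an abstract argument plus verification of ``one or two representative cases.'' The appeal to ``a Weyl-type symmetry of the underlying root system'' is not available at this level of generality: a generalized Cartan matrix here is an abstract matrix --- its entries need not be integers, it need not come from any Lie superalgebra, and $q_1,\dots,q_4$ are arbitrary nonzero integers --- so there is no root system in the background. One can of course build a realization from a symmetrizer of $C$ (vectors $\alpha_a$ with Gram matrix $DC$) and apply an odd-reflection-type change of basis, but the assertion that the resulting Gram matrix, rescaled by a diagonal, reproduces the formulas of Table \ref{table:1} is precisely the content of the lemma; taking it for granted is circular, and verifying it for the first and last rows says nothing about the other four, whose formulas are genuinely different. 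A second, smaller instance of the same gap sits in your zero-pattern paragraph: for two changed rows $j,k$ the new entry $c^{[i]}_{jk}=q_1+q_2+q_2q_3$ can perfectly well vanish, and the claim that $c^{[i]}_{kj}$ then vanishes simultaneously is not visible in the table alone, since the $2\times 3$ blocks do not display the interaction of two changed rows. It follows only from the symmetrizer of $C$: for example, when both rows fall under the first case of the table, $d_jc_{jk}=d_kc_{kj}$ together with $d_j=-d_ic_{ij}$, $d_k=-d_ic_{ik}$ gives $c_{ij}c_{jk}=c_{ik}c_{kj}$, which is exactly what forces the two new entries to vanish together. So the word ``forced'' in your argument must be backed by the old symmetrizer, not by inspection of the table, and this cross-row consistency is the part of the case-by-case check that cannot be skipped.
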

\begin{proof}
The lemma is proved using case by case checking.
\end{proof}

Now we are ready to formulate the main result of this section.

Let $\bm y=(y_1,\dots,y_{r})$ be a sequence of polynomials representing a solution of the Bethe ansatz equation \eqref{eq:BAE-poly-T-nos} associated with admissible $(C,\bm T)$. Let $\bm y^{[i]}=(y_1,\dots,\tl y_i,\dots, y_r)$ be an immediate descendant of $\bm y$ in the $i$-th direction. Let $C^{[i]}$ and $\bm T^{[i]}$ be as given in Table \ref{table:1}. 
\begin{thm}\label{thm:general}
If $\bm y^{[i]}$ is generic with respect to $( C^{[i]},\bm T^{[i]})$, then $\bm y^{[i]}$ satisfies the Bethe ansatz equation \eqref{eq:BAE-poly-T-nos} associated with $(C^{[i]},\bm T^{[i]})$. 
\end{thm}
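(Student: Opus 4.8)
The plan is to deduce everything from the Wronskian/divisibility criterion of Theorem~\ref{thm:wr-div}. Since $\bm y^{[i]}$ is assumed generic with respect to $(C^{[i]},\bm T^{[i]})$, that criterion says $\bm y^{[i]}$ represents a solution of \eqref{eq:BAE-poly-T-nos} for $(C^{[i]},\bm T^{[i]})$ if and only if $\bm y^{[i]}$ is fertile in every direction, i.e.\ for each $k\in I$ there is a polynomial witness satisfying the relevant line of \eqref{eq:wr-div} for the new data. I would therefore reduce the statement to constructing such a witness direction by direction, organizing the directions $k$ into three groups according to their relation to the reproduction direction $i$.

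The two easy groups are $k=i$ and the directions $k=j\neq i$ with $c_{ij}=0$. For $k=i$, recall $c_{ii}=0$, and by the rules defining $(C^{[i]},\bm T^{[i]})$ the $i$-th row of $C$ together with $T_i$ is unchanged, so $c_{ii}^{[i]}=0$ and $T_i^{[i]}=T_i$. Writing out the $i$-th (division) line of \eqref{eq:wr-div} for $\bm y^{[i]}$ and using that the factor from the $i$-th column is trivial (exponent $-c_{ii}^{[i]}=0$), its right-hand side coincides verbatim with the defining relation of $\tl y_i$, namely the $i$-th line of \eqref{eq:wr-div} for $\bm y$; hence $y_i$ itself is the required witness. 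This is the concrete manifestation of the involutivity $(C^{[i]})^{[i]}=C$, $(\bm T^{[i]})^{[i]}=\bm T$. For $k=j$ with $c_{ij}=0$, and hence $c_{ji}=0$, the $j$-th row and $T_j$ are unchanged, and $y_i$ enters the $j$-th relation only with the trivial exponent $-c_{ji}=0$ (and does not appear in the product over $\{l:c_{jl}\neq0\}$), so replacing $y_i$ by $\tl y_i$ leaves this relation intact and the witness $\tl y_j$ from the fertility of $\bm y$ still works.

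The substance lies in the remaining group: directions $j\neq i$ with $c_{ij}\neq0$, where both the $j$-th row of the Cartan matrix and $T_j$ change according to Table~\ref{table:1}. Here I would verify the new $j$-th line of \eqref{eq:wr-div} by a case analysis over the six configurations of $(c_{ik},c_{ji},c_{jj})$ listed in the table. In every case the mechanism is the same: solve the defining relation of $\tl y_i$ for the logarithmic-derivative factor $\ln'\!\big(T_i\prod_l y_l^{-c_{il}}\big)$ and substitute it into the $j$-th relation for the new data, thereby trading the explicit $T_i$ and $\pi_i$ for $y_i$, $\tl y_i$, and the prescribed $T_j^{[i]}$ and exponents $c_{jk}^{[i]}$. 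The fermionic--bosonic bookkeeping is the delicate point: in the rows where the diagonal entry flips between $c_{jj}=2$ and $c_{jj}^{[i]}=0$ one must convert the Wronskian line of \eqref{eq:wr-div} into the division line (or conversely), which is achieved by combining $\Wr(y_j,\tl y_j)=T_j\prod_k y_k^{-c_{jk}}$ with the additivity $\ln'(fg)=\ln' f+\ln' g$; the rows that keep node $j$ bosonic but modify $c_{ji}$ and $T_j$ are handled by the same substitution into the Wronskian line of \eqref{eq:wr-div}.

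After the substitution one reads off the candidate $\hat y_j$ and must verify two things: that $\hat y_j$ is an honest polynomial, and that the $T_j^{[i]}$ and exponents $c_{jk}^{[i]}$ produced by the computation agree exactly with the entries prescribed in Table~\ref{table:1}. Polynomiality is where the genericity hypotheses on $\bm y^{[i]}$ enter: the conditions that adjacent $y$'s share no common roots and that roots avoid the zeros of $\pi_j$ ensure that the rational function obtained after clearing denominators has no poles. I expect the main obstacle to be precisely this case-by-case matching, and in particular keeping track, without sign or normalization errors, of the fractional exponents and the powers of $\pi_i$ occurring in $T_j^{[i]}$ for the rows that flip the parity of node $j$. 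Several configurations specialize to the $\gl_{m|n}$ computation already carried out in \cite{HMVY19}, so I would treat those by reference and concentrate the new work on the entries where $|c_{ji}|>1$, which are peculiar to the non-simply-laced orthosymplectic data.
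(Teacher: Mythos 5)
There is a genuine gap at the very first step of your plan. Theorem \ref{thm:wr-div} is stated (and proved) only for \emph{admissible} pairs: the equivalence ``solution $\iff$ fertile in every direction'' requires that for every $l$ with $c_{ll}=2$ one has $c_{lk}\in\Z_{\lle 0}$ for $k\neq l$ and $T_l$ a polynomial. Theorem \ref{thm:general} does not assume that $(C^{[i]},\bm T^{[i]})$ is admissible, and admissibility of $(C,\bm T)$ does not imply it: Table \ref{table:1} produces, e.g., $T_j^{[i]}=T_i^{1/q_2}/(\pi_i T_j^{1/q_4})$, which is in general not a polynomial (fractional powers), and new entries such as $c^{[i]}_{jk}=\frac{q_3}{q_4}+\frac{q_1}{q_2}+1$ sitting in a row whose diagonal entry is $2$, which need not lie in $\Z_{\lle 0}$. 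When the new pair is not admissible, the right-hand sides of \eqref{eq:wr-div} for the new data need not be polynomials, so polynomial witnesses can fail to exist even when the Bethe ansatz equations hold; the criterion you want to invoke simply is not available. Your reduction therefore proves at best a weaker statement in which admissibility of $(C^{[i]},\bm T^{[i]})$ is added as a hypothesis.

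The paper's proof is designed exactly to avoid this: it never applies Theorem \ref{thm:wr-div} to the new pair. Instead it verifies the vanishing conditions \eqref{eq:BAE-poly-T-nos} for $(C^{[i]},\bm T^{[i]})$ directly at the roots of $y_j$, for each $j$ with $c_{ij}\neq 0$ (the other directions being unaffected): one differentiates the defining relation $y_i\tl y_i=\ln'\big(T_i\prod_l y_l^{-c_{il}}\big)\pi_i\prod_{l:\,c_{il}\neq 0}y_l$, divides by it, works modulo $y_j$ (this is where genericity of $\bm y$ and of $\bm y^{[i]}$ enters, to make the congruences legitimate), and adds the old Bethe ansatz equation for color $j$; the result is precisely the new equation for color $j$ modulo $y_j$, with the entries of $C^{[i]}$ and $\bm T^{[i]}$ from Table \ref{table:1} emerging from the computation. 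This argument makes sense for arbitrary generalized Cartan matrices and arbitrary functions $T_l=\prod_a(x-z_a)^{\mu_a}$. A secondary point: even in the admissible situations (which do cover all the orthosymplectic applications), your outline leaves the real work---establishing that the candidate witnesses $\hat y_j$ in the changed directions are actually polynomials, i.e.\ the divisibility statements---as a sketch; that verification is of essentially the same difficulty as the paper's modular computation, so the proposed route would not even save effort where it applies.
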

\begin{proof}
We write the proof for the case corresponding to the first row of Table \ref{table:1}. The other cases are similar. 

We say a rational function $f(x)$ is zero modulo polynomial $y(x)$ if $f(x)=p(x)/q(x)$ where $p(x)$ and $q(x)$ are relatively prime and $y(x)$ divides $p(x)$.  

From \eqref{eq:BAE-poly-T-nos} we have modulo $y_{j}$
\beq\label{eq:yy2}
\frac{y_{j}''}{y_{j}'}\equiv\frac{T_{j}'}{T_{j}}+q_3\frac{y_{k}'}{y_{k}}+\frac{y_i'}{y_i}+\cdots.
\eeq
Here and below by the dots we mean the terms which in the final 
result \eqref{the last one} create terms which depend only on $y_a$ with $a\neq i,j,k$.

Since $(C,\bm T)$ is admissible, by Theorem \ref{thm:wr-div}, there exists a polynomial $\tl y_i$ such that
\beq\label{eq:pf-rep-1} 
y_i\tl y_i=\ln'\Big(\frac{T_{i}}{y_{k}^{q_1}y_{j}^{q_2}}\Big)\pi_iy_{k}y_{j}+\cdots=\Big(\frac{T_{i}'}{T_{i}}-q_1\frac{y_{k}'}{y_{k}}\Big)\pi_i  y_{k}y_{j}-q_2\pi_i y_{k}y_{j}'+\cdots.
\eeq

Differentiating both sides, we have modulo $y_j$
\begin{align*}
y_i'\tl y_i+y_i\tl y_i'\equiv\Big(\frac{T_{i}'}{T_{i}}-q_1\frac{y_{k}'}{y_{k}}\Big)\pi_i  y_{k}y_{j}'-q_2\pi_i y_{k}y_{j}''-q_2\pi_i' y_{k}y_{j}'-q_2\pi_i y_{k}'y_{j}'+\cdots.
\end{align*}
Here and below we use that $\bs y$ is generic with respect $(C,\bm T)$.
Dividing further by \eqref{eq:pf-rep-1}, we have modulo  $y_{j}$
\[
\frac{y_i'}{y_i}+\frac{\tl y_i'}{\tl y_i}\equiv -\frac{1}{q_2}\Big(\frac{T_{i}'}{T_{i}}-q_1\frac{y_{k}'}{y_{k}}\Big)+\frac{y_{j}''}{y_{j}'}+\frac{y_{k}'}{y_{k}}+\frac{\pi_i'}{\pi_i}+\cdots.
\]
Adding \eqref{eq:yy2}, we obtain
\begin{equation}\label{the last one}
\Big(q_2\frac{T_{j}'}{T_{j}}+q_2\frac{\pi_i'}{\pi}-\frac{T_{i}'}{T_{i}}\Big)+(q_1+q_2+q_2q_3)\frac{y_{k}'}{y_{k}} - q_2\frac{\tl y_i'}{\tl y_i}+\cdots\equiv 0
\end{equation}
modulo $y_{j}$, completing the proof.
\end{proof}

If $c_{ii}\ne 0$, we can construct a family of sequences of polynomials $\bm y_c^{[i]}=(y_1,\dots,\tl y_i+cy_i,\dots, y_r)$. We call this construction the \emph{bosonic reproduction procedure in the $i$-th direction}.

If $c_{ii}= 0$, we can construct a single new sequence of polynomials $\bm y^{[i]}=(y_1,\dots,\tl y_i,\dots, y_r)$. We call this construction the \emph{fermionic reproduction procedure in the $i$-th direction}. Note that $\bm y^{[i]}$ is fertile in the $i$-th direction with respect to $(C^{[i]},\bm T^{[i]})$, and $(\bm y^{[i]})^{[i]}=\bm y$.

\subsection{Reproduction procedure for $\glMN$}

In this section, we apply Theorem \ref{thm:general} to the case of $\g=\gl_{m|n}$. Here we only need the first four rows of Table \ref{table:1} with $q_1=-1$, $q_2=q_4=1$, and $q_3=0$. Let $\bm s\in S_{m|n}$. 
Let $C^{\bm s}$ be the Cartan matrix of $\glMN$ associated with the parity sequence $\bm s$, then direct computations show that $(C^{\bm s})^{[i]}$ is exactly the Cartan matrix $C^{\bm s^{[i]}}$ of $\glMN$ associated with the parity sequence $\bm s^{[i]}$.

Let $\bla=(\la_1,\dots,\la_p)$ be a sequence of dominant integral $\glMN$-weights, $\bm z=(z_1,\dots,z_p)$ a sequence of pairwise distinct complex numbers. Let $\bm P^{\bm s}=(P_1^{\bm s},\dots,P_{r}^{\bm s})$ and $\bm T^{\bm s}=(T_1^{\bm s},\dots,T_{r-1}^{\bm s})$ be the sequences of rational functions associated with $\bla$, $\bm z$, and $\bm s$, see \eqref{eq:p-polynomials} and \eqref{eq:T-polynomials}. Note that in this case, $(C^{\bs s},\bm T^{\bs s})$ is admissible for all $\bm s\in S_{m|n}$.

We have $T_i^{\bm s}=P_i^{\bm s}(P_{i+1}^{\bm s})^{-s_is_{i+1}}$. 

Denote $\pi(T_i^{\bm s})$ by $\pi_i^{\bm s}$. 

For $1\lle i\lle  r-1$, recall that $\bm s^{[i]}=(s_1,\dots,s_{i+1},s_i,\dots,s_r)$. Clearly, if $s_i=s_{i+1}$, then $\bm P^{\bm s^{[i]}}=\bm P^{\bm s}$ and if $s_i\ne s_{i+1}$, then
\beq\label{eq:P-change}
\bm P^{\bm s^{[i]}}=(P_1^{\bm s},\dots,P_{i+1}^{\bm s}\pi_i^{\bm s},P_i^{\bm s}(\pi_i^{\bm s})^{-1},\dots,P_{r}^{\bm s}),
\eeq
see \cite[Lemma 6.1]{HMVY19} (note that $P_i^{\bm s}$ are denoted by $T_i^{\bm s}$ there) and cf. Lemma \ref{lem: weight change}.  

It is straightforward to check that $(\bm T^{\bm s})^{[i]}$ obtained from Table \ref{table:1} coincides with the sequence $\bm T^{\bm s^{[i]}}$ of rational functions associated with $\bla$, $\bm z$, and $\bm s^{[i]}$.

Then in the case of $\gl_{m|n}$, Theorem \ref{thm:general} and the reproduction procedure take the following form. Set $y_0(x)=y_{r}(x)=1$.
\begin{thm}[{\cite[Theorem 6.2]{HMVY19}}]\label{thm:A-rep}
Let $\bm y=(y_1,\dots,y_{r-1})$ be a sequence of polynomials generic with respect to $(C^{\bm s}, \bm T^{\bm s})$.
\begin{enumerate}
    \item The sequence $\bm y$ represents a solution of the Bethe ansatz equation \eqref{eq:BAE-poly-T-nos} associated with $(C^{\bm s}, \bm T^{\bm s})$, if and only if for each $1\lle i\lle r-1$, there exists a polynomial $\tl y_i$ satisfying
    \begin{align}
    &\Wr(y_i,\tl y_i)=P_i^{\bm s}(P_{i+1}^{\bm s})^{-1}y_{i-1}y_{i+1},\quad &\text{ if }s_i=s_{i+1},\label{eq:A-bos}\\
    &y_i\tl y_i=\ln'\Big(\frac{P_i^{\bm s}P_{i+1}^{\bm s}y_{i-1}}{y_{i+1}}\Big)\pi_i^{\bm s}y_{i-1}y_{i+1},\quad &\text{ if }s_i\ne s_{i+1}.\label{eq:A-fer}
    \end{align}
    \item  If $\bm y^{[i]}=(y_1,\dots,\tl y_i,\dots,y_{r-1})$ is generic with respect to $(C^{\bm s^{[i]}}, \bm T^{\bm s^{[i]}})$, then $\bm y^{[i]}$ represents a solution of the Bethe ansatz equation associated with $(C^{\bm s^{[i]}}, \bm T^{\bm s^{[i]}})$.\qed
\end{enumerate}
\end{thm}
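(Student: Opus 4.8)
The plan is to obtain both parts by specializing the general machinery of Section~\ref{sec: reproduction} to the pair $(C^{\bm s},\bm T^{\bm s})$, rather than reproving anything from scratch. The key preliminary observation, already recorded in the text, is that $(C^{\bm s},\bm T^{\bm s})$ is admissible for every $\bm s\in S_{m|n}$: whenever $c_{ii}^{\bm s}=2$ (equivalently $s_i=s_{i+1}$) the off-diagonal entries of row $i$ lie in $\Z_{\lle 0}$ and $T_i^{\bm s}=P_i^{\bm s}(P_{i+1}^{\bm s})^{-1}$ is a genuine polynomial. This licenses the use of Theorem~\ref{thm:wr-div} and Theorem~\ref{thm:general}.

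For part~(1) I would substitute the explicit Cartan data into the two cases of \eqref{eq:wr-div}, so that the ``if and only if'' is inherited directly from the one in Theorem~\ref{thm:wr-div}. Since the Dynkin diagram is of type $A$, one has $c_{ij}^{\bm s}=0$ for $|i-j|\gge 2$, while $c_{i,i-1}^{\bm s}=s_ib_{i,i-1}^{\bm s}=-s_i^2=-1$ and $c_{i,i+1}^{\bm s}=s_ib_{i,i+1}^{\bm s}=-s_is_{i+1}$. In the bosonic case $s_i=s_{i+1}$ we get $c_{ii}^{\bm s}=2$, $c_{i,i\pm1}^{\bm s}=-1$, and $T_i^{\bm s}=P_i^{\bm s}(P_{i+1}^{\bm s})^{-1}$, so the first line of \eqref{eq:wr-div} reads $\Wr(y_i,\tl y_i)=T_i^{\bm s}\,y_{i-1}y_{i+1}$, which is \eqref{eq:A-bos}. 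In the fermionic case $s_i\ne s_{i+1}$ we have $c_{ii}^{\bm s}=0$, $c_{i,i-1}^{\bm s}=-1$, $c_{i,i+1}^{\bm s}=1$, and $T_i^{\bm s}=P_i^{\bm s}P_{i+1}^{\bm s}$; plugging into the second line of \eqref{eq:wr-div} and using $\prod_j y_j^{-c_{ij}}=y_{i-1}y_{i+1}^{-1}$ gives $y_i\tl y_i=\ln'(P_i^{\bm s}P_{i+1}^{\bm s}y_{i-1}/y_{i+1})\,\pi_i^{\bm s}\,y_{i-1}y_{i+1}$, which is exactly \eqref{eq:A-fer}. With the convention $y_0=y_r=1$ the boundary colors $i=1,r-1$ need no separate treatment.

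For part~(2) I would invoke Theorem~\ref{thm:general} directly. It asserts that a generic immediate descendant $\bm y^{[i]}$ solves the Bethe ansatz equation attached to the transformed data $((C^{\bm s})^{[i]},(\bm T^{\bm s})^{[i]})$ produced by Table~\ref{table:1}, so it suffices to identify this transformed data with $(C^{\bm s^{[i]}},\bm T^{\bm s^{[i]}})$. The Cartan part is a case-by-case match against rows one through four of Table~\ref{table:1} with $q_1=-1$, $q_2=q_4=1$, $q_3=0$: for a bosonic $i$ nothing changes and $\bm s^{[i]}=\bm s$, while for a fermionic $i$ the table flips the diagonal entries of the neighbors $i\pm1$ between $0$ and $2$, which is precisely the effect of transposing $s_i$ and $s_{i+1}$. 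For the potentials I would use \eqref{eq:P-change}, namely $P_i^{\bm s^{[i]}}=P_{i+1}^{\bm s}\pi_i^{\bm s}$ and $P_{i+1}^{\bm s^{[i]}}=P_i^{\bm s}(\pi_i^{\bm s})^{-1}$, then recompute $T_{i-1}^{\bm s^{[i]}},T_i^{\bm s^{[i]}},T_{i+1}^{\bm s^{[i]}}$ from \eqref{eq:T-polynomials} and check agreement with the entries $T_j^{[i]}$ of Table~\ref{table:1}.

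The main obstacle is precisely this last bookkeeping step: fermionic reproduction at color $i$ alters not only $T_i$ but also the neighboring potentials $T_{i-1}$ and $T_{i+1}$, and one must track the $\pi_i^{\bm s}$ factors and the parity-dependent exponents $-s_js_{j+1}$ simultaneously to confirm $(\bm T^{\bm s})^{[i]}=\bm T^{\bm s^{[i]}}$. Once the transformed data is identified, part~(2) is immediate from Theorem~\ref{thm:general} and no further analysis is needed.
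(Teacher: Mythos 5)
Your proposal is correct and follows exactly the route the paper takes: the paper presents Theorem~\ref{thm:A-rep} as the specialization of Theorem~\ref{thm:wr-div} and Theorem~\ref{thm:general} to $\gl_{m|n}$, using the first four rows of Table~\ref{table:1} with $q_1=-1$, $q_2=q_4=1$, $q_3=0$, together with the checks that $(C^{\bm s})^{[i]}=C^{\bm s^{[i]}}$ and, via \eqref{eq:P-change}, that $(\bm T^{\bm s})^{[i]}=\bm T^{\bm s^{[i]}}$. Your explicit computations of the Cartan entries and of the products $\prod_j y_j^{-c_{ij}}$ in both the bosonic and fermionic cases match what the paper leaves as ``direct computations.''
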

Note that in the theorem we can remove all $P_j^{\bm s}$ and replace them with a single $T_i^{\bm s}$. For example, if $s_i=s_{i+1}$, equation \eqref{eq:A-bos} is equivalent to
$$
\Wr(y_i,\tl y_i)=T_i^{\bm s} y_{i-1}y_{i+1}. $$
It is convenient to have the form with $P_i^{\bm s}$ for rational pseudo-differential operators we use below.

Note that the reproduction procedure in the $i$-th direction is bosonic if $s_i=s_{i+1}$, and fermionic if $s_i\neq s_{i+1}$. Bosonic reproduction procedure does not change parity (and therefore the Cartan matrix) and fermionic does.

If $\bm y^{[i]}$ is fertile in some direction with respect to $(C^{\bm s^{[i]}}, \bm T^{\bm s^{[i]}})$, then we can apply reproduction procedure again.

Denote by $\mathscr P_{\bm y,\bm s}^{A}\subset (\mathbb P(\C[x]))^{r-1}\times S_{m|n}$
 the set of all pairs obtained from the initial pair $(\bm y,\bm s)$ by repeatedly applying  reproduction procedure as much as possible. We call $\mathscr P^{A}_{\bm y,\bm s}$ the $\gl_{m|n}$ \emph{population of solutions of the Bethe ansatz equation} originated from $(\bm y,\bm s)$.

There is a rational pseudo-differential operator which is invariant under reproduction procedure. Let $\bm y$ be as in Theorem \ref{thm:A-rep}. Define a rational pseudo-differential operator $\mathscr R_{\bm y,\bm s}$ over $\C(x)$
\beq\label{eq:diff-oper}
\mathscr R_{\bm y,\bm s}=\mathop{\overrightarrow\prod}\limits_{1\lle i\lle r}\Big(\pa-s_i\ln'\frac{P_i^{\bm s}y_{i-1}}{y_i}\Big)^{s_i}.
\eeq
\begin{thm}[{\cite[Theorem 6.3]{HMVY19}}]\label{thm:A-inv}
Let $\mathscr P$ be a $\gl_{m|n}$ population. Then the rational pseudo-differential operator $\mathscr R_{\bm y,\bm s}$ is independent of the choice of the pair $(\bm y,\bm s)$ in $\mathscr P$.\qed
\end{thm}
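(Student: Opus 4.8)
The plan is to reduce the statement to a single reproduction step and then treat the two cases separately. By definition $\mathscr P$ is generated from $(\bm y,\bm s)$ by successively applying the reproduction procedure of Theorem~\ref{thm:A-rep}, so by induction on the number of steps it suffices to prove $\mathscr R_{\bm y,\bm s}=\mathscr R_{\bm y^{[i]},\bm s^{[i]}}$ for every immediate descendant $\bm y^{[i]}$ of $\bm y$ in a direction $i$ (and, in the bosonic case, for every member of the resulting one-parameter family). Examining \eqref{eq:diff-oper}, the factor at position $j$ involves only $P_j^{\bm s}$, $y_{j-1}$, $y_j$, $s_j$, while a reproduction in the $i$-th direction alters only $y_i$ and, via \eqref{eq:P-change}, the signs and rational functions at positions $i$ and $i+1$; hence every factor with $j\neq i,i+1$ is literally unchanged, and it is enough to show that the ordered product of the two factors at positions $i$ and $i+1$ is preserved.

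In the \emph{bosonic} case $s_i=s_{i+1}=:s$ the parity is unchanged and only $y_i$ is replaced by $\tl y_i$. The block $(\pa-s\ln'(P_i^{\bm s}y_{i-1}/y_i))^{s}(\pa-s\ln'(P_{i+1}^{\bm s}y_i/y_{i+1}))^{s}$ is a monic second-order differential operator if $s=1$ and the inverse of one if $s=-1$; in either case it is determined by its kernel. I would compute this kernel by the integrating-factor method: besides one exponential solution, a second is obtained by a single integration, and the Wronskian relation \eqref{eq:A-bos}, $\Wr(y_i,\tl y_i)=P_i^{\bm s}(P_{i+1}^{\bm s})^{-1}y_{i-1}y_{i+1}$, shows that the relevant integrand equals the exact derivative $(\tl y_i/y_i)'$. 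Thus the kernel is a common rational multiple of $\langle y_i,\tl y_i\rangle$, which is symmetric under $y_i\leftrightarrow\tl y_i$ and, more generally, unchanged under $\tl y_i\mapsto\tl y_i+c\,y_i$. Since a differential operator is determined by its kernel, the block is preserved.

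In the \emph{fermionic} case $s_i\neq s_{i+1}$ the block is a $(1|1)$-rational pseudo-differential operator and the switch identity \eqref{eq:switch} does the work. Taking $s_i=1$, $s_{i+1}=-1$ (the opposite case is symmetric), I write the block as $(\pa-a)(\pa-b)^{-1}$ with $a=\ln'(P_i^{\bm s}y_{i-1}/y_i)$ and $b=\ln'(y_{i+1}/(P_{i+1}^{\bm s}y_i))$, so that $a-b=\ln'(P_i^{\bm s}P_{i+1}^{\bm s}y_{i-1}/y_{i+1})$. The fermionic relation \eqref{eq:A-fer} then gives $a-b=y_i\tl y_i/(\pi_i^{\bm s}y_{i-1}y_{i+1})$. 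Substituting into $c=b+\ln'(a-b)$ and $d=a+\ln'(a-b)$ and telescoping the $\ln' y_i$ and $\ln' y_{i-1}$ terms yields $c=\ln'(\tl y_i/(P_{i+1}^{\bm s}\pi_i^{\bm s}y_{i-1}))$ and $d=\ln'(P_i^{\bm s}(\pi_i^{\bm s})^{-1}\tl y_i/y_{i+1})$. By the shifts $P_i^{\bm s^{[i]}}=P_{i+1}^{\bm s}\pi_i^{\bm s}$ and $P_{i+1}^{\bm s^{[i]}}=P_i^{\bm s}(\pi_i^{\bm s})^{-1}$ of \eqref{eq:P-change}, these are exactly the logarithmic derivatives appearing in the two factors of $\mathscr R_{\bm y^{[i]},\bm s^{[i]}}$ at positions $i,i+1$, now carrying the interchanged signs $s_{i+1},s_i$. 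Hence $(\pa-a)(\pa-b)^{-1}=(\pa-c)^{-1}(\pa-d)$ is precisely the reproduced block, and it is preserved.

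The main obstacle is the fermionic bookkeeping: one must verify that the $\pi_i^{\bm s}$ factors produced by \eqref{eq:switch} coincide with the $P$-shifts of \eqref{eq:P-change}, and that the interchange of $s_i$ and $s_{i+1}$ is correctly accounted for, so that $c$ is read off at position $i$ and $d$ at position $i+1$ with the right exponents rather than in the transposed arrangement. The bosonic kernel computation is the other key step, but it is routine once \eqref{eq:A-bos} identifies the integrand as an exact derivative.
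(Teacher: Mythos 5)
Your proof is correct: the reduction to a single reproduction step, the observation that only the factors at positions $i$ and $i+1$ change, the kernel computation via \eqref{eq:A-bos} showing the bosonic block has kernel a common rational multiple of $\langle y_i,\tl y_i\rangle$, and the identification of the fermionic block with $(\pa-c)^{-1}(\pa-d)$ via \eqref{eq:switch}, \eqref{eq:A-fer} and \eqref{eq:P-change} all check out (including the sign bookkeeping, since $a-b=y_i\tl y_i/(\pi_i^{\bm s}y_{i-1}y_{i+1})\neq 0$ so the switch identity applies). The paper itself gives no proof of this statement, quoting it from \cite[Theorem 6.3]{HMVY19}, and your argument is essentially a faithful reconstruction of the proof given there, so there is nothing to add.
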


\medskip

Let $\bm y$ represent a solution of the Bethe ansatz equation associated with $(C^{\bm s_+},T^{\bm s_+})$. 
Suppose at least one of the weights $\la_i$ is typical. Then $\mathscr R_{\bm y,\bm s_+}$ is an $(m|n)$-rational pseudo-differential operator, cf. \cite[Proposition 7.7]{HMVY19}. Let $W=V\oplus U$ be the superkernel of $\mathscr R_{\bm y,\bm s_+}$. We have $\dim V=m$, $\dim U=n$, $\dim W=m+n$. 

\begin{thm}[{\cite[Theorem 7.9]{HMVY19}}]\label{thm:A-bijection}
Assume $\mathscr R_{\bm y,\bm s_+}$ is an $(m|n)$-rational pseudo-differential operator. Then there exist bijections between the $\glMN$ population $\mathscr P_{\bm y,\bm s_+}^A$,  the set of complete factorizations  $\mc F(\mathscr R_{\bm y,\bm s_+})$, and the space $\sF(W)$ of full superflags in $W$.\qed
\end{thm}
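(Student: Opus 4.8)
The plan is to factor the desired bijections through the one already available. By Lemma~\ref{lem:A-bij-flag-factor} the map $\varpi$ is a bijection $\sF(W)\to\mc F(\mathscr R_{\bm y,\bm s_+})$, so it suffices to produce a bijection $\Phi\colon \mathscr P_{\bm y,\bm s_+}^A\to \mc F(\mathscr R_{\bm y,\bm s_+})$ and then compose. I would define $\Phi$ by sending a population point $(\bm y',\bm s')$ to its associated complete $\bm s'$-factorization $\mathscr R_{\bm y',\bm s'}$ of \eqref{eq:diff-oper}. By the invariance Theorem~\ref{thm:A-inv}, one has $\mathscr R_{\bm y',\bm s'}=\mathscr R_{\bm y,\bm s_+}$ throughout the population, so the right-hand side of \eqref{eq:diff-oper} is a genuine complete $\bm s'$-factorization of $\mathscr R_{\bm y,\bm s_+}$, and $\Phi$ is well defined with values in $\mc F(\mathscr R_{\bm y,\bm s_+})$.

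For injectivity, I observe that a complete factorization $(\pa-g_1)^{\epsilon_1}\cdots(\pa-g_r)^{\epsilon_r}$ remembers its parity through the exponents, so $\bm s'=(\epsilon_1,\dots,\epsilon_r)$ is recovered. Since $g_i=s_i'\ln'\big(P_i^{\bm s'}y'_{i-1}/y'_i\big)$ with $y'_0=1$, and the functions $P_i^{\bm s'}$ are determined by the fixed data $\bla,\bm z$ and the recovered $\bm s'$ via \eqref{eq:p-polynomials}, one solves recursively: $g_1$ determines $y'_1$, then $g_2$ determines $y'_2$, and so on up to $y'_{r-1}$, each up to a nonzero constant. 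As points of $\mathscr P_{\bm y,\bm s_+}^A$ live in $(\mathbb P(\C[x]))^{r-1}\times S_{m|n}$, these constants are irrelevant and $(\bm y',\bm s')$ is recovered; hence $\Phi$ is injective.

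The main work, and the main obstacle, is surjectivity: every complete factorization must be reachable from the base point $(\bm y,\bm s_+)$ by reproduction. The key point is that reproduction, read through $\Phi$, realizes the two elementary operations on factorizations. A fermionic reproduction in a direction $i$ with $s_i'\ne s_{i+1}'$ exchanges the adjacent factors $(\pa-g_i)^{s_i'}(\pa-g_{i+1})^{s_{i+1}'}$ by the $(1|1)$-identity \eqref{eq:switch}, yielding the complete $(\bm s')^{[i]}$-factorization with the parities at positions $i,i+1$ swapped; this is precisely Theorem~\ref{thm:A-rep}(2) together with \eqref{eq:P-change}. A bosonic reproduction in a direction $i$ with $s_i'=s_{i+1}'$ preserves the parity and, through $\tl y_i+c y_i$, produces a $\mathbb P^1$-family of factorizations differing only in $g_i$; geometrically this is the pencil of modifications of the $i$-th step of the flag in $V$ (if $s_i'=1$) or in $U$ (if $s_i'=-1$) lying between its $(i-1)$-st and $(i+1)$-st steps.

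Surjectivity then follows from a connectivity argument. First, any parity sequence $\bm s\in S_{m|n}$ is obtained from $\bm s_+$ by a sequence of adjacent transpositions at unequal entries, each of which is a fermionic reproduction, so every parity stratum $\mc F^{\bm s}(\mathscr R_{\bm y,\bm s_+})$ is visited. Second, inside a fixed stratum the identifications $\mc F^{\bm s}(\mathscr R_{\bm y,\bm s_+})\cong \sF^{\bm s}(W)\cong \sF(V)\times \sF(U)$ reduce the claim to the even $\gl_N$ statement of \cite{MV04}, namely that successive bosonic reproductions in the $V$- and $U$-directions sweep out the entire product of flag varieties, since the elementary $\mathbb P^1$-modifications connect the whole flag variety of a kernel. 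Combining the two, every complete factorization lies in the image of $\Phi$, whence $\Phi$ is a bijection, and composing with $\varpi$ gives the stated bijections. The only delicate point is that fertility or genericity might a priori obstruct a move; this is handled by noting that every superflag corresponds, via Lemma~\ref{lem:A-bij-flag-factor}, to an honest factorization, and that the reproduction read on factorizations through \eqref{eq:switch} is always defined and involutive, so the combinatorially defined population matches the geometric orbit of superflags.
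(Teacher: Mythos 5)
The paper itself offers no proof of this statement: it is quoted from \cite[Theorem 7.9]{HMVY19}, and the remark immediately following it records precisely which parts hold under which hypotheses. Measured against that, your construction of $\Phi$ via Theorem \ref{thm:A-inv} and your injectivity argument (recovering $\bm s'$ from the exponents and the $y_i'$ recursively, up to scalars, from the factors) are correct; they establish the unconditional ``embedding'' part mentioned in that remark. The genuine gap is in surjectivity, and specifically in the sentence where you dismiss the fertility obstruction.

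The population $\mathscr P^A_{\bm y,\bm s_+}$ is by definition the orbit of the reproduction procedure acting on \emph{tuples of polynomials}: each step requires a polynomial $\tl y_i$ solving \eqref{eq:A-bos} or \eqref{eq:A-fer}. The switch \eqref{eq:switch} is indeed always defined at the level of rational pseudo-differential operators, but that is not the relevant point. By your own injectivity argument, the tuple attached to a factorization is determined up to scalars by the factorization; in the fermionic case the new entry is the rational function $\tl y_i=\ln'\big(P_i^{\bm s}P_{i+1}^{\bm s}y_{i-1}/y_{i+1}\big)\pi_i^{\bm s}y_{i-1}y_{i+1}/y_i$, and $y_i$ divides the numerator exactly when the divisibility conditions of Theorem \ref{thm:wr-div} hold. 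This is guaranteed when the current tuple is a generic solution of the Bethe ansatz equation, but not at the degenerate tuples that iterated reproduction inevitably produces. Consequently the image of $\Phi$ need not be stable under the factorization-level moves, and the connectivity argument collapses: the ``combinatorially defined population'' can a priori be a proper subset of the ``geometric orbit of superflags.'' This is precisely why \cite{HMVY19} proves surjectivity only for polynomial weights with a typicality assumption (equivalently, why the paper's remark invokes superfertility): the substance of the proof is showing that every superflag of $W$ yields, via \eqref{eq wronski coeff}, ratios of Wronskians that are logarithmic derivatives of \emph{polynomial} data, and that every resulting tuple is fertile --- an analysis of the structure of $W$ and divisibility by the $T_i$'s, not a formal consequence of the involutivity of \eqref{eq:switch}. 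A secondary, fixable imprecision: at a general parity $\bm s$ there may be few or no bosonic directions (e.g.\ alternating parities), so one cannot ``sweep a stratum from within'' as you claim; one should instead sweep the $\bm s_+$ stratum and transport it by the fermionic moves, which are bijections between strata since a fermionic step admits no continuous parameter.
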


\begin{rem}
In \cite{HMVY19}, this  theorem is formulated for the case of polynomial weights only. It also applies if $\bs y$ is assumed to be superfertile, see Section \ref{sec:superfertility}. Moreover, the population $\mathscr P_{\bm y,\bm s_+}^A$ is always embedded in the set of  complete factorizations  $\mc F(\mathscr R_{\bm y,\bm s_+})$ without any additional assumptions.
\end{rem}

\subsection{Reproduction procedure for $\osp_{2m+1|2n}$}\label{sec b}
In this section, we apply Theorem \ref{thm:general} to the case of $\g=\osp_{2m+1|2n}$. Since this section deals only with the case $\iota=1$, we work with parity sequences and not extended parity sequences. 

Let $\bm s\in S_{m|n}$.
Recall sequences $\bm s^{[i]}=(s_1,\dots,s_{i+1},s_i,\dots,s_r)$, $1\lle i\lle r-1,$
 and  set $\bm s^{[r]}=\bm s$.

Let $C^{\bm s}$ be the Cartan matrix of $\g$ associated with the parity sequence $\bm s$, then, as in the case of $\glMN$, direct computations show that $(C^{\bm s})^{[i]}$ is exactly the Cartan matrix $C^{\bm s^{[i]}}$ of $\g$ associated with the parity sequence $\bm s^{[i]}$. 

The only new case comparing to the $\glMN$ case is that the fermionic reproduction procedure in the $(r-1)$-st direction.
Here we use the last two rows of Table \ref{table:1} with $q_1=-1$, $q_2=q_4=1$, and $q_3=0$. In particular, the fermionic reproduction procedure in the $(r-1)$-st direction
does not change the $r$-th row of the Cartan matrix. Note that this implies the reproduction procedure in the $r$-th direction is always bosonic.

Let $\bla=(\la_1,\dots,\la_p)$ be a sequence of dominant integral $\osp_{2m+1|2n}$-weights, $\bm z=(z_1,\dots,z_p)$ a sequence of pairwise distinct complex numbers. Let $\bm P^{\bm s}=(P_1^{\bm s},\dots,P_{r}^{\bm s})$ and $\bm T^{\bm s}=(T_1^{\bm s},\dots,T_{r}^{\bm s})$ be the sequences of rational functions associated with $\bla$, $\bm z$, and $\bm s$, see \eqref{eq:p-polynomials} and \eqref{eq:T-polynomials}. Note that in this case, $(C^{\bs s},\bm T^{\bs s})$ is admissible for all $\bm s\in S_{m|n}$.

We have $T_i^{\bm s}=P_i^{\bm s}(P_{i+1}^{\bm s})^{-s_is_{i+1}}$ for $1\lle i\lle r-1$ and $T_r^{\bm s}=(P_r^{\bm s})^2$.

Denote $\pi(T_i^{\bm s})$ by $\pi_i^{\bm s}$. 

Using Lemma \ref{lem: weight change}, we write $\bm P^{\bm s^{[i]}}$ in terms of $\bm P^{\bm s}$. For $1\lle i\lle  r-1$, if $s_i=s_{i+1}$, then $\bm P^{\bm s^{[i]}}=\bm P^{\bm s}$. Also, $\bm P^{\bm s^{[r]}}=\bm P^{\bm s}$.  If $s_i\ne s_{i+1}$, $1\lle i\lle  r-1$, then
\beq\label{eq:P-change-1}
\bm P^{\bm s^{[i]}}=(P_1^{\bm s},\dots,P_{i-1}^{\bm s},P_{i+1}^{\bm s}\pi_i^{\bm s},P_i^{\bm s}(\pi_i^{\bm s})^{-1},P_{i+2}^{\bm s},\dots,P_{r}^{\bm s}).
\eeq

Now it is straightforward to check that $(\bm T^{\bm s})^{[i]}$ obtained from Table \ref{table:1} coincides with the sequence $\bm T^{\bm s^{[i]}}$ of rational functions associated with $\bla$, $\bm z$, and $\bm s^{[i]}$.

Then in the case of $\osp_{2m+1|2n}$, Theorem \ref{thm:general} and the reproduction procedure take the following form. Set $y_0(x)=1$.

\begin{thm}\label{thm:B-rep}
Let $\bm y=(y_1,\dots,y_{r})$ be a sequence of polynomials generic with respect to $(C^{\bm s},\bm T^{\bm s})$.
\begin{enumerate}
    \item The sequence $\bm y$ represents a solution of the Bethe ansatz equation \eqref{eq:BAE-poly-T} associated with $(C^{\bm s},\bm T^{\bm s})$, if and only if there exist polynomials $\tl y_i$, $1\lle i\lle r$, satisfying \eqref{eq:A-bos}, \eqref{eq:A-fer} for $1\lle i\lle r-1$, and
    \begin{align}
    \Wr(y_r,\tl y_r)=(P_r^{\bm s})^2 y_{r-1}^2. \label{eq:B-bos2}
    \end{align}
    \item If $\bm y^{[i]}=(y_1,\dots,\tl y_i,\dots,y_{r})$ is generic with respect to $(C^{\bm s^{[i]}},\bm T^{\bm s^{[i]}})$, then $\bm y^{[i]}$ represents a solution of the Bethe ansatz equation \eqref{eq:BAE-poly-T} associated with $(C^{\bm s^{[i]}},\bm T^{\bm s^{[i]}})$.\qed
\end{enumerate}
\end{thm}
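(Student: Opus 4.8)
The plan is to derive Theorem \ref{thm:B-rep} as a direct specialization of the general results Theorem \ref{thm:wr-div} and Theorem \ref{thm:general}, verifying that the hypotheses of those theorems hold for the pair $(C^{\bm s},\bm T^{\bm s})$ attached to $\osp_{2m+1|2n}$. First I would record that $(C^{\bm s},\bm T^{\bm s})$ is admissible: whenever $c_{ii}^{\bm s}=2$ the off-diagonal entries $c_{ij}^{\bm s}$ lie in $\Z_{\lle 0}$ and $T_i^{\bm s}$ is a polynomial. For $1\lle i\lle r-1$ this is inherited verbatim from the $\glMN$ analysis, and for $i=r$ we use $c_{rr}^{\bm s}\in\{0,2\}$ together with $T_r^{\bm s}=(P_r^{\bm s})^2$, which is a genuine polynomial because the weights are dominant integral. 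This admissibility is exactly what licenses the rewriting \eqref{eq:BAE-poly-T-nos} into the Wronskian/division form of Theorem \ref{thm:wr-div}.

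For part (1), I would apply Theorem \ref{thm:wr-div} direction by direction. For $1\lle i\lle r-1$ the diagonal entry $c_{ii}^{\bm s}$ is $2$ when $s_i=s_{i+1}$ and $0$ when $s_i\ne s_{i+1}$, and substituting the explicit $T_i^{\bm s}=P_i^{\bm s}(P_{i+1}^{\bm s})^{-s_is_{i+1}}$ and $\prod_{j\ne i}y_j^{-c_{ij}^{\bm s}}$ into \eqref{eq:wr-div} reproduces precisely the bosonic equation \eqref{eq:A-bos} and the fermionic equation \eqref{eq:A-fer}; this is the same computation already carried out in Theorem \ref{thm:A-rep}, so I would simply cite it. The genuinely new direction is $i=r$. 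Here $c_{rr}^{\bm s}=2$ (the reproduction in the $r$-th direction is always bosonic, as noted after the last two rows of Table \ref{table:1}), so the first line of \eqref{eq:wr-div} gives $\Wr(y_r,\tl y_r)=T_r^{\bm s}\prod_{j\ne r}y_j^{-c_{rj}^{\bm s}}$. Using $T_r^{\bm s}=(P_r^{\bm s})^2$ and the fact that in type B the only nonzero off-diagonal entry in the $r$-th row is $c_{r,r-1}^{\bm s}=-2$, the product collapses to $y_{r-1}^2$, yielding exactly \eqref{eq:B-bos2}. I would spell out this collapse of the product, since it is the one place where the type B Cartan data (as opposed to the $\glMN$ data) genuinely enters.

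For part (2), I would invoke Theorem \ref{thm:general}: given that $\bm y$ represents a solution and $\tl y_i$ is an immediate descendant, the theorem asserts that $\bm y^{[i]}$ represents a solution of the Bethe ansatz equation associated with $(C^{[i]},\bm T^{[i]})$ whenever $\bm y^{[i]}$ is generic. It therefore remains only to identify $(C^{[i]},\bm T^{[i]})$ with $(C^{\bm s^{[i]}},\bm T^{\bm s^{[i]}})$. For the Cartan matrix this is the already-stated fact that $(C^{\bm s})^{[i]}=C^{\bm s^{[i]}}$, checked by the direct computation referenced at the start of Section \ref{sec b}; note that $\bm s^{[r]}=\bm s$ and the reproduction in the $r$-th direction is bosonic, so no change occurs there. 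For the weights I would compare the output of Table \ref{table:1} with the transformation \eqref{eq:P-change-1} of $\bm P^{\bm s}$ dictated by Lemma \ref{lem: weight change}, using $T_i^{\bm s^{[i]}}=P_i^{\bm s^{[i]}}(P_{i+1}^{\bm s^{[i]}})^{-s_{i+1}s_i}$ and $T_r^{\bm s^{[i]}}=(P_r^{\bm s^{[i]}})^2$ to confirm the identification $(\bm T^{\bm s})^{[i]}=\bm T^{\bm s^{[i]}}$.

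I expect the main obstacle to be bookkeeping rather than any conceptual difficulty: one must carefully match the abstract entries $q_1,q_2,q_3,q_4$ of the relevant rows of Table \ref{table:1} (with the specialization $q_1=-1$, $q_2=q_4=1$, $q_3=0$) against the concrete type B Cartan data, paying particular attention to the $(r-1)$-st fermionic direction, where the last two rows of Table \ref{table:1} govern how $T_{r-1}^{\bm s}$ and the neighboring data transform, and where one must confirm that the $r$-th row of the Cartan matrix is left unchanged so that the reproduction in the $r$-th direction stays bosonic. Once these identifications are in place, both parts follow immediately from Theorems \ref{thm:wr-div} and \ref{thm:general}.
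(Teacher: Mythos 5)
Your proposal is correct and follows essentially the same route as the paper: the paper itself states Theorem \ref{thm:B-rep} (with only a \qed) as the direct specialization of Theorems \ref{thm:wr-div} and \ref{thm:general} to the admissible pair $(C^{\bm s},\bm T^{\bm s})$, using exactly the identifications $(C^{\bm s})^{[i]}=C^{\bm s^{[i]}}$ and $(\bm T^{\bm s})^{[i]}=\bm T^{\bm s^{[i]}}$ (via \eqref{eq:P-change-1} and Table \ref{table:1} with $q_1=-1$, $q_2=q_4=1$, $q_3=0$) that you verify. Your explicit check that $c_{rr}^{\bm s}=2$ and $c_{r,r-1}^{\bm s}=-2$, so that \eqref{eq:wr-div} collapses to \eqref{eq:B-bos2} and the $r$-th direction is always bosonic, is precisely the bookkeeping the paper leaves to the reader.
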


Similar to Theorem \ref{thm:A-rep}, in each case one can remove all $P_j^{\bm s}$ and write instead a single $T_i^{\bm s}$. 

For $1\lle i\lle r-1$, the reproduction procedure in the $i$-th direction is bosonic if $s_i=s_{i+1}$, and fermionic if $s_i\neq s_{i+1}$. The reproduction procedure in the $r$-th direction is always bosonic. As always, a bosonic reproduction does not change parity (and therefore the Cartan matrix) and fermionic does.

If $\bm y^{[i]}$ is fertile in some direction with respect to $(C^{\bm s^{[i]}}, \bm T^{\bm s^{[i]}})$, then we can apply reproduction procedure again.

Denote by $\mathscr P_{\bm y,\bm s}\subset (\mathbb P(\C[x]))^{r}\times S_{m|n}$
the set of all pairs $(\bar{\bm y},\bar {\bm s})$ obtained from the initial pair $(\bm y,\bm s)$ by repeatedly applying all possible reproduction procedures. We say that $\mathscr P_{\bm y,\bm s}$ is the $\osp_{2m+1|2n}$ \emph{population of solutions of the Bethe ansatz equation} originated from $(\bm y,\bm s)$.

\subsection{The differential operator of an $\osp_{2m+1|2n}$ population.}\label{sec B oper}

Our main idea to study
the $\osp_{2m+1|2n}$ populations is to include them in the  well-understood $\gl_{2m|2n}$ populations. The construction is as follows.

For a parity sequence $\bm s\in S_{m|n}$, define $\bm s^{A}\in S_{2m|2n}$ by
\[
\bm s^{A}=(s_1^A,\dots,s_{2r}^A)=(s_1,\dots,s_r,s_r,\dots,s_1),
\]
Let $C^{\bs s,A}$ be the Cartan matrix of $\gl_{2m|2n}$ associated to the parity sequence $\bm s^A$, cf. \eqref{eq:cartan-block}.

Let $\bm y=(y_1,\dots,y_{r})$ be a sequence of polynomials. Define a sequence of polynomials 
\beq\label{eq:y-A-B}
\bm y^{A}=(y_1^{A},\dots,y_{2r-1}^{A})=(y_1,y_2,\dots,y_{r-1},y_r,y_{r-1},\dots,y_1).
\eeq

Let $\bla$, $\bm P^{\bm s}$, and $\bm T^{\bm s}$ be as above. 

Define a sequence of dominant integral $\gl_{2m|2n}$-weights $\bla^A=(\la_1^{A},\dots,\la_p^{A})$ by
\[
\big((\la_i^A)^{\bm s^A},\ve^{\bm s^A}_j\big)=-\big((\la_i^A)^{\bm s^A},\ve^{\bm s^A}_{2r+1-j}\big)=(\la_i^{\bm s},\ve^{\bm s}_j),\qquad 1\lle i\lle p,\qquad 1\lle j\lle r.
\]

Define a sequence of rational functions $\bm P^{\bm s,A}$ by
\[
\bm P^{\bm s,A}=(P_1^{\bm s,A},\dots,P_{2r}^{\bm s,A})=(P_1^{\bm s},\dots, P_{r-1}^{\bm s},P_r^{\bm s},(P_r^{\bm s})^{-1},(P_{r-1}^{\bm s})^{-1},\dots,(P_1^{\bm s})^{-1})
\]
and a sequence of rational functions $\bm T^{\bm s,A}$ by
\[
\bm T^{\bm s,A}=(T_1^{\bm s,A},\dots,T_{2r-1}^{\bm s,A}),\quad T_i^{\bm s,A}=(T_{2r-i}^{\bm s,A})^{s_i^As_{i+1}^A}=T_i^{\bm s}.
\]
Note that $\bm P^{\bm s,A}$ and $\bm T^{\bm s,A}$ are exactly sequences of rational functions associated to $\bla^A$, $\bm z$, and $\bm s^A$, see \eqref{eq:p-polynomials} and \eqref{eq:T-polynomials}.
\begin{lem}\label{lem:AB}
The sequence of polynomials $\bm y$ represents a solution of the $\osp_{2m+1|2n}$ Bethe ansatz equation \eqref{eq:BAE-poly-T} associated with $(C^{\bs s},\bm T^{\bs s})$, if and only if  $\bm y^{A}$ represents a solution of the $\gl_{2m|2n}$ Bethe ansatz equation \eqref{eq:BAE-poly-T} associated with $(C^{\bm s,A},\bm T^{\bm s,A})$.
\end{lem}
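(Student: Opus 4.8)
The plan is to characterize each side of the equivalence by the Wronskian--divisibility form of the Bethe ansatz equation and then to match the two resulting families of relations term by term, exploiting the palindromic structure of the data $\bm y^A$, $\bm s^A$, $\bm P^{\bm s,A}$, $\bm T^{\bm s,A}$. By Theorem \ref{thm:A-rep}(1), applied to $\gl_{2m|2n}$ with the parity sequence $\bm s^A$ (and the data $\bla^A$, $\bm z$, for which $\bm P^{\bm s,A}$, $\bm T^{\bm s,A}$ are exactly the associated sequences), the generic sequence $\bm y^A$ represents a solution of the $\gl_{2m|2n}$ equation associated with $(C^{\bm s,A},\bm T^{\bm s,A})$ if and only if for each $1\lle j\lle 2r-1$ there is a polynomial $\tl y_j^A$ satisfying \eqref{eq:A-bos} when $s_j^A=s_{j+1}^A$ and \eqref{eq:A-fer} when $s_j^A\ne s_{j+1}^A$, all data carrying the superscript $A$. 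Likewise, by Theorem \ref{thm:B-rep}(1), $\bm y$ represents a solution of the $\osp_{2m+1|2n}$ equation associated with $(C^{\bm s},\bm T^{\bm s})$ if and only if for each $1\lle i\lle r$ there is a polynomial $\tl y_i$ satisfying \eqref{eq:A-bos}, \eqref{eq:A-fer} (for $i\lle r-1$) or \eqref{eq:B-bos2} (for $i=r$). It therefore suffices to match these two families of divisibility relations.

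First I would record the palindromic identities driving the computation, all immediate from the definitions: $y_j^A=y_{2r-j}^A$, $s_j^A=s_{2r+1-j}^A$, $P_j^{\bm s,A}=P_j^{\bm s}$ for $j\lle r$ and $P_j^{\bm s,A}=(P_{2r+1-j}^{\bm s})^{-1}$ for $j>r$, together with $T_j^{\bm s,A}=(T_{2r-j}^{\bm s,A})^{s_j^As_{j+1}^A}$. Using these I would show that for $1\lle j\lle r-1$ the relation carried by the mirror index $2r-j$ reduces, under the substitutions $y_{2r-j}^A=y_j$ and $y_{2r-j\mp1}^A=y_{j\pm1}$, to the relation at index $j$: in the bosonic case the right-hand side $P_{2r-j}^{\bm s,A}(P_{2r-j+1}^{\bm s,A})^{-1}y_{2r-j-1}^Ay_{2r-j+1}^A$ equals $P_j^{\bm s}(P_{j+1}^{\bm s})^{-1}y_{j-1}y_{j+1}$, so the two relations are identical; in the fermionic case, since $\pi(T_{2r-j}^{\bm s,A})=\pi(T_j^{\bm s})$ and $\ln'$ changes sign under inversion of its argument, the mirror relation is the negative of the relation at index $j$, and hence admits a polynomial solution precisely when the latter does. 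Consequently the $2r-1$ relations for $\bm y^A$ are equivalent to the first $r$ of them.

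Next I would identify these first $r$ relations with the $\osp_{2m+1|2n}$ relations. For $1\lle j\lle r-1$ all quantities are un-reflected, $P_j^{\bm s,A}=P_j^{\bm s}$ and $y_{j\pm1}^A=y_{j\pm1}$, so the $j$-th relation is verbatim \eqref{eq:A-bos} or \eqref{eq:A-fer}. For the central index $j=r$ the node is bosonic, $s_r^A=s_{r+1}^A=s_r$, and using $P_{r+1}^{\bm s,A}=(P_r^{\bm s})^{-1}$ and $y_{r+1}^A=y_{r-1}$ the right-hand side becomes $P_r^{\bm s}(P_{r+1}^{\bm s,A})^{-1}y_{r-1}^Ay_{r+1}^A=(P_r^{\bm s})^2y_{r-1}^2$, which is exactly \eqref{eq:B-bos2}. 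This establishes the term-by-term correspondence between the two families of divisibility relations.

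The remaining point, and the one I expect to require the most care, is the equivalence of the genericity hypotheses: I must check that $\bm y$ is generic with respect to $(C^{\bm s},\bm T^{\bm s})$ if and only if $\bm y^A$ is generic with respect to $(C^{\bm s,A},\bm T^{\bm s,A})$. One direction is essentially trivial, as $\bm y^A$ merely repeats the entries of $\bm y$ and the $\osp_{2m+1|2n}$ adjacencies form a subset of the $\gl_{2m|2n}$ ones. For the other direction the subtlety is that $\gl_{2m|2n}$ has adjacencies absent in $\osp_{2m+1|2n}$ --- in particular the color $y_{r-1}$ sits next to $y_r$ on both sides of the diagram --- so one verifies that the conditions ``no multiple roots at bosonic nodes'', ``no common roots of adjacent colors'', and ``no roots at zeros of $\pi_j$'' for $\bm s^A$ follow from the same conditions for $\bm s$ together with $s_r^As_{r+1}^A=1$ and $T_r^{\bm s,A}=(P_r^{\bm s})^2$, the latter giving $\pi_r^{\bm s,A}=\pi_r^{\bm s}$ and, via the reflection relation on $\bm T^{\bm s,A}$, $\pi_{r+1}^{\bm s,A}=\pi_{r-1}^{\bm s}$. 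Granting this, Theorems \ref{thm:A-rep} and \ref{thm:B-rep} convert the matched divisibility relations into the asserted equivalence of solutions, completing the proof.
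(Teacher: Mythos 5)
Your proof is correct; since the paper's own proof is just the sentence ``The lemma is straightforward,'' what you have supplied are the missing details, but by a longer route than the one the authors presumably intend. The direct verification compares the rational-function form \eqref{eq:BAE-poly-T} of the two systems color by color: for $j\lle r-1$ the $\gl_{2m|2n}$ equation at color $j$ is verbatim the $\osp_{2m+1|2n}$ one; at $j=r$ the two entries $c^{\bm s,A}_{r,r\pm 1}=-1$ acting on $y^A_{r\pm 1}=y_{r-1}$ reproduce the single entry $c^{\bm s}_{r,r-1}=-2$, while $T^{\bm s,A}_r=T^{\bm s}_r=(P^{\bm s}_r)^2$; and for $j>r$ the palindromic symmetry of $C^{\bm s,A}$ and $\bm T^{\bm s,A}$ makes the equation at color $j$ equal to $\pm$ the equation at color $2r-j$ (the sign $-$ occurring exactly at fermionic nodes, where $T^{\bm s,A}_j=(T^{\bm s}_{2r-j})^{-1}$), so the mirrored equations impose nothing new. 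That route needs essentially no genericity bookkeeping, because ``represents a solution'' already entails genericity on both sides and the genericity conditions match under the same symmetry. Your route through the divisibility characterizations of Theorems \ref{thm:A-rep}(1) and \ref{thm:B-rep}(1) is sound, but it obliges you to prove the equivalence of the two genericity hypotheses; your sketch of that step is correct (the key points being $\pi(f^{-1})=\pi(f)$, hence $\pi^{\bm s,A}_j=\pi^{\bm s}_{2r-j}$ for $j>r$, and $\pi_r^{\bm s,A}=\pi_r^{\bm s}$), though the ``subtlety'' you flag is milder than you suggest: the two middle adjacencies $(r-1,r)$ and $(r,r+1)$ of $C^{\bm s,A}$ both collapse onto the $\osp$ adjacency $(r-1,r)$, where $c^{\bm s}_{r,r-1}=-2\neq 0$ already forces $y_{r-1}$ and $y_r$ to be coprime, so the $\gl_{2m|2n}$ side imposes no condition absent on the $\osp$ side. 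A compensating benefit of your route is that the term-by-term matching of the polynomials $\tl y_i$ (namely $\tl y^A_{2r-i}=\pm\,\tl y_i$ and $\tl y^A_r=\tl y_r$) is precisely the fertility correspondence that the paper reuses in the proof of Proposition \ref{prop:B-inv}, so the extra work is not wasted.
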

\begin{proof}
The lemma is straightforward.
\end{proof}

Let $\bm y=(y_1,\dots,y_{r})$ be a sequence of polynomials. We have the  rational pseudo-differential operator $\mathscr R_{\bm y,\bm s}^{2m+1|2n}$ from \eqref{eq:diff-oper},
\beq\label{eq:B-oper}
\mathscr R_{\bm y,\bm s}^{2m+1|2n}=\mathscr R_{\bm y^A,\bm s^A}=  \mathop{\overrightarrow\prod}\limits_{1\lle i\lle 2r}\Big(\pa-s_i^A\ln'\frac{P_i^{\bm s,A}y^A_{i-1}}{y_i^A}\Big)^{s_i^A},
\eeq
where $y_0^A=y_{2r}^A=1$. Clearly, $\mathscr R_{\bm y,\bm s}^{2m+1|2n}$ is symmetric.

Suppose that $\bm y_0$ represents a solution of the $\osp_{2m+1|2n}$ Bethe ansatz equation associated with $(C^{\bs s_0},\bm T^{\bs s_0})$. Let $\mathscr P_{\bm y_{0},\bm s_0}$ be the $\osp_{2m+1|2n}$ population originated from $(\bm y_0,\bm s_0)$. Let $\mathscr P^A_{\bm y_0^A,\bm s_0^A}$ be the $\gl_{2m|2n}$ population originated from $(\bm y_0^A,\bm s_0^A)$, see Lemma \ref{lem:AB}.

\begin{prop}\label{prop:B-inv}
There exists an injective map $\mathscr P_{\bm y_0,\bm s_0}\to \mathscr P^A_{\bm y_0^A,\bm s_0^A}$, $(\bm y,\bm s)\mapsto (\bm y^A,\bm s^A)$. In particular, the rational pseudo-differential operator $\mathscr R^{2m+1|2n}_{\bm y,\bm s}$ is independent of the choice of $(\bm y,\bm s)$ in $\mathscr P_{\bm y_0,\bm s_0}$.\qed
\end{prop}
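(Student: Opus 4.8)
The plan is to show that the correspondence $(\bm y,\bm s)\mapsto(\bm y^A,\bm s^A)$ intertwines the $\osp_{2m+1|2n}$ reproduction procedure of Theorem \ref{thm:B-rep} with the $\gl_{2m|2n}$ reproduction procedure of Theorem \ref{thm:A-rep}; once this is in place, injectivity of the map and the invariance of $\mathscr R^{2m+1|2n}_{\bm y,\bm s}$ both follow almost formally. The underlying geometric fact is that $\bm y^A$ and the associated data $\bm P^{\bm s,A}$, $\bm T^{\bm s,A}$ are symmetric under the reflection $j\mapsto 2r-j$ (resp.\ $j\mapsto 2r+1-j$ on parities), so that the single $\osp$ direction $i$ is the ``folding'' of a mirror pair of $\gl$ directions.

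The heart of the argument is well-definedness, which I would prove by induction on the number of reproduction steps needed to reach $(\bm y,\bm s)$ from $(\bm y_0,\bm s_0)$. The base case is immediate, since the origin $(\bm y_0,\bm s_0)$ maps to the origin $(\bm y_0^A,\bm s_0^A)$ of the target population. For the inductive step, let $\bm y^{[i]}$ be an immediate descendant of $\bm y$ in the $i$-th direction. For $1\lle i\lle r-1$ I would show that $(\bm y^{[i]})^A$ is obtained from $\bm y^A$ by the two $\gl_{2m|2n}$ reproductions in directions $i$ and $2r-i$: since $\{i,i+1\}$ and $\{2r-i,2r-i+1\}$ are disjoint for integer $i$, these act on disjoint positions and commute, and one checks that the equations \eqref{eq:A-bos}, \eqref{eq:A-fer} at the mirror slots $i$ and $2r-i$ coincide up to an overall sign, which is invisible in $\mathbb P(\C[x])$, so the same polynomial $\tl y_i$ is produced at both slots. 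For $i=r$ the central polynomial $y_r$ occupies the unique middle slot of $\bm y^A$, and the bosonic equation \eqref{eq:B-bos2} matches a single $\gl_{2m|2n}$ bosonic reproduction in direction $r$. In parallel one verifies, using \eqref{eq:P-change-1} and the definitions of $\bm P^{\bm s,A}$, $\bm T^{\bm s,A}$, that the transformed data is exactly $\bm P^{\bm s^{[i]},A}$, $\bm T^{\bm s^{[i]},A}$. Throughout, Lemma \ref{lem:AB} guarantees that solutions are carried to solutions, so each descendant lands in $\mathscr P^A_{\bm y_0^A,\bm s_0^A}$.

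Injectivity is then immediate: the construction $\bm y\mapsto\bm y^A$ is inverted by reading off the first $r$ entries, and $\bm s\mapsto\bm s^A$ is likewise injective, so distinct pairs have distinct images. For the invariance statement I would combine \eqref{eq:B-oper}, which identifies $\mathscr R^{2m+1|2n}_{\bm y,\bm s}$ with $\mathscr R_{\bm y^A,\bm s^A}$, with Theorem \ref{thm:A-inv}: the image of $\mathscr P_{\bm y_0,\bm s_0}$ lies inside the single $\gl_{2m|2n}$ population $\mathscr P^A_{\bm y_0^A,\bm s_0^A}$, on which $\mathscr R$ is constant, whence $\mathscr R^{2m+1|2n}_{\bm y,\bm s}$ does not depend on the chosen representative.

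The main obstacle I anticipate is the bookkeeping in the inductive step: confirming that the two off-center reproductions reproduce precisely the reflected polynomial $\tl y_i$ (projectively) and that the weight data transforms consistently. In the bosonic case this requires matching the free parameter $c$ in $\tl y_i+cy_i$ at the two mirror slots so that the output remains in the image of the $A$-map; in the fermionic case it requires tracking the sign in \eqref{eq:A-fer} together with the exponent inversions built into $\bm P^{\bm s,A}$. These are direct computations once the reflection symmetry is exploited systematically, but organizing the reproduction rules of Table \ref{table:1} through this folding is where the care is needed.
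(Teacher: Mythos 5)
Your proposal is correct and follows essentially the same route as the paper: fold the single $\osp_{2m+1|2n}$ reproduction in direction $i<r$ into the mirror pair of $\gl_{2m|2n}$ reproductions in directions $i$ and $2r-i$ (choosing $\tilde y_{2r-i}=\tilde y_i$, with the sign discrepancy in \eqref{eq:A-fer} harmless in $\mathbb P(\C[x])$), treat $i=r$ as the single central bosonic reproduction, note injectivity by reading off the first $r$ entries, and deduce invariance of $\mathscr R^{2m+1|2n}_{\bm y,\bm s}$ from \eqref{eq:B-oper} and Theorem \ref{thm:A-inv}. The paper's proof is exactly this argument, stated more tersely.
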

\begin{proof}
Let $1\lle i\lle r-1$. Suppose $\bm y$ is fertile in the $i$-th direction and let $\bm y^{[i]}=(y_1,\dots,\tl y_i,\dots,y_r)$ be the immediate descendant in the $i$-th direction, then we have that $\bm y^A$ is fertile in the $i$-th and $(2r-i)$-th directions. Performing the $\gl_{2m|2n}$ reproduction procedures in the $i$-th and $(2r-i)$-th directions, we can choose $\tilde y_{2r-i}=\tilde y_i$. Then 
\[
((\bm y^A)^{[i]})^{[2r-i]}=(y_1,\dots,\tl y_i,\dots,y_r,\dots,\tl y_i,\dots,y_1)=(\bm y^{[i]})^A.
\]
In addition, we have $(\bm s^{[i]})^A=((\bm s^A)^{[i]})^{[2r-i]}$. We also have
\[
(\bm y^A)^{[r]}=(y_1,\dots, y_{r-1},\tl y_r, y_{r-1},\dots,y_1)=(\bm y^{[r]})^A.
\]
Hence the map is well-defined. Clearly, the map is injective, completing the proof of the first statement. The second statement is clear from Theorem \ref{thm:A-inv} and the first statement.
\end{proof}

\subsection{Reproduction procedure for $\osp_{2m|2n}$}\label{sec D}
In this section, we apply Theorem \ref{thm:general} to the case of $\g=\osp_{2m|2n}$. 

Let $\tlbms \in \wt S^0_{m|n}$ be an extended parity sequence. 
Recall the parity sequences $\bm s^{[i]}=(s_1,\dots,s_{i+1},s_i,\dots,s_r)$, $1\lle i\lle r-1$, $\bm s^{[r]}=\bm s^{[r-1]}$, and the extended parity sequences $\tlbms^{[i]}$, $1\lle i\lle r$, and $\tlbms^{[f]}$, see \eqref{eq:tlbms}.

Let $C^{\bm s}$ be the Cartan matrix of $\g$ associated with the parity sequence $\bm s$, then, as in the case of $\glMN$, direct computations show that $(C^{\bm s})^{[i]}$ is exactly the Cartan matrix $C^{\bm s^{[i]}}$ of $\g$ associated with the parity sequence $\bm s^{[i]}$ in all cases except for the case of $i=r$, $(s_{r-1},s_r)=(-1,1)$ when  $(C^{\bm s})^{[i]}$ is obtained from $C^{\bm s^{[i]}}$  by swapping two last rows and two last columns.

The new cases, comparing to the $\glMN$, are the fermionic reproduction procedure in the $(r-1)$-st and $r$-th direction.
Here we use the first four rows of Table \ref{table:1} with $q_1=-1$, $q_2=1,2$, $q_4=1,2$, $q_3=0,2$, and sometimes we multiply a row corresponding to a root of length zero by a minus sign, see Remark \ref{DC remark}. In particular, the fermionic reproduction procedure in the $(r-1)$-st direction
may change Dynkin diagram (ignoring the parity of each node corresponding to simple roots). The new feature here is that if $\bm s$ is of type D, then the reproduction procedure in the $r$-th direction also changes the binary choice $\kappa(\tlbms)$. 

Let $\bla=(\la_1,\dots,\la_p)$ be a sequence of dominant integral $\osp_{2m|2n}$-weights, $\bm z=(z_1,\dots,z_p)$ a sequence of pairwise distinct complex numbers. Let $\bm P^{\tlbms}=(P_1^{\tlbms},\dots,P_{r}^{\tlbms})$ and $\bm T^{\tlbms}=(T_1^{\tlbms},\dots,T_{r}^{\tlbms})$ be the sequences of rational functions associated with $\bla$, $\bm z$, and $\tlbms$, see \eqref{eq:p-polynomials} and \eqref{eq:T-polynomials}. Note that in this case, $(C^{\tlbms},\bm T^{\tlbms})$ is admissible for all $\tlbms\in \wt S^0_{m|n}$.

We have $T_i^{\tlbms}=P_i^{\tlbms}(P_{i+1}^{\tlbms})^{-s_is_{i+1}}$ for $1\lle i\lle r-1$. Moreover, $T_r^{\tlbms}=P_r^{\tlbms}$ if $\bm s$ is of type C and $T_r^{\tlbms}=P_{r-1}^{\tlbms}(P_r^{\tlbms})^{s_{r-1}}$ if $\bm s$ is of type D.

Denote $\pi(T_i^{\tlbms})$ by $\pi_i^{\tlbms}$. 

Using Lemma \ref{lem: weight change}, we write $\bm P^{\tlbms^{[i]}}$ in terms of $\bm P^{\tlbms}$. For $1\lle i\lle  r-1$, if $s_i=s_{i+1}$, then $\bm P^{\tlbms^{[i]}}=\bm P^{\tlbms}$ and if $s_i\ne s_{i+1}$, then
\beq\label{eq:P-change-2}
\bm P^{\tlbms^{[i]}}=(P_1^{\tlbms},\dots,P_{i-1}^{\tlbms},P_{i+1}^{\tlbms}\pi_i^{\tlbms},P_i^{\tlbms}(\pi_i^{\tlbms})^{-1},P_{i+2}^{\tlbms},\dots,P_{r}^{\tlbms}).
\eeq
For $i=r$, we have
$$
\bm P^{\tlbms^{[r]}}=\begin{cases}
(P_1^{\tlbms},\dots,P_{r-2}^{\tlbms},(P_{r}^{\tlbms})^{-1}\pi_r^{\tlbms},P_{r-1}^{\tlbms}(\pi_r^{\tlbms})^{-1}),\qquad &\text{ if }(s_{r-1},s_r)=(-1,1),\\
\bm P^{\tlbms},\quad &\text{ otherwise }.
\end{cases}
$$
The formula for $\bm P^{\tlbms^{[r]}}$ in the case of $(s_{r-1},s_r)=(-1,1)$ corresponds to the sequence of weights $\bla$ and the extended parity $(\tlbms^{[f]})^{[r-1]}$, see \eqref{eq:r-change}. 

Now it is straightforward to check that $(\bm T^{\tlbms})^{[i]}$ obtained from Table \ref{table:1} coincides with the sequence $\bm T^{\tlbms^{[i]}}$ of rational functions associated with $\bla$, $\bm z$, and $\tlbms^{[i]}$.

Then in the case of $\osp_{2m|2n}$, Theorem \ref{thm:general} and the reproduction procedure take the following form. Set $y_0(x)=1$.
\begin{thm}\label{thm:D-rep}
Let $\bm y=(y_1,\dots,y_{r})$ be a sequence of polynomials generic with respect to $(C^{\bm s},\bm T^{\tlbms})$.
\begin{enumerate}
    \item The sequence $\bm y$ represents a solution of the Bethe ansatz equation \eqref{eq:BAE-poly-T} associated with $(C^{\bm s},\bm T^{\tlbms})$ and an extended parity sequence $\tlbms$ of type C, if and only if there exists a polynomial $\tl y_i$ satisfying \eqref{eq:A-bos}, \eqref{eq:A-fer}, for each $1\lle i\lle r-2$, and
\begin{align*}
     \begin{cases}
    \Wr(y_{r-1},\tl y_{r-1})=P_{r-1}^{\tlbms}/P_{r}^{\tlbms}y_{r-2}y_{r}^2,\\
    \Wr(y_{r},\tl y_{r})=P_{r}^{\tlbms}y_{r-1},
    \end{cases}
    \qquad \qquad\qquad \quad \text{ if }(s_{r-1},s_r)=(-1,-1),
\end{align*}
\begin{align*}
    & \begin{cases}
    y_{r-1}\tl y_{r-1}=\ln'\Big(\dfrac{P_{r-1}^{\tlbms}P_{r}^{\tlbms}y_{r-2}}{y_r^2}\Big)\pi_{r-1}^{\tlbms}y_{r-2}y_{r},\\
    \Wr(y_{r},\tl y_{r})=P_{r}^{\tlbms}y_{r-1},
    \end{cases}
    &\text{ if }(s_{r-1},s_r)=(1,-1).
\end{align*}
    \item The sequence $\bm y$ represents a solution of the Bethe ansatz equation \eqref{eq:BAE-poly-T} associated with $(C^{\bm s},\bm T^{\tlbms})$ and an extended parity sequence $\tlbms$ of type D, if and only if there exists a polynomial $\tl y_i$ satisfying \eqref{eq:A-bos}, \eqref{eq:A-fer}, for each $1\lle i\lle r-3$, and
\begin{align*}
    & \begin{cases}
    \Wr(y_{r-2},\tl y_{r-2})=P_{r-2}^{\tlbms}/P_{r-1}^{\tlbms}y_{r-3}y_{r-1}y_r,\\
    \Wr(y_{r-1},\tl y_{r-1})=P_{r-1}^{\tlbms}/P_{r}^{\tlbms}y_{r-2},\\
    \Wr(y_{r},\tl y_{r})=P_{r-1}^{\tlbms}P_{r}^{\tlbms}y_{r-2},
    \end{cases}
    &\text{ if }(s_{r-1},s_r)=(1,1,1),
\end{align*}
\begin{align*}
     \begin{cases}
    y_{r-2}\tl y_{r-2}=\ln'\Big(\dfrac{P_{r-2}^{\tlbms}P_{r-1}^{\tlbms}y_{r-3}}{y_{r-1}y_r}\Big)\pi_{r-2}^{\tlbms}y_{r-3}y_{r-1}y_r,\\
    \Wr(y_{r-1},\tl y_{r-1})=P_{r-1}^{\tlbms}/P_{r}^{\tlbms}y_{r-2},\\
    \Wr(y_{r},\tl y_{r})=P_{r-1}^{\tlbms}P_{r}^{\tlbms}y_{r-2},
    \end{cases}
    \text{ if }(s_{r-1},s_r)=(-1,1,1),
\end{align*}
\begin{align*}
    & \begin{cases}
    \Wr(y_{r-2},\tl y_{r-2})=P_{r-2}^{\tlbms}/P_{r-1}^{\tlbms}y_{r-3}y_{r-1}y_r,\\
    y_{r-1}\tl y_{r-1}=\ln'\Big(\dfrac{P_{r-1}^{\tlbms}P_{r}^{\tlbms}y_{r-2}}{y_r^2}\Big)\pi_{r-1}^{\tlbms}y_{r-2}y_r,\\
    y_{r}\tl y_{r}=\ln'\Big(\dfrac{P_{r-1}^{\tlbms}y_{r-2}}{P_{r}^{\tlbms}y_{r-1}^2}\Big)\pi_{r}^{\tlbms}y_{r-2}y_{r-1},
    \end{cases}
    &\text{ if }(s_{r-1},s_r)=(-1,-1,1),
\end{align*}
\begin{align*}
 \begin{cases}
    y_{r-2}\tl y_{r-2}=\ln'\Big(\dfrac{P_{r-2}^{\tlbms}P_{r-1}^  {\tlbms}y_{r-3}}{y_{r-1}y_r}\Big)\pi_{r-2}^{\tlbms}y_{r-3}y_{r-1}y_r,\\
    y_{r-1}\tl y_{r-1}=\ln'\Big(\dfrac{P_{r-1}^{\tlbms}P_{r}^{\tlbms}y_{r-2}}{y_r^2}\Big)\pi_{r-1}^{\tlbms}y_{r-2}y_r,\\
    y_{r}\tl y_{r}=\ln'\Big(\dfrac{P_{r-1}^{\tlbms}y_{r-2}}{P_{r}^{\tlbms}y_{r-1}^2}\Big)\pi_{r}^{\tlbms}y_{r-2}y_{r-1},
    \end{cases}
    \quad \ \ \text{ if }(s_{r-1},s_r)=(1,-1,1).
\end{align*}
    \item Let 
    \beq\label{change label} \bm y^{[i]}=\begin{cases}(y_1,\dots,y_{r-2}, \tl y_{r},y_{r-1}), & {\text{if}}\ i=r,\ (s_{r-1},s_r)=(-1,1), \\
    (y_1,\dots,\tl y_i,\dots,y_{r}), & {\text{otherwise.}}\end{cases}\eeq
    If $\bm y^{[i]}$ 
    is generic with respect to $(C^{\bm s^{[i]}},\bm T^{\tlbms^{[i]}})$, then $\bm y^{[i]}$ represents a solution of the Bethe ansatz equation \eqref{eq:BAE-poly-T} associated with $(C^{\bm s^{[i]}},\bm T^{\tlbms^{[i]}})$. \qed
\end{enumerate}
\end{thm}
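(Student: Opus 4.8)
The plan is to derive all three parts from the general reproduction results of Section \ref{sec:general-rep} by specializing to the explicit Cartan data of $\osp_{2m|2n}$ recorded in Section \ref{osp sec}.

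For parts (1) and (2), I would invoke Theorem \ref{thm:wr-div}: since $(C^{\bm s},\bm T^{\tlbms})$ is admissible, the sequence $\bm y$ represents a solution of \eqref{eq:BAE-poly-T} if and only if the Wronskian/divisibility relations \eqref{eq:wr-div} hold in every direction. For $1\lle i\lle r-2$ in type C and $1\lle i\lle r-3$ in type D, the entries $c_{ij}^{\bm s}$ coincide with those of the $\gl_{m|n}$ Cartan matrix, so \eqref{eq:wr-div} reduces verbatim to \eqref{eq:A-bos} and \eqref{eq:A-fer} exactly as in Theorem \ref{thm:A-rep}. The only genuinely new work lies in the last two (type C) or three (type D) directions. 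There I would read off $c_{ij}^{\bm s}$ from the $2\times 2$ and $3\times 3$ blocks displayed in Section \ref{osp sec}, note whether each $c_{ii}^{\bm s}$ equals $0$ or $2$ (which selects the product or Wronskian line of \eqref{eq:wr-div}), and substitute $T_i^{\tlbms}$ in terms of the $P_i^{\tlbms}$ via \eqref{eq:T-polynomials}. For example, when $(s_{r-1},s_r)=(-1,-1)$ one reads $c_{r-1,r-1}^{\bm s}=2$, $c_{r-1,r-2}^{\bm s}=-1$, $c_{r-1,r}^{\bm s}=-2$, and with $T_{r-1}^{\tlbms}=P_{r-1}^{\tlbms}(P_r^{\tlbms})^{-1}$ the first line of \eqref{eq:wr-div} is precisely $\Wr(y_{r-1},\tl y_{r-1})=P_{r-1}^{\tlbms}/P_r^{\tlbms}\,y_{r-2}y_r^2$; every remaining case is handled the same way.

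For part (3), I would apply Theorem \ref{thm:general}, which guarantees that whenever the immediate descendant $\bm y^{[i]}$ is generic with respect to $\big((C^{\bm s})^{[i]},(\bm T^{\tlbms})^{[i]}\big)$ it satisfies the corresponding Bethe ansatz equation. It then remains to identify this data with $(C^{\bm s^{[i]}},\bm T^{\tlbms^{[i]}})$. For every $i$ other than $i=r$ with $(s_{r-1},s_r)=(-1,1)$ this identification was established just before the theorem: $(C^{\bm s})^{[i]}=C^{\bm s^{[i]}}$, and $(\bm T^{\tlbms})^{[i]}=\bm T^{\tlbms^{[i]}}$ was verified above from Table \ref{table:1} and \eqref{eq:P-change-2}. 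In these cases $\bm y^{[i]}$ as defined in \eqref{change label} is the literal immediate descendant, so the claim is immediate.

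The delicate point, which I expect to be the main obstacle, is the type D direction $i=r$ with $(s_{r-1},s_r)=(-1,1)$. Here both the $(r-1)$-st and $r$-th reproductions are fermionic, the new parity sequence $\bm s^{[r]}=\bm s^{[r-1]}$ terminates in $(1,-1)$ and is therefore of type C, and, as observed before the theorem, $(C^{\bm s})^{[r]}$ differs from the standardly ordered $C^{\bm s^{[r]}}$ by interchanging the last two rows and columns. I would check that the relabeling prescribed in \eqref{change label}, placing $\tl y_r$ in position $r-1$ and the old $y_{r-1}$ in position $r$, is exactly the permutation undoing this interchange, so that after relabeling the data $\big((C^{\bm s})^{[r]},(\bm T^{\tlbms})^{[r]}\big)$ becomes $(C^{\bm s^{[r]}},\bm T^{\tlbms^{[r]}})$ in standard order; here the decomposition $\tlbms^{[r]}=(\tlbms^{[f]})^{[r-1]}$ from \eqref{eq:r-change} and the attendant formula for $\bm P^{\tlbms^{[r]}}$ provide the bookkeeping. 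With the relabeling matched to the row/column swap, Theorem \ref{thm:general} yields the assertion.
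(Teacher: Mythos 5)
Your proposal is correct and follows essentially the same route as the paper: the paper states Theorem \ref{thm:D-rep} with no separate proof precisely because it is the specialization of Theorem \ref{thm:wr-div} (parts (1)--(2)) and Theorem \ref{thm:general} (part (3)) to the $\osp_{2m|2n}$ Cartan data, using the identifications $(C^{\bm s})^{[i]}=C^{\bm s^{[i]}}$, $(\bm T^{\tlbms})^{[i]}=\bm T^{\tlbms^{[i]}}$ and the row/column swap in the exceptional case $i=r$, $(s_{r-1},s_r)=(-1,1)$ established in the paragraphs preceding the theorem. Your sample verification for $(s_{r-1},s_r)=(-1,-1)$ and your treatment of the relabeling \eqref{change label} via \eqref{eq:r-change} match the paper's intended argument exactly.
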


Similar to Theorems \ref{thm:A-rep}, \ref{thm:B-rep}, in each case one can remove all $P_j^{\tlbms}$ and write instead a single $T_i^{\tlbms}$. 

We now discuss the role of the choice $\kappa$ in an extended parity sequence $\tlbms$ of type D. The change $\kappa\to -\kappa$ is resulting in the changes:
$P_r^{\tlbms}\to (P_r^{\tlbms})^{-1}$, $T_{r-1}^{\tlbms}\leftrightarrow T_r^{\tlbms}$,  $y_{r-1}\leftrightarrow y_r$. In other words, this change represents the non-trivial involution of the Dynkin diagram exchanging the labels of nodes $r-1$ and $r$. In particular, it does not change the solution of the Bethe ansatz equation. To accommodate for this, it is convenient to include such a change into reproduction procedure. For that, we introduce a {\it fake reproduction procedure} as follows. 

Recall that if $s_r=1$, we set $\tlbms^{[f]}=(\bm s;-\kappa(\tlbms))$, 
$$
\bs y^{[f]}=(y_1,\dots,y_{r-2},y_r,y_{r-1}).
$$
Then we also have
\begin{lem}\label{lem:fake}
Let $\bm y=(y_1,\dots,y_{r})$ represent a solution of the Bethe ansatz equation \eqref{eq:BAE-poly-T} associated with $(C^{\bm s},\bm T^{\tlbms})$. Assume $s_r=1$. Then 
$\bm y^{[f]}$ represents a solution of the Bethe ansatz equation \eqref{eq:BAE-poly-T} associated with $(C^{\bm s},\bm T^{\tlbms^{[f]}})$. \qed
\end{lem}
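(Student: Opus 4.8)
The plan is to realize the fake reproduction as the relabeling induced by the nontrivial diagram automorphism $\tau$ of type D that interchanges the two fork nodes $r-1$ and $r$ and fixes all other indices. First I would record the effect of $\kappa\mapsto-\kappa$ on the data. Since $s_r=1$ forces $\sigma_{\bm s}(r)=m$, we have $\ve_r^{\tlbms}=\kappa\ve_m$, so flipping $\kappa$ negates $\ve_r^{\tlbms}$ and interchanges $\alpha_{r-1}^{\tlbms}=\ve_{r-1}^{\tlbms}-\ve_r^{\tlbms}$ with $\alpha_r^{\tlbms}=\ve_{r-1}^{\tlbms}+\ve_r^{\tlbms}$, while all other simple roots are unchanged. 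Using that $\la^{\tlbms}$ is independent of $\kappa$ in type D, formula \eqref{eq:p-polynomials} gives $P_i^{\tlbms^{[f]}}=P_i^{\tlbms}$ for $i\lle r-1$ and $P_r^{\tlbms^{[f]}}=(P_r^{\tlbms})^{-1}$, and then \eqref{eq:T-polynomials} yields $T_i^{\tlbms^{[f]}}=T_i^{\tlbms}$ for $i\lle r-2$ together with $T_{r-1}^{\tlbms^{[f]}}=T_r^{\tlbms}$ and $T_r^{\tlbms^{[f]}}=T_{r-1}^{\tlbms}$. In other words the data is exactly the $\tau$-relabeling $T_i^{\tlbms^{[f]}}=T_{\tau(i)}^{\tlbms}$, and by definition $\bm y^{[f]}=(y_{\tau(1)},\dots,y_{\tau(r)})$.

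The key step is to check that $C^{\bm s}$ is invariant under $\tau$, that is $c^{\bm s}_{\tau(i)\tau(j)}=c^{\bm s}_{ij}$. At the level of the symmetrized matrix this is immediate from the two type-D blocks recorded above (the cases $(s_{r-1},s_r)=(1,1)$ and $(-1,1)$): in both, interchanging the last two rows and columns fixes the block because $b^{\bm s}_{r-2,r-1}=b^{\bm s}_{r-2,r}$, $b^{\bm s}_{r-1,r-1}=b^{\bm s}_{r,r}$, and $b^{\bm s}_{r-1,r}=b^{\bm s}_{r,r-1}$, while rows and columns indexed below $r-2$ are untouched. The point that makes this \emph{lift} to $C^{\bm s}$ itself, even though $C^{\bm s}$ is not symmetric, is the normalization convention $c^{\bm s}_{rj}=s_{r-1}b^{\bm s}_{rj}$ in type D: both the $(r-1)$-st and the $r$-th rows are scaled by the \emph{same} factor $s_{r-1}$, so the row-scaling is $\tau$-equivariant and the symmetry of $B^{\bm s}$ descends to $c^{\bm s}_{\tau(i)\tau(j)}=c^{\bm s}_{ij}$, equivalently $c^{\bm s}_{i\tau(j)}=c^{\bm s}_{\tau(i)j}$.

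With these two ingredients the conclusion is a direct substitution. Substituting $T_i^{\tlbms^{[f]}}=T_{\tau(i)}^{\tlbms}$ and $y_j^{[f]}=y_{\tau(j)}$ into the Bethe ansatz equation \eqref{eq:BAE-poly-T} for the $i$-th color and reindexing the sum by $k=\tau(j)$, one gets
\[
-\frac{(T_{\tau(i)}^{\tlbms})'}{T_{\tau(i)}^{\tlbms}}+[c^{\bm s}_{ii}\ne 0]\,\frac{y_{\tau(i)}''}{y_{\tau(i)}'}+\sum_{k\ne\tau(i)}c^{\bm s}_{i\tau(k)}\,\frac{y_k'}{y_k}.
\]
Using $c^{\bm s}_{i\tau(k)}=c^{\bm s}_{\tau(i)k}$ and $[c^{\bm s}_{ii}\ne 0]=[c^{\bm s}_{\tau(i)\tau(i)}\ne 0]$ (valid since $c^{\bm s}_{r-1,r-1}=c^{\bm s}_{rr}$), this is precisely the $\tau(i)$-th equation for $(C^{\bm s},\bm T^{\tlbms})$ evaluated on $\bm y$; moreover it is evaluated at the roots of $y_i^{[f]}=y_{\tau(i)}$, which is exactly where that equation must hold. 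Thus the two families of equations are identified color by color, so one holds if and only if the other does.

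I do not expect a genuine obstacle here. Because $\bm y^{[f]}$ and $(C^{\bm s},\bm T^{\tlbms^{[f]}})$ are literally the $\tau$-relabeling of $\bm y$ and $(C^{\bm s},\bm T^{\tlbms})$, the genericity hypotheses and the multiplicity condition imposed for the zero-length colors transfer verbatim, so the property of ``representing a solution'' is preserved automatically. The only place requiring care is the bookkeeping of the first two paragraphs, namely keeping the two normalization conventions for the $r$-th row straight, which is exactly the content of the $\tau$-invariance of $C^{\bm s}$.
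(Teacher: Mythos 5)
Your proposal is correct and is essentially the paper's own justification: the paper treats the lemma as immediate from the observation (stated just before it) that $\kappa\to-\kappa$ effects $P_r^{\tlbms}\to(P_r^{\tlbms})^{-1}$, $T_{r-1}^{\tlbms}\leftrightarrow T_r^{\tlbms}$, $y_{r-1}\leftrightarrow y_r$, i.e.\ the nontrivial type-D diagram involution exchanging nodes $r-1$ and $r$, under which the Bethe ansatz equations are merely relabeled. Your write-up just supplies the details the paper leaves implicit (the $\tau$-invariance of $C^{\bm s}$ via the common scaling factor $s_{r-1}$ of the last two rows, and the color-by-color substitution), so there is nothing to object to.
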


Due to Lemma \ref{lem:fake},  if $s_r=1$, we have
\beq\label{eq:r and fake}
(\bm y^{[r]},\tlbms^{[r]})=\begin{cases}
\big((\bm y^{[f]})^{[r-1]},(\tlbms^{[f]})^{[r-1]}\big), & \text{ if }s_{r-1}=-1,\\
\big(((\bm y^{[f]})^{[r-1]})^{[f]},((\tlbms^{[f]})^{[r-1]})^{[f]}\big), & \text{ otherwise}.
\end{cases}
\eeq
In type D, there is no natural order for the last two simple roots - and the fake reproduction procedure just exchanges the labeling of these two simple roots. However, in type C we do have a natural labeling which may be different from the labeling chosen for type D. This is taken care by the change of order of components of $\bs y$ in \eqref{change label}.
\medskip

Starting from a pair
$(\bm y,\tlbms)$, consisting of a solution of the Bethe ansatz equation and the corresponding extended parity sequence, 
we produce a collection of similar pairs by repeatedly applying all possible reproduction procedures (including the fake one). We denote this collection by $\mathscr P_{\bm y,\tlbms}\subset (\mathbb P(\C[x]))^{r}\times \tilde S_{m|n}^0$ and call it the $\osp_{2m|2n}$ \emph{population of solutions of the Bethe ansatz equation} originated from $(\bm y,\tlbms)$.

\subsection{The differential operator of an $\osp_{2m|2n}$ population.}\label{sec D2}
As in the case of $\osp_{2m+1|2n}$, we include the $\osp_{2m|2n}$ population into a population of type $A$, in this case, associated to $\gl_{2m|2n+1}$. We do it under the additional assumption that all weights $\la_i$ are dominant integral of the same kind. 

Let all weights $\la_i$ be dominant integral of the first kind. Under this assumption all $P_i^{\tlbms}$ are polynomials provided $\kappa(\tlbms)=1$. 
For an extended parity sequence $\tlbms\in\wt S^0_{m|n}$, define
$$
\bm s^A=(s_1^A,\dots,s_{2r+1}^A)=(s_1,\dots,s_r,-1,s_r,\dots,s_1)\in S_{2m|2n+1}.
$$ 

Let $\bm y=(y_1,\dots,y_{r})$ be a sequence of polynomials. Define the sequence of polynomials,
\beq\label{eq:y-A-D}
\bm y^{\tlbms,A}=\begin{cases}
(y_1,\dots,y_{r-2},y_{r-1}y_r,y_r^2,y_r^2,y_{r-1}y_r,y_{r-2},\dots,y_1),&\text{ if }\kappa(\tlbms)=1,\ s_r=1,\\
(y_1,\dots,y_{r-2},y_{r-1}y_r,y_{r-1}^2,y_{r-1}^2,y_{r-1}y_r,y_{r-2},\dots,y_1),&\text{ if }\kappa(\tlbms)=-1,\ s_r=1,\\
(y_1,\dots,y_{r-1},y_r^2,y_r^2,y_{r-1},\dots,y_1), & \text{ otherwise}.
\end{cases}
\eeq

Let $\bla$, $\bm P^{\tlbms}$, and $\bm T^{\tlbms}$ be as above. 

Define $\gl_{2m|2n+1}$-weights $\la_i^A$, $1\lle i\lle p$, by $((\la_i^A)^{\bm s^A},\ve^{\bm s^A}_{r+1})=0$ and
\[
\big((\la_i^A)^{\bm s^A},\ve^{\bm s^A}_j\big)=-\big((\la_i^A)^{\bm s^A},\ve^{\bm s^A}_{2r+2-j}\big)=
(\la_i^{\tlbms},\ve^{\tlbms}_j),\qquad 1\lle j\lle r.
\]
The weights $\la_i^A$ apriori depend on $\tlbms$ but by Lemma \ref{lem: weight change} they depend on $\kappa(\tlbms)$ only. To make this dependence explicitly we denote the sequence  $(\la_1^A,\dots,\la_p^A)$ by 
$\bla^{\kappa,A}=(\la_1^{\kappa,A},\dots,\la_p^{\kappa,A})$.

\begin{lem}
If $\la$ is a dominant integral $\osp_{2m|2n}$-weight of the first kind (resp. of the second kind), then $\la^{1,A}$ (resp. $\la^{-1,A}$) is a dominant integral $\gl_{2m|2n+1}$-weight.
\end{lem}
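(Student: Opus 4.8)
The plan is to turn the statement into an elementary monotonicity check on explicit coordinate sequences. First I would record that, by \eqref{eq:mu-pm}, every dominant integral $\osp_{2m|2n}$-weight equals $\mu_+$ (first kind) or $\mu_-$ (second kind) for some $(n|m)$-hook partition $\mu$; set $a_i=\mu_i$ for $1\lle i\lle n$ and $b_j=\max\{\mu_j'-n,0\}$ for $1\lle j\lle m$, so that $a_1\gge\cdots\gge a_n\gge 0$ and $b_1\gge\cdots\gge b_m\gge 0$ are weakly decreasing non-negative integers. Second, I would reformulate dominant integrality for $\gl_{2m|2n+1}$: writing a weight $\nu=\sum_k\nu_k\ve_k$ in the natural basis, the inequalities $(-1)^{|j|}(\nu,\ve_j-\ve_{j+1})\gge 0$ for $j\ne 2m$ are, using $(\ve_i,\ve_j)=(-1)^{|i|}\delta_{ij}$, equivalent to requiring that the even block $(\nu_1,\dots,\nu_{2m})$ and the odd block $(\nu_{2m+1},\dots,\nu_{2m+2n+1})$ are each weakly decreasing integer sequences, with no condition across the node $j=2m$.

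Next I would compute the coordinate sequence $\la_{[\tlbms],i}=s_i(\la^{\tlbms},\ve_i^{\tlbms})$ for the value of $\kappa$ relevant to each kind. Using the standard extended parity $\tlbms_{-}=(\bm s_-;1)$, for which $\mu_\pm^{\tlbms_{-}}=\mu_\pm$, $\ve_i^{\tlbms_{-}}=\delta_i$ ($i\lle n$) and $\ve_{n+j}^{\tlbms_{-}}=\ve_j$ ($1\lle j\lle m$), a direct evaluation gives
\[
(\mu_+)_{[\tlbms_{-}]}=(a_1,\dots,a_n,b_1,\dots,b_m),\qquad
(\mu_-)_{[\tlbms_{-}]}=(a_1,\dots,a_n,b_1,\dots,b_{m-1},-b_m).
\]
For the first kind I keep $\kappa=1$ and use the first sequence. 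For the second kind the weight $\la^{-1,A}$ is built from the coordinate sequence of $\mu_-$ with $\kappa=-1$; by the fake rule of Lemma \ref{lem: weight change}, passing to $\tlbms_{-}^{[f]}=(\bm s_-;-1)$ negates only the last coordinate, so this sequence is again $(a_1,\dots,a_n,b_1,\dots,b_m)$. Thus in both cases the sequence feeding the construction of $\la^A$ is the single non-negative sequence $(a_1,\dots,a_n,b_1,\dots,b_m)$.

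Finally I would assemble $\la^A$ and read off its natural coordinates. By the defining relations $((\la^A)^{\bm s^A},\ve_{r+1}^{\bm s^A})=0$ and $((\la^A)^{\bm s^A},\ve_j^{\bm s^A})=-((\la^A)^{\bm s^A},\ve_{2r+2-j}^{\bm s^A})=(\la^{\tlbms},\ve_j^{\tlbms})$, the coordinate sequence of $\la^A$ relative to $\bm s^A$ is the antisymmetric extension
\[
(a_1,\dots,a_n,\,b_1,\dots,b_m,\,0,\,-b_m,\dots,-b_1,\,-a_n,\dots,-a_1),
\]
with the central $0$ at position $r+1$. Applying the parity-sorting permutation $\sigma_{\bm s^A}$ (which lists the $+1$-positions, then the $-1$-positions, each in increasing order) converts this to natural coordinates: the even block becomes $(b_1,\dots,b_m,-b_m,\dots,-b_1)$ and the odd block becomes $(a_1,\dots,a_n,0,-a_n,\dots,-a_1)$. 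Both are weakly decreasing, precisely because $b_m\gge 0$ and $a_n\gge 0$, and all entries are integers; by the reformulation above, $\la^A$ is dominant integral. I expect the only delicate points to be the bookkeeping of $\sigma_{\bm s^A}$ (so that the two blocks come out in the right order) and the check that monotonicity survives at the two central crossings $b_m\gge -b_m$ and $a_n\gge 0\gge -a_n$, where non-negativity of $a_n,b_m$ is exactly what is needed; the reduction to the block-monotonicity criterion and the $\kappa=-1$ computation via Lemma \ref{lem: weight change} are then routine.
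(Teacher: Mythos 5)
Your reduction to hook-partition data, the block-monotonicity reformulation of $\gl_{2m|2n+1}$-dominance, and the computation of the antisymmetric coordinate sequence of $(\la^A)^{\bm s^A}$ relative to $\bm s^A$ (including the treatment of the second kind via $\kappa=-1$ and the fake rule) are all correct and parallel to the paper. The gap is in your final step: applying the sorting permutation $\sigma_{\bm s^A}$ to that coordinate sequence does \emph{not} produce the natural coordinates of $\la^A$; it produces the natural coordinates of the vector $(\la^A)^{\bm s^A}$, i.e.\ the highest weight of the module $L(\la^A)$ with respect to the Borel subalgebra of parity $\bm s^A$, which is a \emph{different} element of $\h^*$. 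The weight $\la^A$ itself --- the standard-Borel highest weight, which is the object the lemma is about --- is recovered from $(\la^A)^{\bm s^A}$ by a chain of odd reflections, that is, by the recursion of Lemma \ref{lem: weight change}: each transposition of adjacent entries of opposite parity replaces the pair $(\la_i,\la_{i+1})$ by $(\la_{i+1}+\eta,\la_i-\eta)$ with $\eta=1$ unless $\la_i+\la_{i+1}=0$. Your argument silently sets every $\eta=0$.

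These shifts genuinely change the vector. Take $m=n=1$ (so $\osp_{2|2}$ and $\gl_{2|3}$) and $\mu=(3,1)$, giving $a_1=3$, $b_1=1$, hence the $\bm s_-^A$-coordinate sequence $(3,1,0,-1,-3)$ with parities $(-1,1,-1,1,-1)$. Moving the odd entries to the right by Lemma \ref{lem: weight change} (all relevant sums are nonzero, so $\eta=1$ at each step) yields $\la^A=2\ve_1+\ve_2+\ve_3-\ve_4-3\ve_5$, whereas your permutation yields $\ve_1-\ve_2+3\ve_3-3\ve_5$. Both vectors happen to be dominant here, but they are different, so your monotonicity check is performed on the wrong weight and the lemma does not follow from it. To close the gap you must redo the last step tracking the $\eta$-shifts --- which is exactly what the paper's proof does: it starts from the sequence $(\mu_1,\dots,\mu_r,0,-\mu_r,\dots,-\mu_1)$ and moves the components sitting at odd positions to the right using the rule of Lemma \ref{lem: weight change}, observing that the resulting even and odd blocks are still weakly decreasing.
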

\begin{proof}
One computes $\la^{1,A}$ starting from $(\la^{1,A})^{\bm s^A}$ by moving components corresponding to negative $s_i$ to the right. The rule is
described in Lemma \ref{lem: weight change}. For the extended parity sequence $\tlbms_-=(\bm s_-;1)$,  we have
$$
(\la^{1,A})^{(\bm s_-)^A}=(\mu_1,\dots,\mu_{r},0,-\mu_{r},\dots,-\mu_1),
$$
where we have $\mu_1\gge\cdots\gge\mu_m$ and  $\mu_{m+1}\gge\cdots\gge \mu_{r}$.
Clearly, moving the first $m$ components and the zero to the right creates a dominant integral $\gl_{2m|2n+1}$-weight.
\end{proof}

We say that an $\osp_{2m|2n}$-weight $\la$ of the first kind (resp. of the second kind) is A-\textit{typical} if $\la^{1,A}$ (resp. $\la^{-1,A}$) is a typical $\gl_{2m|2n+1}$-weight. We say that $\bla$ is A-\textit{typical} if all $\la_i$ are dominant integral weights of the same kind and if at least one of $\la_i$ is A-typical.

\medskip

Define a sequence of rational functions $\bm P^{\tlbms,A}$ by
\beq\label{eq:PA-C}
\bm P^{\tlbms,A}=(P_1^{\tlbms,A},\dots,P_{2r+1}^{\tlbms,A}),\quad P^{\tlbms,A}_i=(P^{\tlbms,A}_{2r+2-i})^{-1}=\begin{cases} 
1,&\text{ if }i=r+1,\\
P^{\tlbms}_i,&\text{ otherwise},
\end{cases}
\eeq
and a sequence of rational functions $\bm T^{\tlbms,A}$ by
\beq\label{eq:TA-C}
\bm T^{\tlbms,A}=(T_1^{\tlbms,A},\dots,T_{2r}^{\tlbms,A}),
\quad T_i^{\tlbms,A}=(T_{2r+1-i}^{\tlbms,A})^{s_i^As_{i+1}^A} =\begin{cases}P_r^{\tlbms},  &\text{ if }i=r \text{ and }s_r=1,\\
T_i^{\tlbms}, & \text{otherwise},
\end{cases}
\eeq
where $1\lle i\lle r$. Note that $\bm P^{\tlbms,A}$ and $\bm T^{\tlbms,A}$ are exactly sequences of rational functions associated to $\bla^{\kappa(\tlbms),A}$, $\bm z$, and $\bm s^A$, see \eqref{eq:p-polynomials} and \eqref{eq:T-polynomials}. Set $\pi_{i}^{\tlbms,A}:=\pi( T_i^{\tlbms,A})$, $1\lle i\lle 2r$.

We say that the sequence $\bs T^{\tlbms}$ is A-{\it typical} if the sequence of $\gl_{2m|2n+1}$-weights $\bla^{\kappa(\tlbms),A}$ is A-typical.

Let $\tlbms$ be an extended parity sequence of type C with $\kappa(\tlbms)=1$. Let $\bm y$ represent a solution of the Bethe ansatz equation as in Theorem \ref{thm:D-rep}. We shall construct a $\gl_{2m|2n+1}$ population containing $(\bm y,\tlbms)$. 

Let $\tilde y_r$ be as in \eqref{eq:wr-div}, that is $\Wr(y_r,\tilde y_r)=T_r^{\tlbms} y_{r-1}$. We choose $\tilde y_r$ in such a way that it is relatively prime to the polynomial $T_r^{\tlbms} y_{r-1}y_r$.
Set
\begin{align*}
&\bm y^{\tlbms,A}_1=(y_1,\dots,y_{r-1},y_r\tl y_r,y_r^2,y_{r-1},\dots,y_1),\\
&\bm y^{\tlbms,A}_2=(y_1,\dots,y_{r-1},y_r\tl y_r,y_r^2+c\tl y_r^2,y_{r-1},\dots,y_1).
\end{align*}
Let $C^{\bm s,A}$ be the Cartan matrix of $\gl_{2m|2n+1}$ associated with the parity sequence $\bm s^A$.
\begin{lem}\label{lem:AC}
For almost all $c\in\C$, the sequence of polynomials $\bm y^{\tlbms,A}_2$ represents a solution of the Bethe ansatz equation associated with $(C^{\bm s,A},\bm T^{\tlbms,A})$. The $\gl_{2m|2n+1}$ population, containing $(\bm y^{\tlbms,A}_2,\bm s^A)$, does not depend on $c$ and contains also the sequence $(\bm y^{\tlbms,A},\bm s^A)$.
\end{lem}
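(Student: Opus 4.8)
The plan is to realise $\bm y^{\tlbms,A}_2$ inside the $\gl_{2m|2n+1}$ population generated from the degenerate sequence $\bm y^{\tlbms,A}$, using only the single relation $\Wr(y_r,\tilde y_r)=P_r^{\tlbms}y_{r-1}$ coming from the $r$-th $\osp_{2m|2n}$ Bethe equation, and then to verify the Bethe ansatz equation for $\bm y^{\tlbms,A}_2$ directly through the vanishing conditions \eqref{eq:BAE-poly-T-nos}. First I would record the elementary Wronskian identities
\[
\Wr(y_r^2,\,y_r\tilde y_r)=y_r^2\,\Wr(y_r,\tilde y_r),\qquad \Wr(y_r^2,\tilde y_r^2)=2y_r\tilde y_r\,\Wr(y_r,\tilde y_r),
\]
\[
\Wr(y_r\tilde y_r,\,y_r^2)=-y_r^2\,\Wr(y_r,\tilde y_r),\qquad \Wr(y_r\tilde y_r,\,\tilde y_r^2)=\tilde y_r^2\,\Wr(y_r,\tilde y_r).
\]
Together with $T_r^{\tlbms,A}=T_{r+1}^{\tlbms,A}=P_r^{\tlbms}$ these show that $\bm y^{\tlbms,A}_1$ is an immediate descendant of $\bm y^{\tlbms,A}$ in the (bosonic) $r$-th direction, that $c\tilde y_r^2-y_r^2$ is a descendant of $\bm y^{\tlbms,A}_2$ in the $r$-th direction, and that $y_r^2$ is (up to a scalar) a descendant of $\bm y^{\tlbms,A}_2$ in the $(r+1)$-st direction, recovering $\bm y^{\tlbms,A}_1$. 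Varying the free constant in the $(r+1)$-st bosonic family over $\bm y^{\tlbms,A}_1$ produces exactly the sequences $\bm y^{\tlbms,A}_2$ for all $c$, and then a further $r$-th reproduction over $\bm y^{\tlbms,A}_1$ reaches $\bm y^{\tlbms,A}$ itself; this will yield both the $c$-independence of the population and the fact that it contains $(\bm y^{\tlbms,A},\bm s^A)$. None of these fertility computations need genericity, which is the point that lets us pass through the degenerate members.

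Next I would prove that $\bm y^{\tlbms,A}_2$ satisfies \eqref{eq:BAE-poly-T-nos}. Using the palindromic symmetry of $\bm s^A$ and $\bm T^{\tlbms,A}$ it suffices to treat the directions $1\lle i\lle r+1$. For $1\lle i\lle r-2$ the equation coincides verbatim with the $i$-th $\osp_{2m|2n}$ equation for $\bm y$, since the relevant components and $T_i^{\tlbms,A}=T_i^{\tlbms}$ agree. Directions $r$ and $r+1$ follow from the descendants exhibited above via Theorem \ref{thm:wr-div}. The essential point is direction $r-1$, where the $\gl_{2m|2n+1}$ equation differs from the $\osp_{2m|2n}$ equation at $r-1$ (Theorem \ref{thm:D-rep}) only in the contribution of the $r$-th node, namely $c^{\bm s^A}_{r-1,r}(\ln' y_r+\ln'\tilde y_r)$ in place of $c^{\bm s}_{r-1,r}\ln' y_r$; I would check in both type C cases $(s_{r-1},s_r)=(-1,-1)$ and $(1,-1)$ that the difference equals
\[
\pm\Big(\ln' y_r-\ln'\tilde y_r\Big)=\mp\frac{\Wr(y_r,\tilde y_r)}{y_r\tilde y_r}=\mp\frac{P_r^{\tlbms}y_{r-1}}{y_r\tilde y_r},
\]
which vanishes at the roots of $y_{r-1}$. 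Hence the $(r-1)$-st equation reduces to the (already valid) $\osp_{2m|2n}$ one.

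Finally I would address genericity. For all but finitely many $c$ the polynomial $y_r^2+c\tilde y_r^2$ has simple roots and shares no root with $y_{r-1}$, with $y_r\tilde y_r$, or with the relevant $\pi_i^{\tlbms,A}$; here the hypothesis that $\tilde y_r$ was chosen coprime to $T_r^{\tlbms}y_{r-1}y_r$ is what guarantees these conditions generically, so $\bm y^{\tlbms,A}_2$ is generic with respect to $(C^{\bm s,A},\bm T^{\tlbms,A})$ and \eqref{eq:BAE-poly-T-nos} certifies that it represents a solution. I expect the main obstacle to be precisely this last genericity bookkeeping together with the direction $r-1$ cancellation: because $\bm y^{\tlbms,A}$ and $\bm y^{\tlbms,A}_1$ are not generic one cannot appeal directly to the reproduction theorems of Section \ref{sec: reproduction}, so the middle equations must be established by hand and one must pin down exactly which finite set of $c$ destroys genericity.
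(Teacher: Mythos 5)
Your route is essentially the paper's own: the four Wronskian identities you list are exactly the ones the paper uses, the chain $\bm y^{\tlbms,A}\to\bm y^{\tlbms,A}_1\to\bm y^{\tlbms,A}_2$ through bosonic reproductions in the $r$-th and $(r+1)$-st directions is precisely how the paper gets the population statements, and verifying \eqref{eq:BAE-poly-T-nos} at the roots is equivalent, given genericity, to the paper's verification of fertility in every direction via Theorem \ref{thm:wr-div}. Your treatment of direction $r-1$ by the difference $\ln'\tl y_r-\ln' y_r=\Wr(y_r,\tl y_r)/(y_r\tl y_r)=P_r^{\tlbms}y_{r-1}/(y_r\tl y_r)$ is correct, and handling both cases $(s_{r-1},s_r)=(-1,-1)$ and $(1,-1)$ uniformly is a small improvement over the paper, which quotes \cite[Lemma 7.6]{MV04} for the first case.

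However, there is a genuine gap: the reduction ``by palindromic symmetry it suffices to treat $1\lle i\lle r+1$'' is false, because $\bm y^{\tlbms,A}_2$ is \emph{not} palindromic --- position $r$ carries $y_r\tl y_r$ while position $r+1$ carries $y_r^2+c\tl y_r^2$. The reversal $i\mapsto 2r+1-i$ does map solutions of the system $(C^{\bm s,A},\bm T^{\tlbms,A})$ to solutions (this uses $T_i^{\tlbms,A}=(T_{2r+1-i}^{\tlbms,A})^{s_i^As_{i+1}^A}$ and Remark \ref{DC remark}), but it sends $\bm y^{\tlbms,A}_2$ to the \emph{different} sequence with the two middle entries interchanged. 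Concretely, the equation in direction $r+2$, imposed at the roots of the copy of $y_{r-1}$ sitting in position $r+2$, involves $\ln'(y_r^2+c\tl y_r^2)$ rather than $\ln'(y_r\tl y_r)$, so it is not the mirror of your direction $r-1$ and is checked nowhere in your argument. This is exactly the direction on which the paper's proof spends its main effort: fertility of $\bm y^{\tlbms,A}_2$ in the $(r+2)$-nd direction, which requires proving that $y_{r-1}$ divides $\ln'\big(T_{r-1}^{\tlbms,A}y_{r-2}/\tl y_r^2\big)\pi_{r-1}^{\tlbms,A}y_{r-2}\tl y_r$, established there by a separate identity together with the coprimality of $y_r$ and $y_{r-1}$. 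The good news is that your own difference trick closes the hole: adding the (already verified) $\osp_{2m|2n}$ equation in direction $r-1$ to the direction-$(r+2)$ equation, the latter reduces, at a root of $y_{r-1}$, to the vanishing of
\[
\ln'(y_r^2+c\tl y_r^2)-\ln'(y_r^2)
=\frac{\Wr\big(y_r^2,\,y_r^2+c\tl y_r^2\big)}{y_r^2\,(y_r^2+c\tl y_r^2)}
=\frac{2c\,\tl y_r\,P_r^{\tlbms}\,y_{r-1}}{y_r\,(y_r^2+c\tl y_r^2)},
\]
which indeed vanishes at roots of $y_{r-1}$ for generic $c$ (and the analogous computation works in the bosonic case $(s_{r-1},s_r)=(-1,-1)$). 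This step must be added explicitly; as written, your proof omits the one direction that cannot be obtained for free.
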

\begin{proof}
The lemma for the case of $(s_{r-1},s_r)=(-1,-1)$ follows from \cite[Lemma 7.6]{MV04}. We only need to show the case of $(s_{r-1},s_r)=(1,-1)$. Since $\bm y$ is generic with respect to $(C^{\bm s},\bm T^{\tlbms})$, clearly $\bm y_2^{\tlbms,A}$ is generic with respect to $(C^{\bm s,A},\bm T^{\tlbms,A})$ for almost all $c\in \C$. It suffices to show that $\bm y_2^{\tlbms,A}$ is fertile in all directions with respect to $(C^{\bm s,A},\bm T^{\tlbms,A})$.

By Theorem \ref{thm:D-rep}, \eqref{eq:T-polynomials}, and \eqref{eq:PA-C}, we have
\[
\Wr(y_r^2,y_r\tl y_r)=T_r^{\tlbms,A} y_{r-1}y_r^2,\qquad y_{r-1}(\tl y_{r-1}y_r)=\ln'\Big(\dfrac{T_{r-1}^{\tlbms,A}y_{r-2}}{y_r^2}\Big)\pi_{r-1}^{\tlbms,A}y_{r-2}y_{r}^2.
\]
These equations show that $\bm y^{\tlbms,A}$ is fertile  in all directions with respect to $(C^{\bm s,A},\bm T^{\tlbms,A})$. Note that 
\[
\Wr(y_r^2,y_r^2+c\tl y_r^2)=2cT_r^{\tlbms,A}y_{r-1}(y_r\tl y_{r}),
\]
and we conclude that $\bm y_1^{\tlbms,A}$ is fertile in all directions except possibly the $(r-1)$-st direction with respect to $(C^{\bm s,A},\bm T^{\tlbms,A})$. We also have
\begin{align*}
    \ln'\Big(\frac{T_{r-1}^{\tlbms,A}y_{r-2}}{y_r\tl y_r}\Big)\pi_{r-1}^{\tlbms,A}y_{r-2}y_r\tl y_r-&\ \tl y_r \ln'\Big(\frac{T_{r-1}^{\tlbms,A}y_{r-2}}{y_r^2}\Big)\pi_{r-1}^{\tlbms,A}y_{r-2}y_r\\=&\ -\pi_{r-1}^{\tlbms,A}y_{r-2}\Wr(y_r,\tl y_r)=-\pi_{r-1}^{\tlbms,A}y_{r-2}T_r^{\tlbms} y_{r-1}.
\end{align*}
Since $y_{r-1}$ divides the polynomial $\ln'\Big(\frac{T_{r-1}^{\tlbms,A}y_{r-2}}{y_r^2}\Big)\pi_{r-1}^{\tlbms,A}y_{r-2}y_r$, $y_{r-1}$ divides $\ln'\Big(\frac{T_{r-1}^{\tlbms,A}y_{r-2}}{y_r\tl y_r}\Big)\pi_{r-1}^{\tlbms,A}y_{r-2}y_r\tl y_r$ as well. This shows $\bm y_1^{\tlbms,A}$  is fertile with respect to $(C^{\bm s,A},\bm T^{\tlbms,A})$ in the $(r-1)$-st direction. In particular, $\bm y_2^{\tlbms,A}$ is fertile in all directions except possibly in the $r$-th and $(r+2)$-nd directions with respect to $(C^{\bm s,A},\bm T^{\tlbms,A})$. We show that $\bm y_2^{\tlbms,A}$
is fertile in these directions too. 

The fact that $\bm y_2^{\tlbms,A}$
is fertile in the $r$-th direction follows from the equality
\[
\Wr(y_r\tl y_r, -y_r^2+c\tl y_r^2)=T_{r}^{\tlbms,A}y_{r-1}(y_r^2+c\tl y_r^2).
\]
Due to the equality
\[
2\ln'\Big(\frac{T_{r-1}^{\tlbms,A}y_{r-2}}{y_r\tl y_r}\Big)\pi_{r-1}^{\tlbms,A}y_{r-2}y_r\tl y_r= \tl y_r \ln'\Big(\frac{T_{r-1}^{\tlbms,A}y_{r-2}}{y_r^2}\Big)\pi_{r-1}^{\tlbms,A}y_{r-2}y_r+  y_r \ln'\Big(\frac{T_{r-1}^{\tlbms,A}y_{r-2}}{\tl y_r^2}\Big)\pi_{r-1}^{\tlbms,A}y_{r-2}\tl y_r
\]
and the fact that $y_r$ and $y_{r-1}$ are relatively prime, we deduce that the polynomial $\ln'\Big(\frac{T_{r-1}^{\tlbms,A}y_{r-2}}{\tl y_r^2}\Big)\pi_{r-1}^{\tlbms,A}y_{r-2}\tl y_r$ is divisible by $y_{r-1}$. 

Finally, the fact that $\bm y_2^{\tlbms,A}$
is fertile in the $(r+2)$-nd direction follows from $T_{r+2}^{\tlbms,A}=(T_{r-1}^{\tlbms,A})^{-1}$ and the equality
\begin{align*}
-\ln'\Big(\frac{T_{r+2}^{\tlbms,A}(y_r^2+c\tl y_r^2)}{y_{r-2}}\Big)&\pi_{r-1}^{\tlbms,A}y_{r-2}(y_r^2+c\tl y_r^2)\\&\ =  \ln'\Big(\frac{T_{r-1}^{\tlbms,A}y_{r-2}}{y_r^2}\Big)\pi_{r-1}^{\tlbms,A}y_{r-2}y_r^2+c \ln'\Big(\frac{T_{r-1}^{\tlbms,A}y_{r-2}}{\tl y_r^2}\Big)\pi_{r-1}^{\tlbms,A}y_{r-2}\tl y_r^2.\qedhere
\end{align*}
\end{proof}

Let $\bm y=(y_1,\dots,y_{r})$ be a sequence of polynomials and $\tlbms$ an arbitrary extended parity sequence. Define a rational pseudo-differential operator $\mathscr R^{2m|2n}_{\bm y,\tlbms}$ by
\beq\label{eq:D-oper}
\mathscr R^{2m|2n}_{\bm y,\tlbms}=
\mathscr R_{\bm y^{\tlbms,A},\bm s^A}=\mathop{\overrightarrow\prod}\limits_{1\lle i\lle 2r+1}\Big(\pa-s_i^{A}\ln'\dfrac{P_i^{\tlbms,{A}}y^{\tlbms,A}_{i-1}}{y_i^{\tlbms,A}}\Big)^{s_i^{A}},
\eeq
where $y_{0}^{\tlbms,A}=y_{2r+1}^{\tlbms,A}=1$. Clearly, $\mathscr R^{2m|2n}_{\bm y,\tlbms}$ is symmetric.

\begin{lem}\label{lem fake-inv}
Suppose $s_r=1$. We have $\mathscr R^{2m|2n}_{\bm y^{[f]},\tlbms^{[f]}}=\mathscr R^{2m|2n}_{\bm y,\tlbms}$. In other words, the rational pseudo-differential operator $\mathscr R^{2m|2n}_{\bm y,\tlbms}$ is invariant under the fake reproduction procedure.
\end{lem}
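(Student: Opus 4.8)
The plan is to compare the two operators through their defining products over the common parity sequence $\bm s^A$, and then to invoke Lemma \ref{lem:swit-chpm} at the single spot where they differ. Since $\bm s^A=(s_1,\dots,s_r,-1,s_r,\dots,s_1)$ depends only on $\bm s$ and not on $\kappa$, both $\mathscr R^{2m|2n}_{\bm y,\tlbms}=\mathscr R_{\bm y^{\tlbms,A},\bm s^A}$ and $\mathscr R^{2m|2n}_{\bm y^{[f]},\tlbms^{[f]}}=\mathscr R_{\bm y^{[f],\tlbms^{[f]},A},\bm s^A}$ are products $\mathop{\overrightarrow\prod}_{1\lle i\lle 2r+1}(\,\cdot\,)^{s_i^A}$ indexed over the same set and with the same exponents; in particular $s_r^A=s_{r+2}^A=1$ and $s_{r+1}^A=-1$. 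Thus it suffices to show that the $i$-th factors of the two products coincide for $i\notin\{r,r+1,r+2\}$ and that the central blocks, that is the ordered products of the factors at $i=r,r+1,r+2$, agree.

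First I would pin down exactly which data change under the fake procedure. By Lemma \ref{lem: weight change} the flip $\kappa\to-\kappa$ negates only the last coordinate of the weight sequence, so that $P_i^{\tlbms^{[f]}}=P_i^{\tlbms}$ for $i<r$ while $P_r^{\tlbms^{[f]}}=(P_r^{\tlbms})^{-1}$; simultaneously $\bm y^{[f]}$ replaces the pair $(y_{r-1},y_r)$ by $(y_r,y_{r-1})$. Feeding this into \eqref{eq:y-A-D} and \eqref{eq:PA-C}, the outer entries of $\bm y^{\cdot,A}$ (which involve only indices $<r$ or the symmetric product $y_{r-1}y_r$) and the outer entries of $\bm P^{\cdot,A}$ (determined by the reflection rule from indices $\lle r-1$) are unchanged. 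Hence every factor with $i\notin\{r,r+1,r+2\}$ is literally the same for both operators, and the comparison localizes to the central block.

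For the central block I would compute the three factors directly. The middle factor at $i=r+1$ equals $\pa^{-1}$ for both operators, because $P_{r+1}^{\cdot,A}=1$ and the two adjacent entries $y^{\cdot,A}_r,y^{\cdot,A}_{r+1}$ are equal, so the logarithmic derivative vanishes. Writing $f=\ln'\!\big(P_r^{\tlbms}y_{r-1}/y_r\big)$, the remaining two central factors come out as $(\pa-f)$ at $i=r$ and $(\pa+f)$ at $i=r+2$ for $\mathscr R^{2m|2n}_{\bm y,\tlbms}$, whereas for $\mathscr R^{2m|2n}_{\bm y^{[f]},\tlbms^{[f]}}$ the inversion $P_r^{\tlbms}\to(P_r^{\tlbms})^{-1}$ together with the exchange $y_{r-1}\leftrightarrow y_r$ turns these into $(\pa+f)$ at $i=r$ and $(\pa-f)$ at $i=r+2$. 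Thus the two central blocks are $(\pa-f)\pa^{-1}(\pa+f)$ and $(\pa+f)\pa^{-1}(\pa-f)$, which are equal by Lemma \ref{lem:swit-chpm}. Since the two central blocks coincide and all other factors agree, the full products coincide, giving $\mathscr R^{2m|2n}_{\bm y^{[f]},\tlbms^{[f]}}=\mathscr R^{2m|2n}_{\bm y,\tlbms}$.

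The step I expect to be the main obstacle is the bookkeeping of the second paragraph: one must verify that the combined effect of $\kappa\to-\kappa$ and the relabeling $y_{r-1}\leftrightarrow y_r$ leaves all outer factors genuinely untouched and produces exactly the exchanged central pair with a single common $f$. In other words, the delicate point is checking that the only net change is the transposition realized by Lemma \ref{lem:swit-chpm}, with no residual discrepancy in $P_r^{\tlbms}$ or in the central squares of $\bm y^{\cdot,A}$; once this is confirmed, the switching identity closes the argument at once.
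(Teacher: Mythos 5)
Your proof is correct and follows essentially the same route as the paper's: the paper likewise notes that all factors outside the middle three coincide, identifies the two central blocks as $\big(\pa-\ln'\frac{P_r^{\tlbms}y_{r-1}}{y_r}\big)\pa^{-1}\big(\pa+\ln'\frac{P_r^{\tlbms}y_{r-1}}{y_r}\big)$ and $\big(\pa+\ln'\frac{P_r^{\tlbms}y_{r-1}}{y_r}\big)\pa^{-1}\big(\pa-\ln'\frac{P_r^{\tlbms}y_{r-1}}{y_r}\big)$, and concludes by Lemma \ref{lem:swit-chpm}. Your extra bookkeeping of how $\kappa\to-\kappa$ and the exchange $y_{r-1}\leftrightarrow y_r$ interact with \eqref{eq:y-A-D} and \eqref{eq:PA-C} (in particular that the central entries and $P_r^{\tlbms}$ conspire to produce a single common $f$ with only the signs exchanged) is precisely the verification the paper leaves implicit.
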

\begin{proof}
The first and last $r-1$ factors of $\mathscr R^{2m|2n}_{\bm y^{[f]},\tlbms^{[f]}}$ and $\mathscr R^{2m|2n}_{\bm y,\tlbms}$ coincide. Hence it suffices to check that the products of the middle 3 factors
\[
\Big(\pa+\ln'\frac{P_r^{\tlbms}y_{r-1}}{y_{r}}\Big)\pa^{-1}\Big(\pa-\ln'\frac{P_r^{\tlbms}y_{r-1}}{y_r}\Big),
\quad
\Big(\pa-\ln'\frac{P_r^{\tlbms}y_{r-1}}{y_r}\Big)\pa^{-1}\Big(\pa+\ln'\frac{P_r^{\tlbms}y_{r-1}}{y_{r}}\Big)
\]
are the same. This follows from Lemma \ref{lem:swit-chpm}.
\end{proof}

Let $\tlbms_0$ be an extended parity sequence such that $\kappa(\tlbms_0)=1$. Suppose that the sequence $\bm y_0$ represents a solution of the $\osp_{2m|2n}$ Bethe ansatz equation associated with $(C^{\bs s_0},\bm T^{\tlbms_0})$. Let $\mathscr P_{\bm y_0,\tlbms_0}$ be the $\osp_{2m|2n}$ population originated from $(\bm y_0,\tlbms_0)$. Let $\mathscr P^A_{\bm y_0,\tlbms_0}$ be the $\gl_{2m|2n+1}$ population originated from $(\bm y_0^{\tlbms_0,A},\bm s_0^A)$, cf. Lemma \ref{lem:AC}. 

\begin{prop}\label{prop:D-inv} 
The rational pseudo-differential operator $\mathscr R^{2m|2n}_{\bm y,\tlbms}$ is independent of the choice of $(\bm y,\tlbms)$ in $\mathscr P_{\bm y_0,\tlbms_0}$. Assume that $\bs T^{\tlbms}$ is A-typical, then there exists a natural map from $\mathscr P_{\bm y_0,\tlbms_0}$ to the set of symmetric complete factorizations of $\mathscr R^{2m|2n}_{\bm y,\tlbms}$ given by \eqref{eq:D-oper}.\qed
\end{prop}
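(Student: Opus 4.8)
The plan is to adapt the argument of Proposition~\ref{prop:B-inv}: I would transport the whole $\osp_{2m|2n}$ population into a single $\gl_{2m|2n+1}$ population through the lift $(\bm y,\tlbms)\mapsto(\bm y^{\tlbms,A},\bm s^A)$ of \eqref{eq:y-A-D} and then quote the type~$A$ invariance. Since by \eqref{eq:D-oper} we have $\mathscr R^{2m|2n}_{\bm y,\tlbms}=\mathscr R_{\bm y^{\tlbms,A},\bm s^A}$, once I show that the lift of any immediate descendant of $(\bm y,\tlbms)$ lies in the same $\gl_{2m|2n+1}$ population $\mathscr P^A_{\bm y_0,\tlbms_0}$ as $\bm y_0^{\tlbms_0,A}$, Theorem~\ref{thm:A-inv} will immediately give that $\mathscr R^{2m|2n}_{\bm y,\tlbms}$ is constant along $\mathscr P_{\bm y_0,\tlbms_0}$, proving the first assertion. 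Thus the first task is to define the population map $\mathscr P_{\bm y_0,\tlbms_0}\to\mathscr P^A_{\bm y_0,\tlbms_0}$ and to check that it is well defined direction by direction.

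For the directions $1\le i\le r-2$ the reproduction procedure coincides with the $\gl_{m|n}$ one, and it lifts exactly as in Proposition~\ref{prop:B-inv}: performing the two $\gl_{2m|2n+1}$ reproductions in the symmetric pair of directions $i$ and $2r+2-i$, with the common choice $\tl y_{2r+2-i}=\tl y_i$, turns $\bm y^{\tlbms,A}$ into $(\bm y^{[i]})^{\tlbms^{[i]},A}$. The fake reproduction in type~D needs no extra work: Lemma~\ref{lem fake-inv} already records $\mathscr R^{2m|2n}_{\bm y^{[f]},\tlbms^{[f]}}=\mathscr R^{2m|2n}_{\bm y,\tlbms}$, and \eqref{eq:r and fake} expresses the $r$-th move as a composition of $(r-1)$-st moves and fake moves, so these cases reduce to the previous ones.

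The main obstacle is the bookkeeping for the $(r-1)$-st and $r$-th directions across the type~C / type~D boundary, where the middle entries of the lift are the composite polynomials $y_{r-1}y_r$ and $y_r^2$ (or $y_{r-1}^2$), and where a single $\osp$ reproduction corresponds to a combination of several $\gl_{2m|2n+1}$ reproductions together with a bosonic parameter choice. Here I would argue precisely as in Lemma~\ref{lem:AC}: introduce the auxiliary one-parameter family $\bm y^{\tlbms,A}_2$, verify its genericity for almost all $c$, and establish fertility in the delicate middle directions using the Wronskian identities for $y_r^2$ and $y_r\tl y_r$ together with the divisibility of the fermionic expressions by $y_{r-1}$; this shows that $\bm y^{\tlbms,A}$ and the lift of the descendant lie in one $c$-independent $\gl_{2m|2n+1}$ population. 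This step is the crux, since it is where the squaring in \eqref{eq:y-A-D} and the binary choice $\kappa=\pm1$ interact with the reproduction recorded in Theorem~\ref{thm:D-rep}.

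For the second assertion I would use the hypothesis that $\bs T^{\tlbms}$ is $A$-typical, i.e.\ that $\bla^{\kappa(\tlbms),A}$ is a typical sequence of $\gl_{2m|2n+1}$-weights. Then, as in the discussion preceding Theorem~\ref{thm:A-bijection} (cf.\ \cite[Proposition~7.7]{HMVY19}), $\mathscr R^{2m|2n}_{\bm y,\tlbms}=\mathscr R_{\bm y^{\tlbms,A},\bm s^A}$ is a genuine $(2m|2n+1)$-rational pseudo-differential operator, so Theorem~\ref{thm:A-bijection} supplies a bijection $\mathscr P^A_{\bm y_0,\tlbms_0}\cong\mc F(\mathscr R^{2m|2n}_{\bm y,\tlbms})$. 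Composing it with the population map of the first part yields the desired map $\mathscr P_{\bm y_0,\tlbms_0}\to\mc F(\mathscr R^{2m|2n}_{\bm y,\tlbms})$. Finally, since each lift $\bm y^{\tlbms,A}$ is palindromic and $\bm P^{\tlbms,A}$ satisfies $P^{\tlbms,A}_i=(P^{\tlbms,A}_{2r+2-i})^{-1}$ by \eqref{eq:y-A-D} and \eqref{eq:PA-C}, the coefficients $f_i$ produced by \eqref{eq wronski coeff} satisfy $f_{2r+2-i}=-f_i$ and $s^A_{2r+2-i}=s^A_i$; hence the image of this map lands in $\cF^\perp(\mathscr R^{2m|2n}_{\bm y,\tlbms})$, the set of symmetric complete factorizations, consistent with the symmetry of $\mathscr R^{2m|2n}_{\bm y,\tlbms}$ noted after \eqref{eq:D-oper}.
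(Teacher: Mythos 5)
Your overall strategy does coincide with the paper's: lift $(\bm y,\tlbms)\mapsto(\bm y^{\tlbms,A},\bm s^A)$, treat directions $1,\dots,r-2$ by symmetric pairs of $\gl_{2m|2n+1}$ reproductions as in Proposition~\ref{prop:B-inv}, dispose of the fake move by Lemma~\ref{lem fake-inv}, reduce the $r$-th direction to the $(r-1)$-st and fake ones via \eqref{eq:r and fake}, and conclude by Theorem~\ref{thm:A-inv}. Your second assertion is also right in substance, though the detour through Theorem~\ref{thm:A-bijection} is unnecessary and quietly imports hypotheses (an $\bm s_+$ origin, and superfertility or polynomial weights for the bijection) that Proposition~\ref{prop:D-inv} does not assume; the paper only needs that A-typicality makes $\mathscr R^{2m|2n}_{\bm y,\tlbms}$ a $(2m|2n+1)$-rational pseudo-differential operator, whence \eqref{eq:D-oper} itself is a symmetric complete factorization at every point of the population.

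The genuine gap is in your treatment of the $(r-1)$-st direction, which you yourself flag as the crux. Lemma~\ref{lem:AC} is the wrong tool there: its one-parameter family $\bm y^{\tlbms,A}_2$ and the attendant genericity and fertility analysis serve only to show that the non-generic lift $(\bm y^{\tlbms,A},\bm s^A)$ (non-generic because the squared middle entries have multiple and common roots) determines a well-defined $\gl_{2m|2n+1}$ population --- this is what legitimizes the definition of $\mathscr P^A_{\bm y_0,\tlbms_0}$ --- but it says nothing about where the lift of the descendant $(\bm y^{[r-1]},\tlbms^{[r-1]})$ lands; also, no ``bosonic parameter choice'' enters this fermionic step. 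To invoke Theorem~\ref{thm:A-inv} you must exhibit $((\bm y^{[r-1]})^{\tlbms^{[r-1]},A},(\bm s^{[r-1]})^A)$ as the output of $\gl_{2m|2n+1}$ reproductions applied to $\bm y^{\tlbms,A}$, and this is a concrete computation your proposal never makes. For instance, for $(s_{r-1},s_r)=(1,-1)$: multiplying the $\osp$ fermionic relation $y_{r-1}\tl y_{r-1}=\ln'\big(P_{r-1}^{\tlbms}P_{r}^{\tlbms}y_{r-2}/y_r^2\big)\pi_{r-1}^{\tlbms}y_{r-2}y_{r}$ by $y_r$ shows that $\bm y^{\tlbms,A}$ is $\gl$-fertile in direction $r-1$ with new entry $\tl y_{r-1}y_r$ (not $\tl y_{r-1}$); by \eqref{eq:PA-C} and \eqref{eq:TA-C} the reproduced sequence is then fertile in the mirror direction $r+2$ with the same replacement; and the outcome $(y_1,\dots,y_{r-2},\tl y_{r-1}y_r,y_r^2,y_r^2,\tl y_{r-1}y_r,y_{r-2},\dots,y_1)$ is precisely the type-D lift \eqref{eq:y-A-D} of $\bm y^{[r-1]}$ --- note that the lift formula changes across the C/D boundary. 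This ``twisting by $y_r$'' identification is the missing idea; without it, your claim that $\bm y^{\tlbms,A}$ and the lift of the descendant lie in one $\gl_{2m|2n+1}$ population is unsupported.
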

\begin{proof}
First, we show that the rational pseudo-differential operator $\mathscr R^{2m|2n}_{\bm y,\tlbms}(\bm y)$ does not change under the reproduction procedure. Since the reproduction procedure in the $r$-th direction is expressed in terms of reproduction procedure in the $(r-1)$-th direction and the fake reproduction procedure, see \eqref{eq:r and fake}, it is enough to show it for directions $1,\dots,r-1$ and for the fake reproduction procedure. 

The case of fake reproduction procedure follows from Lemma \ref{lem fake-inv}. Reproductions procedures in directions $1,2,\dots$, $r-2$ are treated as in Proposition \ref{prop:B-inv}. For reproduction procedure in the  $(r-1)$-st direction, we have several cases. We discuss in detail the case of $(s_{r-1},s_r)=(1,-1)$. The other cases are similar. 

Since $\bm y$ is $\osp_{2m|2n}$ fertile in the $(r-1)$-st direction, by definition (cf. Theorem \ref{thm:D-rep}), there exists a polynomial $\tl y_{r-1}$ satisfying
\[
y_{r-1}\tl y_{r-1}=\ln'\Big(\dfrac{P_{r-1}^{\tlbms}P_{r}^{\tlbms}y_{r-2}}{y_r^2}\Big)\pi_{r-1}^{\tlbms}y_{r-2}y_{r}\Longrightarrow y_{r-1}(\tl y_{r-1}y_r)=\ln'\Big(\dfrac{P_{r-1}^{\tlbms,A}P_{r}^{\tlbms,A}y_{r-2}}{y_r^2}\Big)\pi_{r-1}^{\tlbms,A}y_{r-2}y_{r}^2.
\]
Hence the sequence $\bm y^{\tlbms,A}=(y_1,\dots,y_{r-1},y_r^2,y_r^2,y_{r-1},\dots,y_1)$ is $\gl_{2m|2n+1}$ fertile in the $(r-1)$-st direction. In particular, we have 
$$(\bm y^{\tlbms,A})^{[r-1]}=(y_1,\dots,y_{r-2},\tl y_{r-1}y_r,y_r^2,y_r^2,y_{r-1},\dots,y_1).$$ Note that by \eqref{eq:PA-C} and \eqref{eq:TA-C}, we also have
\[
y_{r-1}(\tl y_{r-1}y_r)=-\ln'\Big(\dfrac{P_{r+2}^{\tlbms,A}P_{r+3}^{\tlbms,A}y_r^2}{y_{r-2}}\Big)\pi_{r+2}^{\tlbms,A}y_{r}^2y_{r-2}.
\]
Therefore $(\bm y^{\tlbms,A})^{[r-1]}$ is $\gl_{2m|2n+1}$-fertile in the $(r+2)$-nd direction and 
\[
((\bm y^{\tlbms,A})^{[r-1]})^{[r+2]}=(y_1,\dots,y_{r-2},\tl y_{r-1}y_r,y_r^2,y_r^2,\tl y_{r-1}y_r,y_{r-2},\dots,y_1)=(\bm y^{[r-1]})^{\tlbms^{[r-1]},A}.
\]
Thus $((\bm y^{[r-1]})^{\tlbms^{[r-1]},A},(\bm s^{[r-1]})^A)$ is in the $\gl_{2m|2n+1}$ population containing $(\bm y^{\tlbms,A},\bm s^A)$. By Theorem \ref{thm:A-inv}, we conclude that the rational pseudo-differential operator does not change.

If $\bs T^{\tlbms}$ is A-typical, then $\mathscr R^{2m|2n}_{\bm y,\tlbms}$ is a $(2m|2n+1)$-rational pseudo-differential operator. Therefore, for any 
$(\bs y,\tlbms)\in \mathscr P_{\bm y_0,\tlbms_0}$, formula \eqref{eq:D-oper} gives a symmetric complete factorization of $\mathscr R^{2m|2n}_{\bm y,\tlbms}$.
\end{proof}
      
It is clear that the map described in Proposition  \ref{prop:D-inv} is injective when restricted to points of $\mathscr P_{\bm y_0,\tlbms_0}$  with fixed value of $\kappa(\tlbms)$. We will show that it is injective in Theorem \ref{main thm}, see also Remark \ref{rem after thm}.

\section{Populations and isotropic superflag varieties}\label{sec: main}
\subsection{Superfertile solutions of the Bethe ansatz equations}\label{sec:superfertility}
Let $C$ be a generalized Cartan matrix. Let $\bm T$ be a sequence of functions such that $(C,\bm T)$ is admissible. Let $\bs y$ represent a solution of the Bethe ansatz equation associated with admissible $(C,\bm T)$. In particular, $\bs y$ is generic with respect to $(C,\bm T)$. Then by Theorem \ref{thm:general}, if a descendant $\bs y^{[i]}$ is generic with respect to $(C^{[i]},\bm T^{[i]})$ and $(C^{[i]},\bm T^{[i]})$ is admissible, we can repeat the reproduction procedure. Sometimes 
$\bs y^{[i]}$ is not generic, but still fertile and we can do the reproduction procedure. In this section we study the case when the reproduction procedure can be done indefinitely.

Following the terminology of  \cite{MV08}, we call a fertile tuple $\bs y$ \textit{superfertile} if all sequences in the population of $\bs y$ are fertile.

Note that ``super" in superfertile has nothing to do with the parity and just means ``always fertile". 

It is expected that a sequence $\bs y$ representing a solution of the Bethe ansatz equation associated with sufficiently generic weights $\bla$ and evaluation points $\bm z$ is always superfertile.  

In the even situation, a generic fertile sequence $\bs y$ is indeed always superfertile, thanks to \cite[Lemma 3.6]{MV04}. In particular, it applies to the cases of D$_m$ or $\osp_{2m|0}$, B$_m$ or  $\osp_{2m+1|0}$, C$_n$ or  $\osp_{0|2n}$, and  $\osp_{1|2n}$.

In the presence of fermionic reproduction procedures, one has to add some restrictions. 

Recall the parity sequence $\bm s_+=(1,\dots,1,-1,\dots,-1)\in S_{m|n}$. The sequence $\bs y$ representing a solution of the $\glMN$ Bethe ansatz equation associated to parity sequence $\bm s_+$ under assumption that at least one weight is typical is superfertile, \cite{HMVY19}. Note that this assumption is sufficient but not necessary.

Let $\bla=(\la_1,\dots,\la_p)$ be a sequence of dominant integral $\osp_{2m+\iota|2n}$-weights (of the first kind if $\iota=0$), $\bm z=(z_1,\dots,z_p)$ a sequence of pairwise distinct complex numbers. Recall that $\tlbms_+=(\bm s_+;1)\in\wt S_{m|n}^\iota$. Let $\bm T^{\tlbms_+}=(T_1^{\tlbms_+},\dots,T_{r}^{\tlbms_+})$ be the sequence of rational functions associated with $\bla$, $\bm z$, and $\tlbms_+$. 

\begin{conj}\label{conj suprefertile}
Let $\bs y$ represent a solution of $\osp_{2m+\iota|2n}$ Bethe ansatz equation associated with $(C^{\bm s_+},\bm T^{\tlbms_+})$. Assume that all roots of $y_i$ are of order one and do not vanish at $z_j$, $j=1,\dots, p$. Assume that all weights $\la_j$ are $A$-typical. Then $\bs y$ is superfertile.\qed
\end{conj}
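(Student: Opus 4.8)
\medskip

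The plan is to transfer the $\glMN$-superfertility result quoted above through the embeddings of Sections~\ref{sec B oper} and \ref{sec D2}. Fix an arbitrary point $(\bar{\bs y},\bar{\tlbms})\in\mathscr P_{\bs y,\tlbms_+}$. By Lemma~\ref{lem:AB} and Proposition~\ref{prop:B-inv} (when $\iota=1$), or Lemma~\ref{lem:AC} and Proposition~\ref{prop:D-inv} (when $\iota=0$), its image $\bar{\bs y}^{A}$ (resp.\ $\bar{\bs y}^{\bar{\tlbms},A}$) is a solution of the associated $\gl_{2m|2n}$ (resp.\ $\gl_{2m|2n+1}$) Bethe ansatz equation, and the $\osp$-reproduction steps lift to $\gl$-reproduction steps. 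The $A$-typicality of $\bla$ is designed precisely so that the corresponding sequence of $\gl$-weights contains a typical weight. Thus the proof would split into two tasks: (i) to show that the $\gl$-image lies in a \emph{superfertile} $\gl$-population, and (ii) to descend $\gl$-fertility back to $\osp$-fertility in each direction. Superfertility being a property of the whole population, once (i) and (ii) are in place every point of $\mathscr P_{\bs y,\tlbms_+}$ is fertile and we are done.

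For task (ii) the interior directions are immediate. For $1\le i\le r-1$ when $\iota=1$, and $1\le i\le r-2$ when $\iota=0$, the defining equation \eqref{eq:A-bos} or \eqref{eq:A-fer} for the $\osp$-descendant $\tl y_i$ is \emph{identical} to the $\gl$-fertility equation in the $i$-th direction for the palindromic sequence $\bar{\bs y}^{A}$, whose neighbours of position $i$ are again $\bar y_{i-1}$ and $\bar y_{i+1}$; hence the polynomial supplied by $\gl$-fertility serves verbatim, and performing the mirror reproduction in direction $2r-i$ (resp.\ $2r+1-i$) with the same $\tl y_i$ --- these two directions being non-adjacent and commuting --- returns a palindromic sequence equal to $(\bar{\bs y}^{[i]})^{A}$. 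For $\iota=1$ the central bosonic direction \eqref{eq:B-bos2} matches $\gl$-fertility at the middle node in the same way. Thus, granting (i), the case $\osp_{2m+1|2n}$ presents no further difficulty in (ii).

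The first genuine obstacle is the descent (ii) in the tail directions $r-1,r$ of types C and D, where the embedding \eqref{eq:y-A-D} introduces the squared entries. By Lemma~\ref{lem:AC} and the proof of Proposition~\ref{prop:D-inv}, a single $\osp$-reproduction there is realised only as a \emph{composition} of $\gl$-reproductions passing through the non-palindromic auxiliaries $\bm y_1^{\tlbms,A},\bm y_2^{\tlbms,A}$. While $\gl$-fertility produces a one-parameter family of $\gl$-descendants, descent to $\osp$ demands a descendant lying back in the self-dual, palindromic locus, i.e.\ (via Lemma~\ref{lem:super-self-dual} and Proposition~\ref{prop:bij-f-iso}) an isotropic superflag; this amounts to extracting a genuine polynomial square root and showing the squared form of \eqref{eq:y-A-D} can be maintained, for which general position gives no a priori guarantee.

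The second, and I expect principal, obstacle lies in task (i). The fold forces the image to sit at the \emph{palindromic} parity $\bm s^{A}$, which for $n>0$ is never the distinguished $\gl$-parity $\bm s_+$; yet the superfertility theorem of \cite{HMVY19} is available only at $\bm s_+$. One would like to connect the two by a chain of fertile reproductions, but guaranteeing fertility along that chain is exactly the superfertility we are after, so the argument becomes circular. Closing this gap --- either by extending the $\glMN$-superfertility of \cite{HMVY19} to an arbitrary parity sequence under the $A$-typicality hypothesis, or by a direct study of the obstructions to the fermionic reproduction throughout the population --- is the crux, and is why the statement is left as a conjecture. When $n=0$ both obstacles evaporate: there are no fermionic tail reproductions and \cite[Lemma 3.6]{MV04} yields superfertility directly, in agreement with the unconditional results for types B${}_m$, C${}_n$, D${}_m$.
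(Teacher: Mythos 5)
You have correctly recognized what the paper itself does: the statement is a \emph{conjecture} (Conjecture \ref{conj suprefertile}), and the paper offers no proof of it in general. Your diagnosis of why the obvious transfer-to-$\gl$ strategy fails is sound and consistent with the paper's discussion in Section \ref{sec:superfertility}: the folding lands the solution at the palindromic parity $\bm s^A$, whereas the superfertility result of \cite{HMVY19} is only available at the distinguished parity $\bm s_+$ with a typical weight, and connecting the two parities by reproductions presupposes the very fertility one wants to prove; moreover, in types C and D the $\osp$-reproduction in the tail directions is realized only through the non-palindromic auxiliary sequences of Lemma \ref{lem:AC}, so returning to the squared, palindromic form of \eqref{eq:y-A-D} is not guaranteed by general position.

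What your analysis omits is the route by which the paper \emph{does} prove the conjecture in the special cases $\osp_{3|2n}$ and $\osp_{2m+1|2}$ (Proposition \ref{conj true}, proved in Section \ref{sec proof of prop}), and that route is genuinely different from your transfer plan. Rather than descending superfertility from a $\gl$ theorem, the paper combines: (a) Remark \ref{rem after thm}, which embeds the population into the variety of isotropic superflags $\mathscr F^\perp(W)$ with no superfertility hypothesis; (b) the closedness of the fertility condition, \cite[Lemma 3.6]{MV04}; and (c) Lemma \ref{lem: generic fermionic}, which shows that a fermionic reproduction step can be performed keeping the sequence generic provided a neighboring simple root is even. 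One then exhibits, by an explicit chain of bosonic and paired fermionic reproductions, a family of fertile points of full dimension ($m^2+1$ for $\osp_{2m+1|2}$) in each parity component of $\mathscr F^\perp(W)$; since fertility is closed and the population has full dimension, it must coincide with $\mathscr F^\perp(W)$, which is stable under all reproduction procedures, so every point is fertile. This dimension-count argument sidesteps both of the obstacles you identify, but it relies on having enough even directions for Lemma \ref{lem: generic fermionic} to control genericity along the fermionic walls --- which is available in the quoted low-rank cases and is precisely what is missing in general, in agreement with your conclusion that the full statement must remain a conjecture.
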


It is likely that the assumptions of this conjecture could be relaxed further.

\begin{prop}\label{conj true}
Conjecture \ref{conj suprefertile} holds for the cases  $\osp_{3|2n}$ and $\osp_{2m+1|2}$.
\end{prop}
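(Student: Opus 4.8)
The plan is to reduce the assertion to the already-understood superfertility of an ambient type $A$ population, using the embedding of Proposition~\ref{prop:B-inv}. For $\osp_{2m+1|2n}$ the map $(\bs y,\bm s)\mapsto(\bs y^A,\bm s^A)$ injects the population $\mathscr P_{\bm y_0,\bm s_0}$ into the $\gl_{2m|2n}$ population $\mathscr P^A_{\bm y_0^A,\bm s_0^A}$, and the operator $\mathscr R=\mathscr R^{2m+1|2n}_{\bm y,\bm s}=\mathscr R_{\bm y^A,\bm s^A}$ is the same at every point. The first step I would record is that $\osp$-fertility is implied by $\gl$-fertility: for $1\lle i\lle r-1$ the bosonic (resp.\ fermionic) reproduction equation \eqref{eq:A-bos} (resp.\ \eqref{eq:A-fer}) for $\bs y$ in direction $i$ is literally the $\gl_{2m|2n}$ equation in directions $i$ and $2r-i$, with the choice $\tl y_{2r-i}=\tl y_i$, while the bosonic equation \eqref{eq:B-bos2} in direction $r$ coincides with the central $\gl_{2m|2n}$ equation because $T_r^{\bm s,A}=(P_r^{\bm s})^2$. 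Hence if $\bs y^A$ is fertile in all $2r-1$ directions, then $\bs y$ is fertile in all $r$ of its directions; since this applies to every point of the $\osp$ population sitting inside $\mathscr P^A_{\bm y_0^A,\bm s_0^A}$, it suffices to prove that this ambient $\gl_{2m|2n}$ population is superfertile.

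The core step is then to show that the $A$-typicality hypothesis forces $\mathscr R$ to be a genuine $(2m|2n)$-rational pseudo-differential operator, i.e.\ that in its minimal fractional factorization $\mathscr R=\mathscr D_{\bar 0}\mathscr D_{\bar 1}^{-1}$ one has $\ord\mathscr D_{\bar 0}=2m$, $\ord\mathscr D_{\bar 1}=2n$, with both factors completely factorable over $\cK=\C(x)$ (equivalently $\ker\mathscr D_{\bar 0}\cap\ker\mathscr D_{\bar 1}=0$ and the orders do not drop). Granting this, the superkernel $W=\ker\mathscr D_{\bar 0}\oplus\ker\mathscr D_{\bar 1}$ is an honest $(2m|2n)$-dimensional superspace of rational functions, so by Lemma~\ref{lem:A-bij-flag-factor} the complete factorizations $\mc F(\mathscr R)$ are in bijection with the full superflag variety $\sF(W)$, and by the superfertile/typical form of Theorem~\ref{thm:A-bijection} the $\gl_{2m|2n}$ population is identified with all of $\sF(W)$. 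Superfertility is then automatic: every superflag admits a bosonic modification (moving one subspace inside the next, a $\mathbb{P}^1$ of choices) and a fermionic modification (the swap \eqref{eq:switch} interchanging two adjacent basis vectors of opposite parity), each producing another element of $\sF(W)$; feeding this back through the correspondence of the first paragraph yields the superfertility of $\mathscr P_{\bm y_0,\bm s_0}$. Throughout, the even (bosonic) directions are fertile unconditionally by \cite[Lemma 3.6]{MV04}, so the entire difficulty is concentrated in the fermionic steps, which is exactly what genuineness of $\mathscr R$ controls.

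The main obstacle is therefore the genuineness step, and this is precisely where the hypotheses $m=1$ (for $\osp_{3|2n}$, giving $2m=2$) and $n=1$ (for $\osp_{2m+1|2}$, giving $2n=2$) are used. The difficulty is that the starting parity $\bm s_+^A=(1^m,(-1)^{2n},1^m)$ is \emph{not} the distinguished parity of $\gl_{2m|2n}$, so one cannot quote the type $A$ typicality result \cite[Proposition 7.7]{HMVY19} verbatim, and invoking invariance to move to the distinguished parity would be circular (it presupposes fertility along the way). I would instead establish genuineness directly from the explicit product \eqref{eq:B-oper}, bounding the orders of $\mathscr D_{\bar 0}$ and $\mathscr D_{\bar 1}$ and the intersection of their kernels by hand. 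This is feasible exactly when one of the two homogeneous components of $W$ has dimension $2$: a two-dimensional kernel is governed by a single second-order differential operator, and $A$-typicality of the weights $\la_j^A$ rules out the cancellation that could otherwise lower the order. For larger $\min(2m,2n)$ this order count becomes genuinely harder to carry out, which is why only these two families are treated here.
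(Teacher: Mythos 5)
Your reduction in the first paragraph (``$\osp$-fertility follows from $\gl$-fertility of the symmetrized tuple'') is fine, but the core of your argument is circular, and the gap is exactly at the step you call ``automatic.'' Theorem \ref{thm:A-bijection} identifying a $\gl_{2m|2n}$ population with the full superflag variety $\sF(W)$ is available in only two situations: the HMVY19 setting of \emph{polynomial} $\gl$-weights with one typical weight, or under the \emph{assumption} of superfertility (this is precisely the content of the remark following that theorem). Neither applies here: the weights $\la_i^A$ produced by the symmetrization satisfy $\big((\la_i^A)^{\bm s^A},\ve^{\bm s^A}_j\big)=-\big((\la_i^A)^{\bm s^A},\ve^{\bm s^A}_{2r+1-j}\big)$, so half of their entries are negative and they are never polynomial weights; and invoking the superfertile form of the theorem to \emph{derive} superfertility assumes the conclusion. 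Moreover, even granting that $\mathscr R$ is a genuine $(2m|2n)$-rational pseudo-differential operator with rational superkernel $W$ (which, by the way, already follows from $A$-typicality, cf.\ Proposition 7.7 of HMVY19 as cited in Section \ref{sec B oper} --- it is not the real obstacle), the existence of bosonic/fermionic ``moves'' on superflags only shows $\sF(W)$ is closed under these moves. It does not show that the tuple attached to a neighboring superflag is a tuple of \emph{polynomials} solving the reproduction equation \eqref{eq:wr-div}; that is exactly the difference between the unconditional embedding of the population into the superflag variety (Remark \ref{rem after thm}) and the bijection of Theorem \ref{main thm}, and it is the whole content of superfertility. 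Your third paragraph also misdiagnoses where the hypotheses $m=1$ or $n=1$ enter: they are not needed to control the order of a second-order operator.

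For comparison, the paper's proof runs in the opposite direction: it \emph{constructs} fertile points rather than deducing fertility from a global bijection. First it proves Lemma \ref{lem: generic fermionic}, which shows that a fermionic reproduction in direction $i$ can be kept \emph{generic} by first moving along the one-parameter bosonic family in an adjacent even direction $i-1$ (this is where $s_{i-1}=s_i\neq s_{i+1}$, i.e.\ an even neighbor, is essential). Then, for $\osp_{2m+1|2}$, starting from parity $\bm s_+$ it performs an explicit chain: one bosonic reproduction in the middle direction, pairs of generic fermionic reproductions in directions $(m,m+2),\dots,(2,2m)$, reductions to an even type B$_m$ problem in the middle (handled by \cite{MV04}), producing a family of generic fertile tuples of dimension $m^2+1$ in each parity component. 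Since fertility is a closed condition (\cite[Lemma 3.6]{MV04}) and the population embeds into $\mathscr F^\perp(W)$, whose components have exactly dimension $m^2+1$, the closure argument forces the population to fill out $\mathscr F^\perp(W)$ and every point to be fertile. The restriction to $\osp_{3|2n}$ and $\osp_{2m+1|2}$ is used because with a single odd ``layer'' every fermionic step can be arranged to have even neighbors, so Lemma \ref{lem: generic fermionic} applies at each stage; with more odd simple roots this bookkeeping breaks down. To salvage your approach you would need an independent proof that every point of the ambient $\gl_{2m|2n}$ population with these non-polynomial symmetric weights is fertile, which is essentially the same difficulty you started with.
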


Proposition \ref{conj true} is proved in Section \ref{sec proof of prop}.

\subsection{More on superflags}
Let $W$ be a self-dual superspace of dimension $M+N$. Let $\mathscr F=\{F_1\subset \cdots \subset F_{M+N}=W\}\in\mathscr F(W)$ be a full superflag. Let $w_1,w_2,\dots,w_{M+N}$ be a basis of $W$ which generates $\mathscr F$. In other words, $w_1,\dots,w_i$ is a basis of $F_i$ for each $1\lle i\lle M+N$.

Let $\mathscr F'\in \mathscr F(W)$ be another superflag. Then there exists a unique basis $w_1',\dots,w'_{M+N}$ of $W$, $w_i'=\sum_ja_{ij}w_j$, which generates $\mathscr F'$ and such that $a_{ij}=0$ whenever there exists $k<i$ with the property $a_{kj}\neq 0$, $a_{ks}=0$, for each $j+1\lle s\lle M+N$, and such that $a_{ij}=1$ whenever $a_{ij}\neq 0$, $a_{is}=0$,  for each $j+1\lle s\lle M+N$.

Define a permutation $\sigma\in\fkS_{M+N}$ by setting $\sigma(i)$ to be the largest $j$ such that $a_{ij}\neq 0$. 

Clearly, $\sigma$ is a well-defined permutation which depends only on superflags $\mathscr F$, $\mathscr F'$, and not on the choice of the generating basis $\{w_1,\dots,w_{M+N}\}$.

Thus, a fixed superflag $\mathscr F$ defines a cell decomposition of the set of superflags $\mathscr F(W)$
$$
\mathscr F(W)=\bigsqcup_{\sigma\in\fkS_{M+N}} \mathscr F_\sigma(W),
$$
where $\mathscr F_\sigma(W)$ is the set of superflags corresponding to the permutation $\sigma$. We also have a
cell decomposition of the isotropic superflags, 
$$
\mathscr F^\perp(W)=\bigsqcup_{\sigma\in\fkS_{M+N}} \mathscr F^\perp_\sigma(W),
$$
We call these decompositions the {\it Bruhat cell decompositions}.

Note that if $\mathscr F$ is an isotropic superflag then $\mathscr F^\perp_\sigma(W)=\emptyset$ unless $\sigma(i)+\sigma(M+N+1-i)=M+N+1$ for all $1\lle i\lle M+N$. 

In the case when $W$ is odd dimensional, we will use the following decompositions,
\beq\label{eq:pm-decom}
\mathscr F^\perp(W)= \mathscr F^\perp_{+}(W) \bigsqcup \mathscr F^\perp_{-}(W), \qquad \mathscr F^\perp_{\pm }(W)=\bigsqcup_{\substack{ \sigma\in\fkS_{M+N}, \\  \mathrm{sgn}(\sigma)=\pm 1}} \mathscr F^\perp_\sigma(W).
\eeq
In this case, $\mathscr F^\perp_{\pm}(W)$ are closed in  $\mathscr F^\perp(W)$.

\subsection{Reproduction procedure and superflags} Let $\g=\osp_{2m+\iota|2n}$. Let $\bla$ be an $A$-typical sequence of dominant integral $\osp_{2m+\iota|2n}$-weights, $\bm z$ a sequence of pairwise distinct complex numbers. Let $\tlbms_0$ be an extended parity sequence in $\wt S_{m|n}^\iota$. Let $C^{\bm s_0}$ be the Cartan matrix of $\g$ associated to the parity sequence $\bm s_0$. Let $\bm T^{\tlbms_0}$ be the sequence of rational functions associated with $\bla$, $\bm z$, and $\tlbms_0$, see \eqref{eq:T-polynomials}.

Let $\bs y_0$ be a sequence of polynomials which represents a solution of the Bethe ansatz equation \eqref{eq:BAE-poly-T-nos} associated to $(C^{\bm s_0},\bs T^{\tlbms_0})$. Let $\mathscr P=\mathscr P_{\bm y_0,\tlbms_0}$ be the population of solutions of the Bethe ansatz equation originated from $(\bs y_0,\tlbms_0)$

Then $\mc R=\mathscr R^{2m+\iota|2n}_{\bm y_0,\tlbms_0}$ is a symmetric $(2m|2n+1-\iota)$-rational pseudo-differential operator. Let $W$ be the superkernel of $\mc R$. Then $W$ is a self-dual superspace of functions, see Lemma \ref{lem:super-self-dual}.

The sequence $\bs y$ determines a symmetric complete $\bs s_0^A$-factorization of $\mc R$, see \eqref{eq:B-oper} and \eqref{eq:D-oper}. By Proposition \ref{prop:bij-f-iso}, symmetric complete factorizations of $\mc R$ are in a bijection with isotropic superflags in $W$. We have $\dim W=2m+2n+1-\iota$ and we denote it by $r^{A}$. Let $\mathscr F_0\in\mathscr F(W)$ be the superflag corresponding to the symmetric complete factorization of $\mc R$ defined by $\bs y_0$.

Any other pair $(\bm y,\tlbms)\in\mathscr P$ in the population of solutions of the Bethe ansatz equation originated from $(\bs y_0,\tlbms_0)$ gives a symmetric complete $\bm s^A$-factorization of  $\mc R$ and, therefore, corresponds to an isotropic superflag of $W$ which we denote by $\mathscr F_{\bm y,\tlbms}$. We describe the change of the superflag during the reproduction procedure.

Let $\{w_1,\dots, w_{r^A}\}$ be a homogeneous basis of $W$ which generates the superflag $\mathscr F_{\bs y,\tlbms}$. Then $\{w_1,\dots, w_i\}$ is a basis of dimension $i$ subspace in superflag $\mathscr F_{\bs y,\tlbms}$. In particular, parity of  $w_i$ is $s^A_i$.

We have three principal cases. 

\medskip 

First, a bosonic reproduction procedure, cf.  \cite[Section 6.4]{MV04}. In this case, an immediate  descendant depends on a choice of integration constant $c\in\C$.

Let $i<r=m+n$ and $s_i=s_{i+1}$. Assume that we are not in type D with $\kappa(\tlbms)=-1$ and $i=r-1$. Then the superflag $\mathscr F_{\bs y^{[i]},\tlbms^{[i]}}$ corresponding to the immediate descendent in the $i$-th direction is generated by the basis
\beq\label{eq:rep-flag-1}
\{w_1,\dots, w_{i-1},cw_i+w_{i+1},w_{i},w_{i+2},\dots,w_{r^A-i-1}, cw_{r^A-i}+w_{r^A-i+1},w_{r^A-i},w_{r^A-i+2},\dots, w_{r^A}\}.
\eeq
In the case, $i=r-1$, $\iota=0$,  $s_r=1$ (type D), $\kappa(\tlbms)=-1$, the basis above generates the superflag $\mathscr F_{\bs y^{[r]},\tlbms^{[r]}}$. 

Let now $i=r$, $\iota=1$. Then the superflag $\mathscr F_{\bs y^{[r]},\tlbms^{[r]}}$ 
is generated by the basis
\beq\label{eq:rep-flag-2}
\{w_1,\dots, w_{r-1},cw_r+w_{r+1},w_{r},w_{r+2},\dots, w_{r^A}\}.
\eeq

Finally, let $i=r$, $\iota=0$, $s_r=-1$ (type C). Then the superflag $\mathscr F_{\bs y^{[r]},\tlbms^{[r]}}$ 
is generated by the basis
\beq\label{eq:rep-flag-3}
\{w_1,\dots, w_{r-1},2c^2w_r+2cw_{r+1}+w_{r+2},2cw_{r}+w_{r+1},w_{r},\dots, w_{r^A}\}.
\eeq

Second, a fermionic reproduction procedure.  Assume that we are not in type D with $\kappa(\tlbms)=-1$ and $i=r-1$. 
Let $i<r=m+n$ and $s_i=-s_{i+1}$. Then the superflag $\mathscr F_{\bs y^{[i]},\tlbms^{[i]}}$ is generated by the basis
\beq\label{eq:rep-flag-4}
\{w_1,\dots, w_{i-1},w_{i+1},w_{i},w_{i+2},\dots,w_{r^A-i-1}, w_{r^A-i+1},w_{r^A-i},w_{r^A-i+2},\dots, w_{r^A}\}.
\eeq
In the case, $i=r-1$, $\iota=0$,  $s_r=1$ (type D), $\kappa(\tlbms)=-1$, the basis above generates the superflag $\mathscr F_{\bs y^{[r]},\tlbms^{[r]}}$.

\medskip

Third, the fake reproduction procedure. We have $\iota=0$, $s_r=1$. 
Then the superflag $\mathscr F_{\bs y^{[f]},\tlbms^{[f]}}$ is generated by the basis
\beq\label{eq:rep-flag-5}
\{w_1,\dots, w_{r-1},w_{r+2},w_{r+1},w_{r},w_{r+3},\dots, w_{r^A}\}.
\eeq

\subsection{The main theorem}
Now we are ready to describe our main result.

Let $\g=\osp_{2m+\iota|2n}$. Let $\bla$ be an $A$-typical sequence of dominant integral $\osp_{2m+\iota|2n}$-weights, $\bm z$ a sequence of pairwise distinct complex numbers. Let $\tlbms_0$ be an extended parity sequence in $\wt S_{m|n}^\iota$. Let $C^{\bm s_0}$ be the Cartan matrix of $\g$ associated to the parity sequence $\bm s_0$. Let $\bm T^{\tlbms_0}$ be the sequence of rational functions associated with $\bla$, $\bm z$, and $\tlbms_0$, see \eqref{eq:T-polynomials}.

Let $\bs y_0$ be a sequence of polynomials which represents a solution of the Bethe ansatz equation \eqref{eq:BAE-poly-T-nos} associated to $(C^{\bm s_0},\bs T^{\tlbms_0})$. Let $\mathscr P=\mathscr P_{\bm y_0,\tlbms_0}$ be the population of solutions of the Bethe ansatz equation originated from $(\bs y_0,\tlbms_0)$

Let $\mc R=\mathscr R^{2m+\iota|2n}_{\bm y_0,\tlbms_0}$ denote the symmetric $(2m|2n+1-\iota)$-rational pseudo-differential operator associated to $\mathscr P$. Let $W=V\oplus U$ be the superkernel of $\mc R$ and write $\mc R=\mc D_{\bar 0}\mc D_{\bar 1}^{-1}$, where $V=\ker \mc D_{\bar 0}$ and $U=\ker\mc D_{\bar 1}$.

\begin{thm}\label{main thm} Assume $\bs y_0$ is superfertile. Then $W$ is a vector superspace of rational functions. 
We have natural bijections between the population $\mathscr P$, the variety of isotropic superflags $\mathscr F^\perp(W)$, and the set $\mc F^\perp(\mc R)$ of symmetric complete factorizations of the rational pseudo-differential operator $\mc R$. 
\end{thm}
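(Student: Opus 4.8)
The plan is to split the claimed chain of bijections into the two halves $\mathscr P\leftrightarrow\mc F^\perp(\mc R)$ and $\mc F^\perp(\mc R)\leftrightarrow\mathscr F^\perp(W)$, the latter of which is already available, and to reduce the former to the $\gl$-theory of Section \ref{sec: reproduction} through the folding embeddings. First I would record the structural facts. Since $\bla$ is $A$-typical, Propositions \ref{prop:B-inv} (for $\iota=1$) and \ref{prop:D-inv} (for $\iota=0$) guarantee that $\mc R$ is a genuine symmetric $(2m|2n+1-\iota)$-rational pseudo-differential operator; hence in its minimal factorization $\mc R=\mc D_{\bar 0}\mc D_{\bar 1}^{-1}$ both factors are completely factorable over $\cK=\C(x)$, so $V=\ker\mc D_{\bar 0}$ and $U=\ker\mc D_{\bar 1}$ consist of rational functions and $W$ is a superspace of rational functions. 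Lemma \ref{lem:super-self-dual} then makes $W$ normalized self-dual with $V=U^\perp$, so the notion of isotropic superflag is meaningful, and Proposition \ref{prop:bij-f-iso} already supplies the bijection $\varpi^\perp\colon\mc F^\perp(\mc R)\xrightarrow{\sim}\mathscr F^\perp(W)$. It therefore suffices to construct a bijection $\Theta\colon\mathscr P\to\mc F^\perp(\mc R)$.

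I would define $\Theta$ by sending $(\bm y,\tlbms)$ to the symmetric complete $\bm s^A$-factorization of $\mc R$ displayed in \eqref{eq:B-oper} (when $\iota=1$) or \eqref{eq:D-oper} (when $\iota=0$); the target is a \emph{symmetric} factorization because the folded sequence $\bm y^A$, resp.\ $\bm y^{\tlbms,A}$, is palindromic, and it lands in $\mc F^\perp(\mc R)$ since $\mc R$ is population-invariant. For injectivity, in the case $\iota=1$ Proposition \ref{prop:B-inv} embeds $\mathscr P$ into the $\gl_{2m|2n}$-population $\mathscr P^A$, and Theorem \ref{thm:A-bijection} together with the remark following it realizes $\mathscr P^A$ inside $\mc F(\mc R)$ unconditionally, so distinct population points give distinct factorizations. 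For $\iota=0$ the same reasoning separates points with a fixed value of $\kappa$ (this is the observation following Proposition \ref{prop:D-inv}); the two values of $\kappa$ are separated because, although the fake reproduction preserves $\mc R$ by Lemma \ref{lem fake-inv}, it genuinely moves the superflag according to \eqref{eq:rep-flag-5} (the rank-$r$ subspace changes), so $(\bm y,\tlbms)$ and $(\bm y^{[f]},\tlbms^{[f]})$ have distinct images. Thus $\Theta$ is injective on all of $\mathscr P$.

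The essential and hardest step is surjectivity of $\Theta$, equivalently that every isotropic superflag equals $\mathscr F_{\bm y,\tlbms}=\varpi^\perp(\Theta(\bm y,\tlbms))$ for some population point. Here I would argue geometrically, using the base flag $\mathscr F_0=\mathscr F_{\bm y_0,\tlbms_0}$ and the induced Bruhat cell decomposition $\mathscr F^\perp(W)=\bigsqcup_\sigma\mathscr F^\perp_\sigma(W)$, where the surviving $\sigma$ satisfy $\sigma(i)+\sigma(r^A+1-i)=r^A+1$ with $r^A=\dim W$, and where $\mathscr F_0$ itself sits in the cell $\sigma=\id$. The explicit superflag transitions \eqref{eq:rep-flag-1}--\eqref{eq:rep-flag-5} show that the symmetrized fermionic and fake moves realize the simple reflections of the relevant type B/C/D Weyl group acting on $\mathscr F^\perp(W)$, while the bosonic moves sweep out the one-parameter families filling the interiors of the cells; superfertility is exactly the hypothesis ensuring that at every stage the relevant reproduction remains fertile, so no cell transition is obstructed. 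I would then induct on $\ell(\sigma)$: a reduced expression for $\sigma$ prescribes a sequence of symmetric elementary moves starting from $\bm y_0$ whose flags climb from the cell $\id$ to the cell $\sigma$, and the bosonic parameters are chosen so as to land on a prescribed $\mathscr F$ within its cell. Because each move is reversible (fermionic and fake moves are involutions, bosonic moves run over a $\mathbb{P}^1$), this exhibits every isotropic $\mathscr F$ in the image of $\Theta$.

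Combining injectivity and surjectivity gives $\mathscr P\cong\mc F^\perp(\mc R)$, and composition with $\varpi^\perp$ yields $\mathscr P\cong\mathscr F^\perp(W)$, completing all three bijections. The delicate point I expect to absorb most of the work is the verification, cell by cell, that the symmetrized moves stay within and exhaust the \emph{isotropic} flag variety rather than the full flag variety---that the constraint $\sigma(i)+\sigma(r^A+1-i)=r^A+1$ is both preserved and attained---and that in type D the inclusion of the fake reproduction is precisely what converts the a priori two-to-one feature noted after Proposition \ref{prop:D-inv} into a genuine bijection. The even cases in \cite{MV04} and the $\gl_{m|n}$ case in \cite{HMVY19}, reached here through Lemmas \ref{lem:AB} and \ref{lem:AC}, furnish the template for this verification.
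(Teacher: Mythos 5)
Your overall architecture matches the paper's (split off $\mc F^\perp(\mc R)\leftrightarrow\mathscr F^\perp(W)$ via Proposition \ref{prop:bij-f-iso}, map $\mathscr P\to\mc F^\perp(\mc R)$ via \eqref{eq:B-oper}, \eqref{eq:D-oper}, then prove injectivity and surjectivity), but there is a genuine gap at your very first step: the rationality of $W$. You claim that since $\mc R$ is a symmetric $(2m|2n+1-\iota)$-rational pseudo-differential operator, the factors $\mc D_{\bar 0},\mc D_{\bar 1}$ of its minimal fractional factorization are completely factorable over $\cK$, \emph{so} their kernels consist of rational functions. That implication is false: complete factorability means $\mc D=(\pa-a_1)\cdots(\pa-a_M)$ with $a_i\in\cK$, which does not force $\ker\mc D\subset\cK$ (the operator $\pa-1$ is completely factorable over $\cK$, yet its kernel is $\C e^x$; the paper records only the converse implication, that a rational kernel of full dimension implies complete factorability). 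Rationality of $W$ is exactly where the paper must work, and it is a second place where superfertility is essential: one takes the symmetric complete factorization with parity $(1,\dots,1,-1,\dots,-1,1,\dots,1)$, writes $\mc R=\bar{\mc D}_1^*\bar{\mc D}_2^{-1}\bar{\mc D}_1$, observes that the kernel of the last linear factor is an explicitly rational function which stays rational under reproduction (by superfertility), uses density of isotropic vectors to conclude that the kernels of $\bar{\mc D}_1$ and $\bar{\mc D}_2$ are rational, and then transfers this to $U=\ker\mc D_{\bar 1}$ via Lemma \ref{lem:exchange-pm}, and symmetrically to $V$. Your proposal omits this argument entirely and uses superfertility only for surjectivity.

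A second, smaller gap is injectivity across the two $\kappa$-sectors when $\iota=0$. Showing that $(\bm y,\tlbms)$ and $(\bm y^{[f]},\tlbms^{[f]})$ have distinct images only separates a point from its own fake partner; it does not exclude that some $\kappa=1$ point and some unrelated $\kappa=-1$ point map to the same flag. The paper closes this with the sign decomposition \eqref{eq:pm-decom}: the bosonic and fermionic moves \eqref{eq:rep-flag-1}, \eqref{eq:rep-flag-3}, \eqref{eq:rep-flag-4} preserve $\mathscr F^\perp_\pm(W)$, while the fake move \eqref{eq:rep-flag-5} interchanges them, so the images of the $\kappa=1$ and $\kappa=-1$ parts of $\mathscr P$ lie in the disjoint sets $\mathscr F^\perp_+(W)$ and $\mathscr F^\perp_-(W)$, and injectivity on each fixed-$\kappa$ part suffices. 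Your surjectivity argument via Bruhat cells and reduced words is a more detailed elaboration of the paper's one-line argument (every isotropic superflag is reached from any single one by the moves \eqref{eq:rep-flag-1}--\eqref{eq:rep-flag-5}, each realized by an actual reproduction thanks to superfertility) and is fine in spirit, but the two gaps above must be filled for the proof to stand.
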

\begin{proof}
By Proposition \ref{prop:bij-f-iso}, we have a bijection between the variety of isotropic superflags $\mathscr F^\perp(W)$ and the set $\mc F^\perp(\mc R)$ of symmetric complete factorizations of $\mc R$.

By \eqref{eq:B-oper} and \eqref{eq:D-oper}, we have a map from the population $\mathscr P$ to the set of symmetric complete factorizations $\mc F^\perp(\mc R)$. In the case $\iota=1$, the map is clearly injective. 

We now show the map is injective for $\iota=0$ as well.
Combining Proposition \ref{prop:bij-f-iso} and \eqref{eq:D-oper}, we find the isotropic superflag $\mathscr F_0$ corresponding to $(\bs y_0,\tlbms_0)$.  Then we have the decomposition \eqref{eq:pm-decom} and $\mathscr F_0\in\mathscr F_+^\perp(W)$. Note that the fermionic and bosonic reproduction procedures preserve the decomposition, see \eqref{eq:rep-flag-1}, \eqref{eq:rep-flag-3}, \eqref{eq:rep-flag-4}, while the fake reproduction maps  superflags in $\mathscr F_\pm^\perp(W)$ to superflags in $\mathscr F_\mp^\perp(W)$, see \eqref{eq:rep-flag-5}. 
After restriction to the points of the population $\mathscr P$ with $\kappa=1$, the map is clearly injective. It follows that the map is injective.

The map from the population $\mathscr P$ to the set of symmetric complete factorizations $\mc F^\perp(\mc R)$ is surjective, since all isotropic superflags are obtained from any single one by changes \eqref{eq:rep-flag-1}-\eqref{eq:rep-flag-5} corresponding to the reproduction procedure.

Finally, we show that $W$ consists of rational functions. We consider a symmetric complete factorization of $\mc R$ corresponding to the parity  $(1,\dots,1,-1,\dots,-1,1,\dots,1)$. Then $\mc R=\bar{\mc D}_1^* \bar{\mc D}_2^{-1} \bar {\mc D}_1$. Note that $\bar {\mc D}_1$ and $\bar {\mc D}_2$ are not unique. Clearly, kernels of $\bar{\mc D}_1$ and $\bar {\mc D}_2$ consist of rational functions. Indeed, for example, for $\bar {\mc D}_2$ the kernel of last linear factor is explicitly a rational function and due to superfertility, it stays a rational function under reproduction procedure. The choice of the last linear factor corresponds to a choice of isotropic vector in $V$ and the isotropic vectors are dense in $V$. Then the kernel $U$ of $\mc D_{\bar 1}$ consists of rational functions by Lemma \ref{lem:exchange-pm}. 

Similarly, starting from a parity $(-1,\dots,-1,1,\dots,1,-1,\dots,-1)$, we show that 
the kernel $V$ of $\mc D_{\bar 0}$ also consists of rational functions.
\end{proof}

\begin{rem}\label{rem after thm}
If the assumption of superfertility is dropped, the same argument as in the proof of Theorem \ref{main thm} shows that the population $\mathscr P$ can be embedded into the set of isotropic superflags $\mathscr F^\perp(W)$.\qed
\end{rem}

The population $\mathscr P$, the set of symmetric complete factorizations $\mc F^\perp(\mc R)$, and the variety of isotropic superflags $\mathscr F^\perp(W)$ are naturally direct unions:
\[
\mathscr P=\bigsqcup_{\bs s}\mathscr P^{\bs s},\qquad \mc F^\perp(\mc R)=\bigsqcup \mc F_{\bs s}^\perp(\mc R),\qquad 
\mathscr F^\perp(W)=\bigsqcup\mathscr F^\perp_{\bs s}(W).
\]
Here $\mathscr P^{\bs s}$ is the set of all pairs $(\bs y,\tlbms)\in\mathscr P$ with parity $\bs s$,   $\mc F_{\bs s}^\perp(\mc R)$ is the set of all symmetric complete $\bm s^A$-factorizations of $\mc R$, and $\mathscr F^\perp_{\bs s}(W)$ is the set of all isotropic superflags in $\mathscr F_{\bs s^A}(W)$. The parity for a superflag is given by the parity of the homogeneous basis generating the superflag.

We note that the bijections in Theorem \ref{main thm} are clearly compatible with the parity decompositions.

\medskip

Recall that in type D, each solution of the Bethe ansatz equation corresponds to two points in the population $\mathscr P$, one with $\kappa=1$ and one with $\kappa=-1$. These two points are connected via the fake reproduction procedure. 

\subsection{The even cases}\label{sec:even}
Theorem \ref{main thm} in the special cases   $\osp_{2m+1|0}=\mathfrak{so}_{2m+1}$, $\osp_{0|2n}=\mathfrak{sp}_{2n}$  recovers the known theorems for even cases of types B$_m$ and C$_n$ of \cite{MV04}. 

The Lie algebra $\g=\osp_{2m|0}$ is even and is identified with the simple Lie algebra of type D$_m$. In this case, the rational pseudo-differential operator has the form $\mc R=\mc D\partial^{-1}\mc D^*$, where $\mc D$ is a differential operator of order $m$. Such an operator is natural to expect, see \cite[Proposition 8.5]{DS}, \cite[Corollary 3.8]{MM}.

Note that in this case we have only bosonic and fake reproduction procedures.

We also note that in this case we have $W=V\oplus U$ with $\dim U=1$. Moreover, due to Lemma \ref{lem:super-self-dual}, the space $U$ is spanned by $\Wr(V)$, and, in particular, it is completely determined by $V$. Recall that $V$ is a space of dimension $2m$ with a natural symmetric bilinear form. Then isotropic superflags in $W$ are in bijection with isotropic flags in $V$.

In particular, each solution of the Bethe ansatz equation corresponds to two superflags. This is in accordance with the known general fact that a population for the even case is always isomorphic to the flag variety $G^\vee/B^\vee$ of the Langlands dual Lie group $G^\vee$, see \cite{F,MV05}.

\subsection{Proof of Proposition \ref{conj true}}\label{sec proof of prop}
We prepare a lemma saying that under some assumptions the fermionic reproduction procedure in the $i$-th direction can be done in such a way that the immediate descendant in the $i$-th direction is generic, provided that at least one of the simple roots $\alpha_{i\pm 1}^{\bs s}$ is even. 

Let $\bm s\in S_{m|n}$ be such that $s_{i-1}=s_{i}\neq s_{i+1}$.

Let $\bla=(\la_1,\dots,\la_p)$ be a sequence of dominant integral typical $\glMN$-weights. Let $\bm T^{\bm s}$ be the sequence of rational functions associated to $\bla$, $\bm z$, and $\bm s$. Since $\la_j$ are typical, we always have $\pi_j^{\bs s}=\pi(T_{j}^{\bm s})=\prod_{k=1}^p(x-z_k)$, whenever the simple root $\alpha_j^{\bm s}$ is odd. Denote $\prod_{k=1}^p(x-z_k)$ by $\xi(x)$.

Let $\bm y=(y_1,\dots,y_{r-1})$ be a sequence of polynomials representing a solution of the $\glMN$ Bethe ansatz equation associated to $(C^{\bm s},\bm T^{\bm s})$, where $C^{\bm s}$ is the Cartan matrix of $\glMN$ associated to $\bm s$. 

By Theorem \ref{thm:A-rep} there exists a polynomial $\tl y_i$ such that $\Wr(y_{i-1},\tl y_{i-1})=T_{i-1}^{\bm s}y_{i-2}y_{i}$. Since $\bm y$ is generic with respect to $\bla$, $\bm s$, and $\bm z$, we have  $\tl y_{i-1}$ is relatively prime to $y_{i-1}(x)$. In particular, we can choose $\tl y_{i-1}$ such that it has no multiple roots, no common roots with $y_{i+1}$, and such that $\tl y_{i-1}(z_j)\neq 0$ for all $j=1,\dots,p$.

For $c\in\C^\times$, denote $$\bm y_c=(y_1,\dots,y_{i-1}+c\tl y_{i-1},\dots, y_{r-1}),$$
the immediate descendants of $\bm y$ in the $(i-1)$-st direction depending on a parameter $c$.

\begin{lem}\label{lem: generic fermionic}
Assume $y_{i-2},y_{i+1}$ in $\bm y$ have no multiple roots. Assume $y_{i-1},y_{i+1}$ in $\bm y$  do not vanish at $z_j$, for each $j=1,\dots,p$. Then for generic $c\in \C$, the result of fermionic reproduction procedure $(\bm y_c)^{[i]}$ is generic with respect to $\bla$, $\bm s^{[i]}$, and $\bm z$, and all polynomials in $(\bm y_c)^{[i]}$ have no multiple roots. 
\end{lem}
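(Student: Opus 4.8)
The plan is to carry out the two reproductions explicitly, observe that the reproduced polynomial depends \emph{affinely} on $c$, and then extract every genericity condition from two Wronskian identities. Throughout set $g_c=y_{i-1}+c\tl y_{i-1}$, so that $\bm y_c=(\dots,y_{i-2},g_c,y_i,y_{i+1},\dots)$, and write $Q=\xi\cdot\ln'(P_i^{\bm s}P_{i+1}^{\bm s})$, which is a polynomial since $P_i^{\bm s}P_{i+1}^{\bm s}$ is a product of powers of the $(x-z_k)$ whose poles are cleared by $\xi=\prod_k(x-z_k)$. As node $i$ is odd in $\bm s$ and the weights are typical we have $\pi_i^{\bm s}=\xi$, so the fermionic step \eqref{eq:A-fer} in the $i$-th direction applied to $\bm y_c$ becomes
\[
y_i\,\tl y_i^{(c)}=Q\,g_c\,y_{i+1}-\xi\,\Wr(g_c,y_{i+1}).
\]
Both $g_cy_{i+1}$ and $\Wr(g_c,y_{i+1})$ are affine in $c$, so $y_i\tl y_i^{(c)}=A+cB$ with $A,B$ independent of $c$; comparing two generic values of $c$ (for which $\bm y_c$ is a genuine solution by Theorem \ref{thm:A-rep}(2), hence fertile in the $i$-th direction) gives $y_i\mid A$ and $y_i\mid B$. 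Thus $\tl y_i^{(c)}=\alpha+c\beta$ is a \emph{linear pencil} of polynomials, with $\alpha=A/y_i$ and $\beta=B/y_i$.

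First I would dispose of the easy requirements. For generic $c$, the polynomial $g_c$ is squarefree, coprime to $y_{i-2}$ and to $y_{i+1}$, and nonvanishing at the $z_k$: each is Zariski-open in $c$ and holds at $c=0$ (where $g_0=y_{i-1}$ is squarefree, coprime to $y_{i-2}$, and nonvanishing at the $z_k$ by the genericity of $\bm y$ and the hypotheses), while coprimality with $y_{i+1}$ is seen at ``$c=\infty$'' using that $\tl y_{i-1}$ was chosen coprime to $y_{i+1}$ — if it failed for every $c$, a fixed root of $y_{i+1}$ would be a common root of $y_{i-1}$ and $\tl y_{i-1}$. Evaluating the displayed identity at a root of $y_{i+1}$, at a root of $g_c$, and at a point $z_k$, and invoking that $y_i$ is coprime to $y_{i+1}$, that $y_i(z_k)\neq 0$ (genericity of $\bm y$) and that $Q(z_k)\neq 0$ (which is precisely the typicality statement $\pi_i^{\bm s}=\xi$), then yields for generic $c$ that $\tl y_i^{(c)}$ is coprime to $g_c$ and to $y_{i+1}$ and avoids the $z_k$. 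Since nodes $i-1,i$ are odd in $\bm s^{[i]}$ and impose no squarefreeness condition, this already gives genericity of $(\bm y_c)^{[i]}$ with respect to $\bm s^{[i]}$.

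The heart of the proof is that $\tl y_i^{(c)}$ has no multiple roots for generic $c$. Writing $\gcd(\alpha,\beta)=g$, $\alpha=g\alpha_1$, $\beta=g\beta_1$, the repeated roots of a generic member of the pencil $g(\alpha_1+c\beta_1)$ are exactly those of $g$: generic $c$ avoids the finitely many critical values of $\alpha_1/\beta_1$ and the finitely many coincidences between $g$ and $\alpha_1+c\beta_1$. So it suffices to show $\gcd(\alpha,\beta)$ is squarefree. For this I would combine the $c=0$ and $c=\infty$ instances of the master relation with the bosonic relation $\Wr(y_{i-1},\tl y_{i-1})=T_{i-1}^{\bm s}y_{i-2}y_i$ from \eqref{eq:A-bos} to obtain the second identity
\[
\tl y_{i-1}\,\alpha-y_{i-1}\,\beta=-\xi\,y_{i+1}\,T_{i-1}^{\bm s}\,y_{i-2}.
\]
If $\rho$ were a double root of $\gcd(\alpha,\beta)$, then $\alpha(\rho)=\alpha'(\rho)=\beta(\rho)=\beta'(\rho)=0$, and evaluating this identity together with its derivative at $\rho$ forces $\rho$ to be a multiple root of $\xi\,y_{i+1}\,T_{i-1}^{\bm s}\,y_{i-2}$.

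Finally I would rule out every such $\rho$, and this is the step I expect to be the main obstacle. Since $T_{i-1}^{\bm s}$ is supported on the $z_k$ and $\xi,y_{i+1},y_{i-2}$ are squarefree (the last two by hypothesis), a multiple root of the product must either be some $z_k$ or a common root of $y_{i-2}$ and $y_{i+1}$. If $\rho=z_k$, the master relation at $c=0$ gives $A(z_k)=Q(z_k)y_{i-1}(z_k)y_{i+1}(z_k)\neq 0$ by $Q(z_k)\neq 0$ and the nonvanishing hypotheses, so $\alpha(z_k)\neq 0$ and $z_k$ is not even a common root. If instead $y_{i-2}(\rho)=y_{i+1}(\rho)=0$ with $\rho\notin\{z_k\}$, then $A(\rho)=-\xi(\rho)y_{i-1}(\rho)y_{i+1}'(\rho)$, and $\alpha(\rho)=0$ (using $y_i(\rho)\neq 0$, valid because $y_i$ and $y_{i+1}$ are coprime) together with $\xi(\rho)\neq 0$, $y_{i+1}'(\rho)\neq 0$ forces $y_{i-1}(\rho)=0$, contradicting the coprimality of $y_{i-1}$ and $y_{i-2}$ that comes from the genericity of $\bm y$. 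Hence $\gcd(\alpha,\beta)$ is squarefree; the unchanged entries retain their simple roots ($y_{i-2}$ and $y_{i+1}$ by hypothesis), completing the proof. The delicate point is exactly the analysis at the evaluation points $z_k$, where $\xi$ and $T_{i-1}^{\bm s}$ overlap, and it is there that typicality ($Q(z_k)\neq 0$) enters decisively; the linear-pencil reduction and the two Wronskian identities are the mechanism that makes the whole analysis tractable.
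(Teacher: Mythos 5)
Your proof is correct and follows essentially the same route as the paper's: your pencil $\alpha+c\beta$ is exactly the paper's $\tl y_i+c\bar y_i$, your ``second identity'' is the paper's key relation \eqref{eq:last 3} (obtained there by combining \eqref{eq:last 1}, \eqref{eq:last 2} with the bosonic Wronskian equation), and the squarefreeness-of-$\gcd$ pencil argument plus evaluations at the $z_k$, at roots of $y_{i+1}$, and at roots of $g_c$ matches the paper's analysis. The only cosmetic difference is that the paper first shows $\bar y_i$ is coprime to $y_{i+1}$ (via the choice of $\tl y_{i-1}$), whereas you exclude roots of $y_{i+1}$ inside the final case analysis using the coprimality of $y_{i-1}$ and $y_{i-2}$ — an equivalent bookkeeping choice.
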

\begin{proof}
It follows from Theorem \ref{thm:A-rep} there exists a polynomial $\tl y_{i}$ such that
\beq\label{eq:last 1}
y_{i}\tl y_{i}=\ln'\Big(\frac{T_{i}^{\bm s}y_{i-1}}{y_{i+1}}\Big)\xi(x)y_{i-1}y_{i+1}.
\eeq
In particular, we conclude $\tilde y_i(z_j)\neq 0$ since the right hand side does not vanish at $z_j$. 

Similarly, there exists a polynomial $\bar y_{i}$ such that
\beq\label{eq:last 2}
y_{i}\bar y_{i}=\ln'\Big(\frac{T_{i}^{\bm s}\tl y_{i-1}}{y_{i+1}}\Big)\xi(x)\tl y_{i-1}y_{i+1}.
\eeq
We observe that $\bar y_i(z_j)\neq 0$ and $\bar y_i$ has no common roots with $y_{i+1}$.

Multiplying \eqref{eq:last 1} by $\tl y_{i-1}$,   \eqref{eq:last 2} by $y_{i-1}$, subtracting the results and dividing by $y_i$, we obtain
\beq\label{eq:last 3}
\tl y_{i}\tl y_{i-1}-\bar y_{i}y_{i-1}=\xi(x)T_{i-1}^{\bm s}y_{i-2}y_{i+1}.
\eeq
Then $\tl y_{i}$ and $\bar y_{i}$ have no common multiple roots, since all common roots are also roots of $y_{i-2}$ which has no multiple roots.
Therefore, for generic $c$, $\tl y_{i}+c\bar y_{i}$ has no multiple roots and is relatively prime to $y_{i+1}$. 

Clearly, we have 
\[
(\bm y_c)^{[i]}=(y_1,\dots,y_{i-2},y_{i-1}+c\tl y_{i-1},\tl y_{i}+c\bar y_{i},y_{i+1},\dots,y_{r-1}).
\]
Since $y_{i-1}$ and $\tl y_{i-1}$ are relatively prime, for generic $c$,
$y_{i-1}+c\tl y_{i-1}$ is relatively prime to $y_{i+1}$ and $y_{i-2}$ and has no multiple roots.

Finally, we see that for generic $c$, $y_{i-1}+c\tl y_{i-1}$ is also relatively prime to $\tl y_{i}+c\bar y_{i}$ using the equation
\[
y_{i}(\tl y_{i}+c\bar y_{i})=\ln'\Big(\frac{T_{i}^{\bm s}(y_{i-1}+c\tl y_{i-1})}{y_{i+1}}\Big)\xi(x)(y_{i-1}+c\tl y_{i-1})y_{i+1},
\]
which is the sum of \eqref{eq:last 1}, \eqref{eq:last 2}.
\end{proof}
In Lemma \ref{lem: generic fermionic} we made assumptions on the roots of $y_{i-2},y_{i-1},y_{i+1}$. The assumption on $y_{i-1}$ in fact can be dropped by choosing a different member of the $\bs y_c$ family. The same is true for $y_{i-2}$ and $y_{i+1}$ if the corresponding simple roots are even. If one of the simple roots is odd then the corresponding polynomial does not vanish at $z_j$ automatically since all weights are typical. Thus the only essential assumption is that polynomials $y_{i-2},y_{i+1}$ corresponding to odd simple roots have no multiple roots.

\begin{proof}[Proof of Proposition \ref{conj true}] We consider the case $\osp_{2m+1|2}$. 
If $\bs y$ is superfertile, then the population has $m+1$ components corresponding to parity sequences, each of dimension $m^2+1$. Since, fertility is a closed condition, see \cite[Lemma 3.6]{MV04}, it is enough to find a family of fertile tuples in the population originated at $\bs y$ of dimension $m^2+1$ in each component. Indeed, the population is embedded into the variety of isotropic superflags $\mathscr F^\perp(W)$, see Remark \ref{rem after thm} and in this case, it has full dimension in each irreducible component of $\mathscr F^\perp(W)$. Therefore it coincides with $\mathscr F^\perp(W)$. It follows that the population is closed under all reproduction  procedures, and each point is fertile.

We start with the parity sequence $\bm s_+$. Then $\bs y^A=(y_1,\dots,y_{m},y_{m+1},y_{m},\dots,y_{1})$ is a solution of the $\gl_{2m|2}$ Bethe ansatz equations of parity
$(1,\dotsm,1,-1,-1,1,\dots,1)$. First, we do the bosonic reproduction procedure in the direction $m+1$ which gives us a one-parameter family of generic sequences $\bs y^{[m+1]}$ which are fertile by Theorem \ref{thm:A-rep}. Then we use Lemma \ref{lem: generic fermionic} to perform a sequence of pairs of fermionic reproduction procedures to $(\bs y^{[m+1]})^A$ in the directions $(m,m+2), (m-1,m+3),\dots,(2,2m)$ keeping the sequence generic and symmetric. Then we do reproduction procedures in directions $1$ and $2m+1$. Let us call the result $\tilde{\bs y}^A$. The sequence $\tilde {\bs y}^A=(\tl y_i^A)_{1\lle i\lle 2m+1}$ depends on one parameter and corresponds to parity $(-1,1,\dots,1,-1)$. Then  we have $\tilde y_{i}^A=\tilde y_{2m+2-i}^A$ and
$(\tilde y_2^A,\dots, \tilde y_{m+1}^A)$ is a solution of the Bethe ansatz equation of type B$_m$, cf. Lemma \ref{lem:AB}. Performing the bosonic reproduction procedures we obtain a family of generic sequences of dimension $m^2$, see \cite{MV04}.

Finally, we use Lemma \ref{lem: generic fermionic}, to do fermionic reproduction procedures in pairs of directions $(1,2m+1)$, $(2,2m)$, $\dots$, $(m,m+2)$ to obtain a family of generic sequences in all parities.

The case of $\osp_{3|2n}$ is similar.
\end{proof}

\end{document}